\documentclass[12pt]{article}
\usepackage[margin=1in]{geometry}
\usepackage{amsmath,amssymb,amsthm,mathtools}
\usepackage[dvipsnames]{xcolor}
\usepackage{enumitem}
\usepackage{microtype}
\usepackage{bm}
\usepackage{mleftright}
\usepackage{booktabs,tabularx,longtable,array}
\usepackage{placeins}
\usepackage{thmtools}
\usepackage{thm-restate}
\usepackage{hyperref}
\newtheorem{theorem}{Theorem}[section]
\newtheorem{proposition}[theorem]{Proposition}
\newtheorem{lemma}[theorem]{Lemma}
\newtheorem{corollary}[theorem]{Corollary}

\theoremstyle{definition}
\newtheorem{definition}[theorem]{Definition}
\newtheorem{remark}[theorem]{Remark}

\newcommand{\BQP}{\mathsf{BQP}}
\newcommand{\NP}{\mathsf{NP}}
\newcommand{\PP}{\mathsf{PP}}

\newcommand{\poly}{\operatorname{poly}}

\newcommand{\ket}[1]{\lvert #1\rangle}
\newcommand{\bra}[1]{\langle #1\rvert}
\newcommand{\braket}[2]{\langle #1 | #2\rangle}

\providecommand{\cmcQubits}{n}
\providecommand{\cmcDimension}{D}
\providecommand{\cmcKeyLength}{L}
\providecommand{\cmcSamples}{m}
\providecommand{\cmcErrorExponent}{s}

\providecommand{\cmcField}{\mathbb{F}_2}
\providecommand{\cmcComplex}{\mathbb{C}}
\providecommand{\cmcBitStrings}[1]{\{0,1\}^{#1}}
\providecommand{\cmcUnary}[1]{1^{#1}}
\providecommand{\cmcUnitaryGroup}[1]{\mathrm{U}(#1)}
\providecommand{\cmcHaar}{\mathrm{Haar}}
\providecommand{\cmcProbability}{\Pr}
\providecommand{\cmcExpectation}{\mathbb{E}}
\providecommand{\cmcDiag}{\operatorname{diag}}

\providecommand{\cmcKeySpace}{\mathcal{K}_{\cmcQubits}}
\providecommand{\cmcAdversary}{\mathcal{A}}
\providecommand{\cmcUnitary}{U}
\providecommand{\cmcClifford}{C}
\providecommand{\cmcMonomial}{M}
\providecommand{\cmcPermutation}{\pi}
\providecommand{\cmcPhase}{\omega}
\providecommand{\cmcHadamard}{H}
\providecommand{\cmcPhaseGate}{S}
\providecommand{\cmcCNOT}{\mathrm{CNOT}}
\providecommand{\cmcProjector}{\Pi}
\providecommand{\cmcTranspose}{\mathsf{T}}
\providecommand{\cmcAdjoint}{\dagger}

\providecommand{\cmcKet}[1]{\lvert #1\rangle}
\providecommand{\cmcBra}[1]{\langle #1\rvert}
\providecommand{\cmcBellLabel}[1]{\beta_{#1}}
\providecommand{\cmcAbs}[1]{\left\lvert #1\right\rvert}
\providecommand{\cmcNorm}[1]{\left\lVert #1\right\rVert}
\providecommand{\cmcVectorNorm}[1]{\cmcNorm{#1}_2}

\providecommand{\cmcLabelMap}[1]{T_{#1}}
\providecommand{\cmcCorrected}[2]{#1^{(#2)}}
\providecommand{\cmcAcceptance}[1]{p_{#1}}

\providecommand{\cmcPhi}{\lvert\Phi_{\cmcDimension}\rangle}
\providecommand{\cmcPauli}[1]{P_{#1}}
\providecommand{\cmcTransposePauli}[1]{Q_{#1}}
\providecommand{\cmcLinearMap}[1]{F_{#1}}
\providecommand{\cmcOffset}[1]{t_{#1}}
\providecommand{\cmcRange}[1]{R_{#1}}
\providecommand{\cmcSwap}{\mathsf{Swap}}
\providecommand{\cmcPsi}[1]{\cmcKet{\psi^{-}_{#1}}}

\providecommand{\cmcPauliX}{X}
\providecommand{\cmcPauliZ}{Z}
\providecommand{\cmcPauliGroup}{\mathcal{P}_{\cmcQubits}}
\providecommand{\cmcParity}[1]{q(#1)}

\providecommand{\cmcProbeLabel}{v_\star}
\providecommand{\cmcProbe}{\cmcKet{\cmcBellLabel{\cmcProbeLabel}}}
\providecommand{\cmcRising}[2]{\left(#1\right)^{\overline{#2}}}

\providecommand{\cmcDisp}[1]{\operatorname{disp}(#1)}
\providecommand{\cmcSector}[1]{\mathcal{H}_{#1}}

\hypersetup{colorlinks=true,linkcolor=MidnightBlue,citecolor=MidnightBlue,
  urlcolor=MidnightBlue,
  pdftitle={How Not to Build Cryptography in Microcrypt},
  pdfauthor={Aditya Gulati and Dakshita Khurana and Kabir Tomer}}

\title{\bfseries How Not to Build Microcrypt}

\date{}

\author{
Aditya Gulati\\ \normalsize UCSB \\ \normalsize \texttt{adityagulati@ucsb.edu} \and Dakshita Khurana\\ \normalsize UIUC and NTT Research \\ \normalsize \texttt{dakshita@illinois.edu}
\and
Kabir Tomer\\ \normalsize UIUC \\ \normalsize \texttt{kabirtomer@gmail.com}
}

\begin{document}
\maketitle

\begin{abstract}
A key challenge in quantum cryptography is to build quantum one-wayness and pseudorandomness without the use of (quantum computable) one-way functions. So far, this has turned out to be a difficult task, with only a few proposed candidates that are not directly built from one-way functions. 
Even within these few proposed candidates, we have lacked the techniques to analyze when proposed constructions may inadvertently yield one-way functions.

In this work, we introduce efficient \(\NP\)-aided shadow tomography of quantum states. 
We prove that collections of {\em computable} pure states can be learned via efficient \(\NP\)-aided shadow tomography, where we say that a state is computable if the amplitude and phase on any computational basis term can be classically efficiently computed given the description of an efficient preparation circuit for the state. 
We also give new algorithms for \(\NP\)-aided learning of unitaries given polynomially many queries to the unitary.

By building on this, we show that many existing architectures for PRS and PRU, including some that were explicitly introduced for the purposes of avoiding one-way functions (e.g., Hamiltonian Phase States, Bostanci et. al., TQC 2025), actually do imply the existence of one-way functions or imply \(\NP\) hardness. We hope that these no-go results will inform future investigations into building PRS and PRUs from assumptions that are plausibly outside the complexity class \(\NP\).

\end{abstract}

\newpage
\tableofcontents

\newpage
\section{Introduction}

Quantum cryptography has experienced a renaissance in recent years, fueled in part by the introduction of new computational primitives like pseudorandom states (PRS)~\cite{JLS}, pseudorandom unitaries (PRU)~\cite{JLS} and one-way state generators (OWSG)~\cite{MorimaeYamakawa} which underlie constructions of quantum money, commitments and encryption.

A central appeal of these objects is that unlike classical cryptography, they may exist even if (post-quantum) one-way functions do not exist~\cite{Kretschmer,KQST,KQT}.
An important design challenge is therefore to build quantum pseudorandom primitives from computational hardness assumptions that are plausibly \emph{weaker than} classical one-way functions: for example, problems that can remain unbroken even in the presence of an $\NP$ oracle (also known as Microcrypt). However, identifying such assumptions has proved difficult. 

In the several years since the inception of Microcrypt, only a handful of candidate assumptions have been proposed. Security for these primitives against NP attacks is poorly understood:  a prominent candidate pseudorandom state family, \emph{Hamiltonian Phase States}~\cite{HPS}, which were conjectured to rely on assumptions weaker than one-way functions. 
We show in this work that HPS pseudorandomness {\em does} imply a classical one-way function,
and therefore cannot realize cryptography in Microcrypt. 

Thus it is fair to say that existing approaches to building pseudorandom and one-way states have been ad-hoc, and there has been little systematic cryptanalysis of these candidates against \(\NP\)-aided adversaries. This has made it hard to justify or even {\em understand} whether the problem implies a one-way function. 

Our works aims to develop guardrails to help us better understand how to build cryptographic primitives in Microcrypt. 
We ask what \textit{natural} recipes {\em do not work}, that is, what kinds of constructions end up inadvertently implying one-way functions.\\

\noindent{\bf I. Using \(\NP\) Oracles to Invert One-way State Generators.}
Let us begin with the example of one-way state generators (OWSG), which are a keyed family of states where an efficient sampler outputs a state $\ket{\psi_k}$ given the key $k$. The security guarantee is that given $\ket{\psi_k}$ for random $k$, it is hard for quantum polynomial time machines to output $k'$ for which $\braket{\psi_{k'}}{\psi_k}$ is noticeable.

The promise of microcrypt is security even when \(\NP\) is easy, this means any proposed OWSG constructions in Microcrypt should remain secure even against adversaries that have access to \(\NP\) oracles.

Now suppose our OWSG adversary additionally has (quantum-query) access to a \emph{classical} $\NP$ oracle. Importantly, an $\NP$ oracle takes a classical description of an instance;  it \emph{cannot} meaningfully parse a general unknown quantum state.
Our approach to using this oracle is to try and turn the quantum challenge into a classical search problem. When can this succeed, i.e.,
\begin{center}
{\em When can a $\BQP^\NP$ algorithm invert a one-way state generator?}
\end{center}

This question is closely related to classical shadow tomography~\cite{Aaronsontomography,HKP20}. It is known that {\em efficient} measurements on many copies of a pure state can produce a compact classical record called a shadow. This shadow contains enough information to (inefficiently) estimate the challenge state's overlap with each candidate state in the family.

The remaining challenge is to use this information to recover a key for a state close to the challenge state. 
Our hope is that searching for such a key, given appropriately and efficiently generated shadows, becomes an \(\NP\) search task. 
We denote the process of 
recovering a key enabling efficient synthesis of a state close to the challenge as {\em inversion}, and ask:
\begin{center}
{\em Which families of states admit efficient shadows that can be inverted by $\NP$ oracles?}
\end{center}
Our first set of results prove that for a large family of states that we call {\em computable}, it is possible to prepare efficient shadows that can be inverted in $\NP$.
In more detail, we consider a family of states $\{\ket{\psi_k}\}_{k}$ indexed by keys, where the key enables efficient preparation of the corresponding state.
We say that the family is computable if the amplitude and phase on any computational basis term in $\ket{\psi_k}$ can be efficiently computed given the key $k$. We prove the following theorem.

\begin{theorem}(Informal)
    All OWSG families with computable states are invertible in $\mathsf{BQP}^{\NP}$.\footnote{Strictly speaking we need a stronger than usual correctness guarantee that is nevertheless satisfied by all proposed constructions.}
\end{theorem}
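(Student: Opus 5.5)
Given copies of the challenge $\ket{\psi_k}$, the plan is to produce a polynomial-size classical record from efficient measurements, then use the $\NP$ oracle (via approximate counting, i.e.\ Stockmeyer, available in $\BQP^{\NP}$) to search for a key $k'$ consistent with that record. Computability is exactly what lets an $\NP$ machine check consistency: given $k'$ and a basis string $x$, it evaluates $\braket{x}{\psi_{k'}}$ classically, so any statistic expressible through amplitudes becomes a $\#\mathsf{P}$-style quantity. Approximate counting in $\mathsf{BPP}^{\NP}$ estimates such quantities well enough to decide whether a candidate $k'$ passes a test.

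\textbf{Step 1 (shadow).} I would take $m=\poly(n)$ copies and, for each one, apply a random Clifford (or a random computational/Hadamard-type basis change, which suffices for phase-sensitive tests) and measure. This yields classical snapshots $(C_i,b_i)$. For a fixed candidate $k'$, the standard classical shadow estimator gives the fidelity estimate
\[
\hat F(k')=\frac{1}{m}\sum_i \bigl((2^n+1)\,|\bra{b_i}C_i\ket{\psi_{k'}}|^2-1\bigr).
\]
By the median-of-means bound, this is within $\epsilon$ of $|\braket{\psi_{k'}}{\psi_k}|^2$ simultaneously for all $2^{L}$ keys, provided $m=O(L/\epsilon^2)$. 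Random Clifford snapshots, however, have variance proportional to the shadow norm, which for fidelity with pure states is $O(1)$, so this is fine.

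\textbf{Step 2 (approximate the amplitude).} For each candidate $k'$, I need $|\bra{b_i}C_i\ket{\psi_{k'}}|^2$. I would write $\bra{b}C\ket{\psi_{k'}}=\sum_x \bra{b}C\ket{x}\psi_{k'}(x)$. Clifford matrix entries are efficiently computable (phases from the stabilizer formalism), and computability gives $\psi_{k'}(x)$. The quantity $|\cdot|^2$ is therefore a $\mathsf{GapP}$-type sum, and this is the \textbf{main obstacle}: Stockmeyer only approximates nonnegative sums multiplicatively, and a signed or complex sum with cancellations is not directly approximable in $\NP$. The trick I would try first is to avoid amplitudes and instead use a \emph{swap/Bell-style test against a freshly prepared} $\ket{\psi_{k'}}$. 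That is only possible after $k'$ is fixed, so it is used as a verifier, not a searcher. For the search itself, I would use the stronger correctness assumption flagged in the footnote (e.g.\ amplitudes specified with bounded precision and structured phases) to rewrite $|\bra{b}C\ket{\psi_{k'}}|^2$ as a probability of a classical randomized predicate $R(k',b,C,r)$ accepting. Its acceptance probability can then be estimated multiplicatively by Stockmeyer counting in $\mathsf{BPP}^{\NP}$, and additively to inverse-poly precision after normalization.

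\textbf{Step 3 (search and verification).} I would phrase ``there exists $k'$ with $\hat F(k')\ge 1-2\epsilon$'' as an $\NP$ (more precisely $\mathsf{AM}$/$\mathsf{BPP}^{\NP}$-decidable, hence derandomizable into the oracle via Sipser--Gács--Lautemann) predicate. I would then use bit-by-bit prefix search, extending the prefix of $k'$ one bit at a time while the predicate stays true. The true key $k$ guarantees that the predicate holds initially, and uniform convergence from Step 1 guarantees that any output $k'$ satisfies $|\braket{\psi_{k'}}{\psi_k}|^2\ge 1-3\epsilon$, which is noticeable overlap. Slack from the approximation errors is absorbed by using a gap promise (threshold $1-2\epsilon$ versus $1-4\epsilon$) at each prefix step. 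Finally, as a sanity check, the adversary can verify the output with a swap test using the remaining copies.
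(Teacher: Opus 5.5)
There is a genuine gap in Step~2, and it sinks this route. After a random Clifford, $\bra{b_i}C_i\ket{\psi_{k'}}=\sum_x\bra{b_i}C_i\ket{x}\psi_{k'}(x)$ is an exponentially long sum with cancellations. Computability of the individual amplitudes does not make this sum computable, and Stockmeyer counting does not apply to it. This is exactly why the paper refuses to apply any randomizing unitary to the challenge.

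Your proposed repair rests on a misreading of the footnote. The ``stronger than usual correctness guarantee'' is \emph{pointwise correctness of the verifier}: $\Pr[\mathsf{Ver}(1^n,h,\ket{\psi_h})=1]\ge1-\nu(n)$ for every $h$ (Corollary~\ref{owsg:attack}, part~2). It is needed because a key found by a worst-case $\NP$ search is not distributed like $\mathsf{KeyGen}$. It imposes no structure on amplitudes, and no general rewriting of $|\bra{b}C\ket{\psi_{k'}}|^2$ as the acceptance probability of a classical randomized predicate exists.

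A concrete obstruction: take $C=H^{\otimes n}$, $b=0^n$, and a binary-phase candidate $2^{-n/2}\sum_x(-1)^{f(x)}\ket{x}$. Your shadow term $(2^n+1)|\bra{b}C\ket{\psi_{k'}}|^2$ is essentially $2^{-n}\mathrm{gap}(f)^2$. Padding with $f(y,z)=g(y)$, $|y|=n/2$, turns it into $\mathrm{gap}(g)^2$. So computing it to the constant additive accuracy your estimator needs decides whether a $\mathsf{GapP}$ function vanishes, which no $\mathsf{BPP}^{\NP}$ (Stockmeyer-style) procedure can do unless the polynomial hierarchy collapses.

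Note also that you never use correctness where it is actually required. That place is the passage from $|\langle\psi_{k'}|\psi_k\rangle|^2\ge1-\varepsilon$ to $\mathsf{Ver}$ accepting, via $\langle\psi_k|E_{k'}|\psi_k\rangle\ge\langle\psi_{k'}|E_{k'}|\psi_{k'}\rangle-\sqrt{\varepsilon}$ for the acceptance effect $E_{k'}$.

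The missing idea is to use only measurements whose outcome probability, under any candidate $j$, is a closed-form function of $O(1)$ amplitudes of $\ket{\psi_j}$. Then the likelihood $\prod_i\widetilde P_j(Z_i)$ of the whole transcript is polynomial-time computable. The test ``$\exists j$ with likelihood $\ge\theta$'' becomes a plain $\NP$ predicate, handled by binary search over $\theta$ and prefix search over $j$, with no approximate counting.

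The paper implements this in two stages. First, computational-basis samples give, by maximum likelihood and the Bhattacharyya union bound of Lemma~\ref{owsg:mle}, a key $k_0$ with $H(p_k,p_{k_0})$ small. Second, since $k_0$ is now known, one can prepare the phase-stripped reference $\ket{\psi_{k_0}^+}$ and interfere it with the challenge through a controlled-SWAP. The outcome $(\alpha,\sigma,x,y)$ has probability \eqref{owsg:eq-8}, which involves only $a_j(x)$ and $a_j(y)$ and is computable, so a second $\NP$ maximum-likelihood search returns $h$. Theorem~\ref{owsg:fidelity} then converts the two Hellinger bounds into $|\langle\psi_h|\psi_k\rangle|^2\ge1-\varepsilon$. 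The matched reference is what keeps the relative-phase signal from being washed out when $p_k$ is highly nonuniform.
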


The above theorem only establishes that computable  OWSGs require {\em worst-case} \(\NP\) hardness, and does not yet provide an implication to one-way functions. 
%
Towards that goal, we obtain an implication to one-way functions from OWSGs that are {\em sampleable}, in addition to being computable. Informally, sampleability requires the ability to efficiently and deterministically sample from (approximately) the induced distribution $\mathcal{D}_k$ on computational basis terms obtained when measuring $\ket{\psi_k}$. This is different from simply measuring $\ket{\psi_k}$: we want an efficient classical process that on input randomness $r$ and key $k$ outputs a deterministic sample from (a distribution close to) $\mathcal{D}_k$.   

\begin{theorem}
    (Informal) Sampleable and computable OWSGs\footnote{We will assume for the purposes of the overview that OWSG keys are uniform. Our techniques can also handle keys that are not uniformly sampled, provided the key generation process is classically simulatable.} imply one-way functions.
\end{theorem}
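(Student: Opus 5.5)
We argue by contrapositive: assuming one-way functions do not exist, we build an efficient inverter for any sampleable and computable OWSG. The tool is the classical fact (Impagliazzo--Luby) that if one-way functions do not exist, then every efficiently computable function $f$ has an efficient \emph{distributional inverter}: given $f(x)$ for uniform $x$, it outputs a preimage distributed close to uniform over $f^{-1}(f(x))$, with inverse-polynomial error. We use its quantum-secure variant, since our OWSG adversaries are quantum.

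The plan is to replace the $\NP$ oracle in the $\mathsf{BQP}^{\NP}$ inverter of the previous theorem by such a distributional inverter. The point of sampleability is to make the inversion problem classical. Let $\mathsf{Samp}(k,r)$ be the deterministic classical sampler whose output distribution is close to $\mathcal{D}_k$. Define
\[
f(k,r_1,\dots,r_m) = \bigl(\mathsf{Samp}(k,r_1),\dots,\mathsf{Samp}(k,r_m)\bigr)
\]
for $m=\poly(n)$. This $f$ is efficiently computable. If it were invertible in the distributional sense, then on input $m$ computational-basis measurement outcomes $x_1,\dots,x_m$ of the challenge $\ket{\psi_k}$, the inverter returns a key $k'$ distributed close to the posterior of $k$ given $x_1,\dots,x_m$. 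Sampleability is what lets us feed the inverter real measurement outcomes: these are statistically close to an actual image of $f$ on uniform input, so the inverter's guarantee carries over. Uniform keys make the input distribution of $f$ the right one.

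Standard-basis samples see only the magnitudes $\lvert\braket{x}{\psi_k}\rvert^2$, not the phases, so $k'$ need not give a state with noticeable overlap. To handle phases, we include in the function's output the classical shadow produced in the previous theorem. Those measurements are, for example, random Clifford measurements or measurements in bases that reveal relative phases. For these outcomes we need their distributions to be classically sampleable from $k$. Computability provides amplitudes and phases, but turning that into sampling is exactly where sampleability must be invoked; I would define the sampler to cover the measurement bases the shadow uses, or reduce to basis-plus-Hadamard-type measurements whose outcome probabilities are sums over few amplitudes. A cleaner alternative is to use the posterior-sampling argument with the swap-test likelihood: a posterior sample $k'$ satisfies $\mathbb{E}\,\lvert\braket{\psi_{k'}}{\psi_k}\rvert^2 \ge$ noticeable whenever the observations determine the state up to small trace distance. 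This follows by a hybrid/information argument: if every consistent key has negligible fidelity with $\ket{\psi_k}$, then the measured data distinguishes them, since shadow tomography with $m$ samples estimates fidelities, contradicting their consistency.

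The main obstacle is this last step. We must show that a key sampled from the posterior given polynomially many shadow outcomes yields noticeable fidelity with $\ket{\psi_k}$ on average, and we must control the errors: the inverse-polynomial inverter error, the sampler's statistical distance, and the mismatch between real measurement outcomes and $f$'s outputs. I would first try a counting argument over the shadow estimator's success, in the spirit of the previous theorem's proof. That proof certified that any key passing the $\NP$ verification has high overlap, and an exact posterior sample passes that same verification with high probability. Concluding this way gives OWSG inversion from the absence of one-way functions, and hence sampleable computable OWSGs imply one-way functions.
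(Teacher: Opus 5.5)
\textbf{There is a genuine gap.} Your outer framework matches the paper: the contrapositive, universal extrapolation to get posterior samplers in place of the $\NP$ maximum-likelihood searches, and $f(k,r_1,\dots,r_m)=(\mathsf{Samp}(k,r_1),\dots,\mathsf{Samp}(k,r_m))$ for the computational-basis stage. The gap is the step you flag yourself, and it carries the whole theorem. You need a phase-sensitive measurement whose outcome distribution is (i) classically sampleable from the key using only $\mathsf{Samp}$ (which by hypothesis covers only the basis distribution $p_k$) and $\mathsf{Amp}$, and (ii) identifying: polynomially many outcomes pin down the state up to fidelity even when $p_k$ is far from uniform. None of your candidates has both properties.
\begin{itemize}
\item Random-Clifford or Hadamard-type shadows fail (i). For IQP-type phase states such as HPS, the Hadamard-basis distribution is an IQP output distribution, which is not believed to be classically sampleable. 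So $f$ with these outcomes appended is no longer an efficiently computable classical function, and the nonexistence of classical one-way functions gives no inverter for it. ``Defining the sampler to cover the shadow bases'' strengthens the hypothesis, and would exclude HPS.
\item A swap test against a candidate $\ket{\psi_{k'}}$ does not yield a transcript from a key-indexed family of sampleable distributions at all.
\item Interfering $a(x)$ with $a(y)$ for $y=x\oplus\Delta$, $\Delta$ random, is sampleable but fails (ii). The phase signal is proportional to $2|a(x)a(y)|/(|a(x)|^2+|a(y)|^2)$. For a state spread over $S\ll D$ strings, a random partner almost never lands in the support, so about $D/S$ samples are needed.
\end{itemize}

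The missing idea is the matched controlled-swap distribution $S_k=R_{\psi_k,p_k}$, the outcome law of $\mathcal{M}$ on $\ket{\psi_k}\ket{\psi_k^+}$. Because the reference has exactly the magnitudes of $\ket{\psi_k}$, the magnitudes factor out. So $S_k$ can be sampled by drawing $x,y\sim p_k$ independently with $\mathsf{Samp}$, $\alpha\in\{0,\pi/2\}$ uniformly, and $\sigma$ with probability $\frac12\bigl(1+\sigma\operatorname{Re}(e^{i\alpha}u_k(x)\overline{u_k(y)})\bigr)$, where $u_k$ are the unit phases obtained from $\mathsf{Amp}$ (Lemma~\ref{owf:sample-S}). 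It is also identifying with no flatness assumption, since $1-|\braket{\psi_{k'}}{\psi_k}|^2\le 4H^2(S_k,S_{k'})$ (Lemma~\ref{owsg:matched-fidelity}).

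Running $\mathcal{M}$ on the challenge requires $\ket{\psi_K^+}$, so your basis-sample $f$ becomes a first stage. Posterior-sample $A$ from the basis transcript, build $\ket{\psi_A^+}$ via $\mathsf{Gen}$ and phase uncomputation, and take $s_1\gg s_2$. Then $H^2(p_K,p_A)\le\gamma$ with $s_2\gamma$ small puts the real transcript within $\sqrt{s_2\gamma}$ of $S_K^{\otimes s_2}$. The Bhattacharyya bound $\Pr[H^2(Q_K,Q_{K'})>\gamma]\le\frac{M}{2}e^{-s\gamma/2}+\tau$ then gives high fidelity. Finally, closeness of $\mathcal{L}(K')$ to the key distribution lets the argument rely only on average-case OWSG correctness, a point your proposal also does not address.
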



Notable examples of sampleable OWSGs are Hamiltonian Phase States~\cite{HPS}, which prior to this work had been conjectured to not require one-way functions. Similarly, Morimae and Xagawa~\cite{QGA} proposed hardness assumptions for IQP-based quantum group actions as candidates for cryptography without one-way functions: we show that both imply one-way functions. 
\\

\noindent{\bf II. Using \(\NP\) Oracles to Break Pseudorandom Unitaries.} 

A pseudorandom {\em unitary} (PRU) refers to a family of efficient circuits implementing a unitary that cannot be distinguished from Haar random with polynomially many (quantum) oracle queries to the circuit. PRUs are perched atop the Microcrypt food chain, i.e., they imply many other cryptographic primitives, and yet, relative to certain oracles, they are known to exist without one-way functions. Obtaining pseudorandom unitaries from assumptions weaker than \(\NP\) hardness would therefore demonstrate that most of Microcrypt can be similarly concretely instantiated without the need for \(\NP\) hardness. This makes the problem of instantiating PRUs in Microcrypt one of the central problems in the area.

While there has been remarkable progress realizing pseudorandom unitaries from one-way functions themselves (or stronger assumptions)~\cite{BM24,PermHam24,PFC24,MH25,LRFC25,BlockedLRFC25,HiddenHam25,PC26,Kac26,Distinct26},  
\
there are no known concrete instantiations in the plain model from concrete assumptions weaker than one-way functions. It is plausible that existing methods and architectures to build PRUs, that implicitly or explicitly rely on one-way functions, could be secure even if the underlying building block is not a one-way function and satisfies a much weaker property. 
We study various known architectures to see which ones can potentially be instantiated without one-way functions, and we demonstrate generic ways to break pseudorandomness of some important architectures via an $\NP$ oracle.

First, we note that our attacks on pseudorandom states also apply to certain families of pseudorandom unitaries, for example, the $PFC$ unitary,  where $C$'s correspond to Clifford unitaries, $P$ denotes a computational-basis permutation and $F$ denotes a diagonal phase unitary (with arbitrary general phase). 
More importantly, we go beyond these attacks and show that one can also learn broader classes of unitary families.



\begin{theorem}(Informal)
    Any unitary family  $\{U_k\}_{k}$ where $U_k = C_{1,k}P_kF_kC_{2,k}$ 
    $C$'s correspond to Clifford unitaries, $P$ denotes a computational-basis permutation and $F$ denotes a diagonal phase unitary (with arbitrary general phase)
    can be learned (and therefore, also distinguished from Haar random) by a BQP machine with query access 
    to an \(\NP\) oracle.
\end{theorem}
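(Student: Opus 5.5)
Our plan is to reduce learning $U_k = C_{1,k}P_kF_kC_{2,k}$ to learning a \emph{computable} state family, and then apply the first informal theorem above (computable families admit $\NP$-aided shadow tomography and inversion). We will allow the learner to output any efficient description of a unitary close to $U_k$ in an appropriate sense (for example, agreeing on the Choi state up to small trace distance, or up to a global phase). Distinguishing from Haar then follows: the learned circuit is used to predict $U$'s action on a fresh random input, and the prediction is tested by a swap test. A Haar-random unitary cannot be compressed into a polynomial-size description, so on Haar-random $U$ this test fails with high probability.

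\textbf{Step 1 (Choi state).} Query $U$ once per copy on half of $\cmcPhi$ to obtain copies of the Choi state $(I\otimes U_k)\cmcPhi$. Using the transpose trick, $(I\otimes C_1 P F C_2)\cmcPhi = (C_2^{\cmcTranspose}\otimes C_1 P F)\cmcPhi$. Because $F$ is diagonal and $P$ is a permutation, the state $(I\otimes PF)\cmcPhi=D^{-1/2}\sum_x \omega_F(x)\ket{x}\ket{\pi(x)}$ has every amplitude efficiently computable from the key. Conjugating by Cliffords preserves computability only in a weak sense: a Clifford applied to a computable state does not in general give computable amplitudes, since it requires an exponential sum. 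We will therefore avoid this issue by stripping the Cliffords rather than simulating them.

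\textbf{Step 2 (removing the Cliffords).} The key observation is that $PF$ maps the Pauli group to monomial operators, and the Cliffords only relabel Paulis. First learn $C_2$ and $C_1$ up to the ambiguity that can be absorbed into $P$ and $F$. Concretely, query $U$ on stabilizer inputs (computational basis states conjugated by candidate Cliffords) and use Bell sampling on two Choi copies. For $U=C_1 PF C_2$, the distribution of Bell-sampling outcomes is supported on an affine structure determined by $C_1,C_2$: the monomial $PF$ sends $Z$-type Paulis to diagonal-up-to-permutation operators. The search for a pair of Cliffords (each described by poly bits as a symplectic matrix plus phases) consistent with all observed samples is an $\NP$ search problem, since membership of each sample is checkable in polynomial time. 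Once candidate $\hat C_1,\hat C_2$ are found, we implement $\hat C_1^{\dagger}U\hat C_2^{\dagger}$, which should be a monomial unitary $P'F'$ up to Clifford factors that themselves are monomial (Pauli/CNOT/S-type), hence still of the form permutation times phase.

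\textbf{Step 3 (monomial part).} For a monomial unitary $M=P'F'$, querying on $\ket{x}$ returns $\omega(x)\ket{\pi(x)}$, so $\pi$ is learned classically on any input. The Choi state of $M$ is computable given the key. We therefore apply the computable-state learner to recover a key $k'$ whose Choi state has high fidelity with $M$'s, and we output $\hat C_1 M_{k'}\hat C_2$.

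\textbf{Main obstacle.} The main obstacle we expect is Step 2: showing that an $\NP$-checkable consistency condition on polynomially many Bell samples actually pins down $C_1,C_2$ modulo the monomial subgroup. This is because arbitrary phases $F$ can scramble Pauli structure. Our first attempt would be to exploit that $C_2^{\cmcTranspose}\otimes C_1$ acts on the Choi state as a Clifford, and that the Pauli expectations of $(I\otimes PF)\cmcPhi$ vanish except on a structured set. Bell-difference sampling would then reveal the Clifford-rotated image of that set, and from it we would solve for the Clifford.
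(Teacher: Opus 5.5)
There is a genuine gap, located at the step you flag as the main obstacle, and Steps 1 and 3 depend on it. The set of Bell outcomes $U$ can produce is not a function of the two Cliffords. Once the output Clifford is undone, a Bell probe whose label has displacement $a'$ yields displacements in $R_\pi(a')=\{\pi(x)\oplus\pi(x\oplus a')\}$. This is a set of at most $D/2$ strings fixed by the unknown, pseudorandom permutation, with no affine structure you can see without $\pi$. Likewise, the Pauli spectrum of $(I\otimes PF)\ket{\Phi}$ is supported on labels with $b\in R_\pi(a)$ and weighted by sums depending on $\pi$ and $\omega$. So Bell-difference sampling would only show you a Clifford image of a key-dependent set.

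Consequently there is no polynomial-time predicate ``sample $z_i$ is consistent with $(\hat C_1,\hat C_2)$'' for an $\NP$ oracle to search over. Membership needs both the permutation key and a per-sample witness. Your Step 2 also prepares inputs ``conjugated by candidate Cliffords,'' but the experiment cannot depend on candidates the oracle has yet to search over. Even with a joint search, you give no argument that passing a consistency test forces $\hat C_1^{\dagger}U\hat C_2^{\dagger}$ to be monomial, so Step 3's premise is unsupported. Step 3 further needs efficiently computable phases $\omega$ to invoke the computable-state learner, whereas the statement allows arbitrary phases. Your Haar distinguisher is only as good as this learner.

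The idea you are missing is that the Cliffords never need to be learned or stripped. $C\otimes C$ permutes the Bell basis by an efficiently computable, parity-preserving affine label map. So a candidate key $h$ lets you classically push the fixed probe label through its input Clifford to get $(a_h,c_h)$, and pull each observed outcome back through its output Clifford to get $(b_i^{(h)},d_i^{(h)})$. All measurements are candidate-independent. The paper then makes one $\NP$ query: do there exist $h$ and $x_1,\dots,x_m$ with $\pi_h(x_i\oplus a_h)\oplus\pi_h(x_i)=b_i^{(h)}$ for all $i$? The $x_i$ are exactly the witnesses your check lacks, and neither the phases nor $\pi^{-1}$ is ever evaluated. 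The true key passes with probability one.

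For soundness, the probe is the antisymmetric Bell state with label $v_\star=(e_1,e_1)$. Parity preservation gives $a_h\neq0$, hence $0\notin R_{\pi_h}(a_h)$ and $|R_{\pi_h}(a_h)|\le D/2$. Writing the output as $D^{-1/2}\sum_x\ket{x}\otimes VP_{v_\star}V^{\mathsf{T}}\ket{x}$, each conditional vector is Haar on the $(D-1)$-dimensional complement of $\ket{x}$. A moment bound then shows a fixed candidate explains all $m$ samples with probability at most $(3/4)^m$ (for $m\le D-2$). A union bound over $2^L$ keys with $m=3(L+s)$ finishes. The learning part of the statement is obtained by search-to-decision on this same predicate, not by tomography.
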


This includes the first candidates that were proven to be secure PRUs~\cite{MH25,PFC} as well as more recent constructions~\cite{PC26,BlockedLRFC25,ConstantTime25,Distinct26}. Our theorem says that these candidates cannot be instantiated from weaker assumptions than \(\NP\) hardness, and one needs to look at fundamentally different architectures to obtain a Microcrypt PRU.

Perhaps the most important conceptual takeaway from this paper is a deeper understanding of what candidates and architectures survive \(\NP\) attacks, and can be plausible routes to constructions in Microcrypt.
Near the end of the paper, Table~\ref{tab:con-uncovered} lists several such constructions for which
our arguments do not provide attacks.



\subsection{Related Work}
Kretschmer~\cite[Appendix~C]{Kretschmer} gave an efficient quantum attack with an \(\NP\) oracle on the binary-phase PRS construction of Ji, Liu, and Song~\cite{JLS18} and Brakerski--Shmueli~\cite{BSBinary}. The attack measures copies of the state to obtain classical constraints on the phase function, then uses the oracle to test whether a key consistent with those constraints exists. Our results extend the study of \(\NP\)-aided attacks beyond binary-phase states to general computable pure-state families.

In a recently posted manuscript, Carrasco et al.~\cite{CEGHLS26} also establish the HPS-to-OWF implication using a state certification framework, which they instantiate for phase states whose computational basis probabilities are uniformly distributed. They also develop a general reduction from efficiently certifiable OWSGs to efficiently verifiable one-way puzzles and, under suitable classical simulability assumptions, a direct construction of one-way functions that avoids distributional inversion. 
Beyond the HPS-to-OWF implication, our general results appear to be incomparable, in particular we obtain one-way functions from OWSG beyond the flat setting: our \(\NP\)-aided tomography for arbitrary computable
pure-state families includes states with highly nonuniform amplitude
magnitudes. We further show that computable samplable
pure-state OWSG imply
OWF, again without requiring
flatness. Beyond the state setting, we also develop new techniques for \(\NP\)-aided learning of certain unitary families. 

Our techniques were developed concurrently and independently from~\cite{CEGHLS26}.
In particular, our results including the implication from  Search HPS to quantum-secure classical one-way functions and our NP-aided unitary learning algorithms, were presented
by Dakshita Khurana at the Simons Institute for the Theory of Computing
on July 14, 2026~\cite{KhuranaSimons26}.

\section{Technical Overview}

\subsection{\texorpdfstring{\(\NP\)}{NP}-aided inversion of computable OWSG}
Our first goal is to understand when access to \(\NP\) oracles (or, weaker one-way function inversion oracles) can help invert one-way state generators.

We begin by describing our $\NP$-aided tomography algorithm for
computable states. Let
\[
    |\psi_k\rangle
    =
    \sum_{x\in\{0,1\}^n} a_k(x)|x\rangle,
    \qquad k\in\mathcal K,
\]
be a family of computable states. Recall that the family is computable if the amplitude and phase on any computational basis term in $\ket{\psi_k}$ can be efficiently computed given the key $k$.
Given polynomially many copies of
$|\psi_k\rangle$ for an unknown key $k$, our goal is to recover some
$h\in\mathcal K$ such that $|\psi_h\rangle$ has high fidelity with
$|\psi_k\rangle$.

A natural starting point is classical-shadow tomography~\cite{HKP20}.
Classical shadows apply efficient measurements to copies of an unknown
state and produce a classical transcript from which many observables of
the state can later be estimated. In particular, such techniques can be
used to compare an unknown state against an exponentially large family
of candidate states. However, the resulting classical post-processing
need not be efficient: the only known upper bounds on the post-processing capability needed is $\mathsf{PP}$. In particular, any OWSG can be inverted by a quantum polynomial-time algorithm with classical access to a $\PP$ oracle~\cite{KT24,BoundedCopies25,HirokaHsieh24}. Such an oracle is believed to be much more powerful than an $\NP$ oracle. 
Our goal in this work is to exploit the additional structure of
\emph{computable} states to reduce the required classical power to
$\NP$.

The reason this may be possible is that, although the challenge state itself is
quantum, a substantial amount of information about it may still be
classically computable given the key. In particular, for computable state
families we can efficiently evaluate the complex amplitude
\[
    a_k(x)=\langle x|\psi_k\rangle
\]
on any computational-basis vector $x$.

This suggests a natural strategy: extract enough amplitudes from copies of the
unknown state, and then use an $\NP$ oracle to search for a key whose amplitudes
match the extracted values. The difficulty is that there is no measurement
which simply reveals the complex amplitude $a_k(x)$ of an unknown quantum
state. Measurement outcomes are probabilities, and in particular direct
computational-basis measurements are correlated with $|a_k(x)|^2$ rather than $a_k(x)$,
discarding its phase.

We therefore want to replace the goal of extracting amplitudes themselves by the more
attainable goal of obtaining samples from measurement distributions whose
probabilities are efficiently computable given a candidate key. 

In our approach, we first
measure copies of the challenge state to obtain a classical transcript. For each
candidate key $h$, we can then classically compute the probability that
$|\psi_h\rangle$ would have produced this transcript. An $\NP$ oracle can
therefore search over the exponentially many candidate keys for one whose
predicted transcript distribution best explains the observations.

The simplest example is computational-basis measurement. Define
\[
    p_k(x):=|a_k(x)|^2.
\]
By amplitude computability, $p_h(x)$ can be efficiently approximated
for every candidate key $h$ and basis string $x$. We measure several
copies of the challenge in the computational basis, obtaining
\[
    X_1,\ldots,X_s,
\]
and use classical $\NP$ queries to find a key $a$ maximizing the
likelihood
\[
    \prod_{i=1}^s p_a(X_i).
\]
Using the properties of maximum likelihood estimation it can be shown that, with
sufficiently many samples, the resulting distribution $p_a$ is close
to the true distribution $p_k$.

This is unfortunately not enough to learn the quantum state. Computational-basis
measurement discards all phase information. For example, all states
of the form
\[
    |\psi_k\rangle
    =
    \frac1{\sqrt D}
    \sum_x e^{i\theta_k(x)}|x\rangle
\]
have the same uniform computational-basis distribution, even though
different choices of the phases can produce nearly orthogonal states.
We therefore need a second measurement that extracts information about
relative phases.

A natural starting point is the phase-extraction procedure used in
\cite{KT25}. There, one chooses a random shift $\Delta$ and applies a
two-to-one function satisfying
\[
    f(x)=f(x\oplus\Delta).
\]
Measuring the function value collapses the state to be supported on a random pair
$x,y=x\oplus\Delta$, attaining the form
\[
    \cos(\vartheta/2)|x\rangle
    +
    \sin(\vartheta/2)e^{i\phi}|y\rangle.
\]
for some angles $\vartheta$ and $\phi$.
Measuring this state in the two bases
\[
    \{|x\rangle\pm|y\rangle\}
    \qquad\text{and}\qquad
    \{|x\rangle\pm i|y\rangle\}
\]
reveals respectively information about $\cos\phi$ and $\sin\phi$.
Thus, in principle, such measurements expose the relative phase
between two computational-basis amplitudes.

There is, however, an important obstruction. The strength of the phase
``signal" is proportional to
\[
    \sin\vartheta
    =
    \frac{2|a(x)a(y)|}{|a(x)|^2+|a(y)|^2}.
\]
If one amplitude is much larger than the other, this quantity is small,
and changing their relative phase barely changes the measured state.
A prior construction in \cite{KT25} deals with a similar problem by first applying a
random unitary $2$-design, which flattens the computational-basis
weights sufficiently that the two amplitudes are usually not too
imbalanced. We cannot directly use this solution: applying a generic
randomizing unitary to the challenge may destroy the amplitude
computability on which our subsequent $\NP$ search relies.

This suggests looking for a way to obtain the same balancing effect
without scrambling the challenge, inspired by \cite{Ben21}. 
We will create a reference state and force our unknown OWSG output state to interfere with the reference state.
Write the unknown challenge state as
\[
    |\psi\rangle
    =
    \sum_{z\in\{0,1\}^n}\sqrt{p(z)}e^{i\theta(z)}|z\rangle,
\]
where $p(z)$ is its computational-basis probability distribution and
$\theta(z)$ is its phase at $z$. As a first attempt, use the uniform state
\[
    |+\rangle^{\otimes n}
    =
    \frac1{\sqrt D}\sum_z|z\rangle,
    \qquad D=2^n,
\]
as a reference. Prepare an ancilla in $|+\rangle$ and apply a controlled
SWAP between the challenge and reference registers. Then measure both
registers in the computational basis, obtaining $x$ in the first register
and $y$ in the second. Conditioned on these outcomes, the ancilla state
is proportional to
\[
    \sqrt{p(x)}e^{i\theta(x)}|0\rangle
    +
    \sqrt{p(y)}e^{i\theta(y)}|1\rangle.
\]
This already improves on choosing an unrelated pair: the pair
$(x,y)$ has distribution
\[
    \frac{p(x)+p(y)}{2D}.
\]
Equivalently, one endpoint is sampled according to the target
distribution $p$, the other is sampled uniformly, and their order is
randomized. Thus the measurement automatically sees coordinates on
which the state has mass.

But the main problem remains. If $p$ is highly nonuniform, the endpoint
sampled uniformly will typically have much smaller target amplitude
than the endpoint sampled from $p$. The two branches of the ancilla
are again badly imbalanced, and consequently contain very little phase
information. For a state supported roughly uniformly on only $S$
basis vectors, for example, the useful phase signal obtained from the
uniform reference can be suppressed by a factor on the order of
$S/D$, which may be exponentially small.

The uniform reference nevertheless suggests the right generalization:
instead of fixing its magnitude distribution to be uniform, let us use
a reference
\[
    |q^+\rangle
    =
    \sum_x\sqrt{q(x)}|x\rangle
\]
for some distribution $q$. After the controlled SWAP and computational
basis measurement of the two state registers, the unnormalized ancilla
state is
\[
    \sqrt{p(x)q(y)}e^{i\theta(x)}|0\rangle
    +
    \sqrt{q(x)p(y)}e^{i\theta(y)}|1\rangle.
\]
The pair $(x,y)$ now has marginal distribution
\[
    \frac{p(x)q(y)+q(x)p(y)}2.
\]
Equivalently, one string is sampled from $p$ and one from $q$, with
their order randomized.

The failure of our previous example reveals what properties we want from the reference. We want it to place mass on the same coordinates as the challenge and, more
importantly, to balance the two branches of the interference
experiment. The ideal choice, therefore, is to choose $q=p$.
In this case the observed pair satisfies
\[
    (x,y)\sim p\times p,
\]
and the ancilla state becomes
\[
    \sqrt{p(x)p(y)}
    \left(
        e^{i\theta(x)}|0\rangle
        +
        e^{i\theta(y)}|1\rangle
    \right).
\]
The two branches now have exactly equal magnitude for every observed
pair. After normalization, the ancilla is simply
\[
    \frac{
        e^{i\theta(x)}|0\rangle
        +
        e^{i\theta(y)}|1\rangle
    }{\sqrt2}.
\]
Measuring the ancilla in equatorial bases discussed earlier allows us to extract information about
the cosine and sine of the relative phase
$\theta(x)-\theta(y)$. Moreover, both $x$ and $y$ are drawn according
to the actual mass of the state. Thus this procedure works equally
well for flat states and states with highly nonuniform/spiky probability distributions. In our proofs we show that if two states produce similar distributions (when measured as above with their corresponding ideal reference states) they must have high fidelity. Since this measurement distribution is also computable, it therefore suffices to measure challenge states to obtain samples from this ideal distribution.

The reader may wonder if the procedure discussed above can really work, since preparing the necessary reference state
appears to require knowledge of the unknown key $k$. But this is
exactly what the first, computational-basis stage of our algorithm has
already solved! That stage returns a key ${k_0}$ for which
\[
    p_{k_0}\approx p_k.
\]
Since ${k_0}$ is known, computability lets us prepare its state $\ket{\psi_{{k_0}}}$ and then uncompute its phases to obtain
\[
    |\psi_{{k_0}}^+\rangle
    =
    \sum_x|a_{{k_0}}(x)||x\rangle.
\]

Our algorithm therefore proceeds as follows: The first stage finds a reference whose
magnitudes approximately match those of the challenge. We then use the reference to sample from the distribution described above, followed by one last search for the key that achieves the maximum likelihood for the final samples.

Finally, for the special case where the distributions are also classically samplable given the key, the same
two-stage strategy yields a classical one-way function. Namely, we
construct classical functions that generate the two measurement
transcripts using the random key and some additional random coins. A
distributional inverter replaces each $\NP$ maximum-likelihood search:
given a transcript, it approximately samples a key conditioned on
producing that transcript. If such inversion were easy, the same argument  as above would break the one-way state generator;
hence the transcript function is distributionally one-way and therefore
implies a classical one-way function. To enable efficient computation of our one-way function, we end up having to assume that the OWSG is {\em sampleable}, that is, it is possible to efficiently and classically sample from the distribution that assigns probabilities 
\[  p_k(x):=|a_k(x)|^2.
\]
to each $x$.
These conditions already cover concrete candidates proposed for quantum
cryptography without classical one-way functions: Hamiltonian phase
states~\cite{HPS} and the IQP group-action states of~\cite{QGA}, after
applying a public Hadamard gate to each qubit of the latter.
Both families then have uniform amplitude magnitudes and phases that
can be efficiently computed from the classical key by summing the
contributions of polynomially many diagonal gates; hence their
amplitudes are efficiently computable, and their measurement
distributions in this basis can be sampled exactly using one
independent fair bit per qubit.

\subsection{An \texorpdfstring{\(\NP\)}{NP}-aided attack on CMC constructions}

Next, we consider unitary constructions of the form
\[
    U_k=C_{2,k}M_kC_{1,k},
    \qquad
    M_k\ket{x}=\omega_k(x)\ket{\pi_k(x)},
\]
where $C_{1,k},C_{2,k}$ are Clifford unitaries and $M_k$ is
{\em monomial}: it permutes computational-basis states and adds phases. An example is the construction $C_2PFC_1$
from~\cite{MH25}, whose middle layer $PF$ is monomial.

Our \(\NP\)-aided tomography attack described in the previous subsection already helps distinguish $PFC$ unitaries from random, because the state $PFC \ket{0}$ is computable for every known permutation, function and Clifford.
Adding the output Clifford $C_2$, however, can mix the
amplitudes and destroy the computability that attack
relies on. We therefore look for a different property
of the construction that survives both Clifford layers and can be exploited by an \(\NP\) oracle.

We build up to our approach by first understanding a specific way to distinguish unitaries of the form 
\[
    U_k=M_k,
    \qquad
    M_k\ket{x}=\omega_k(x)\ket{\pi_k(x)},
\]
for Haar random. Let us see what happens
when $M_k$ is applied to both registers of a Bell state. 
Let $D=2^n$. A Bell state (indexed by $(a,c)$) has the form
\[
    \ket{\beta_{a,c}}
      =\frac{1}{\sqrt D}\sum_x(-1)^{c\cdot x}\ket{x,x\oplus a}.
\]
Applying $M_k$ to both registers gives
\[
    (M_k\otimes M_k)\ket{\beta_{a,c}}
      =\frac{1}{\sqrt D}\sum_x
        (-1)^{c\cdot x}\omega_k(x)\omega_k(x\oplus a)
        \ket{\pi_k(x),\pi_k(x\oplus a)}.
\]
If we now measure both registers in the computational basis,
the XOR of the two observed strings must satisfy the relation
\[
    R_{\pi_k}(a)
      =\{\Delta_{k,a} = (\pi_k(x)\oplus\pi_k(x\oplus a)) :x\in\{0,1\}^n\}.
\]
We call these XORed values $\Delta_{k,a}$ the allowed displacements.
Note that replacing the index $x$ by $x\oplus a$ simply swaps
the two strings in the XOR, leaving the displacement
unchanged. We will assume $a\neq 0$, then these are distinct indices,
so every displacement arises from at least two out of the
$D$ indices. Hence $R_{\pi_k}(a)$ contains at most
$D/2$ distinct displacements.

For a Haar-random unitary, the displacement is uniform
over the $D-1$ nonzero strings when averaged over the
choice of unitary. Thus, any fixed set of at most $D/2$
displacements misses a constant fraction of the
measurement probability.

This means that we can repeat the experiment with the same $(a,c)$ value on fresh copies
of the same Bell state. In the monomial case, every
outcome remains in its allowed set, whereas in the Haar case, the probability that all outcomes remain in any
fixed candidate set decreases exponentially with the
number of repetitions.



The remaining obstacle is checking whether the observed
displacements belong to an allowed set.
The key is unknown, and even given a candidate key $h$,
listing $R_{\pi_h}(a)$ may take exponential time.
However, membership has a short classical witness:
to certify that $b\in R_{\pi_h}(a)$, it suffices to give
a string $x$ satisfying
$\pi_h(x)\oplus\pi_h(x\oplus a)=b$.
An \(\NP\) oracle can therefore check whether there exist
one common key $h$ and strings $x_1,\ldots,x_m$ explaining
all the observed displacements.

This is the key idea:
we have found a measurement whose
outcomes, for the case of PRU, have classical
certificates that an \(\NP\) oracle can use to test consistency
with the family.

To handle our actual unitaries $U_k$ that have the two secret Clifford layers, we replace
computational-basis measurements with Bell-basis measurements.
A Bell outcome has two labels $(b,d)$, where $b$ records
a displacement. For each fixed $b$, the Bell states
$\{\ket{\beta_{b,d}}\}_d$ form another orthonormal basis
for the subspace
\[
    \operatorname{span}\{\ket{y,y\oplus b}:y\in\{0,1\}^n\}.
\]
So, if we ignore $d$, Bell measurements give exactly the
same displacement distribution as measuring both
registers and taking their XOR. Our restriction on
allowed displacements therefore still applies.


But Bell measurements have another useful feature: {\em applying
the same Clifford to both registers simply relabels
the Bell states}. To see why, write a Bell state as
$(I\otimes P)\ket{\Phi}$, where $P$ is a Pauli and
$\ket{\Phi}=D^{-1/2}\sum_x\ket{x,x}$. Then
\[
    (C\otimes C)(I\otimes P)\ket{\Phi}
      =(I\otimes CPC^T)\ket{\Phi}.
\]
Here $CPC^T$ is a Pauli.\footnote{
Write $CPC^T=(CPC^\dagger)(CC^T)$.
Clifford conjugation maps Paulis to Paulis, and
$CC^T$ is also a Pauli up to phase.}
The resulting state is therefore another Bell state. Given the Clifford description, one can
efficiently compute this relabeling and undo the relabeling
classically. Moreover, this relabeling is a bijection on all Bell labels, so our set of potential displacements (after inverting the bijection) is still atmost $D/2$. 

Now consider a candidate key $h$ for
$U_h=C_{2,h}M_hC_{1,h}$.
Starting with the Bell state $\ket{\beta_{a,c}}$, the first Clifford
layer would produce
\[
    (C_{1,h}\otimes C_{1,h})\ket{\beta_{a,c}}
      =\ket{\beta_{a_h,c_h}},
\]
where $(a_h,c_h)$ is classically computable from $(h,a,c)$.
Let $z_i$ be a Bell measurement of $(U_{h}\otimes U_{h})\ket{\beta_{a,c}}$, we can also run
the output Clifford $C_{2,h}$'s relabeling backwards to obtain
a {\em corrected} label $(b_i^{(h)},d_i^{(h)})$. If $h$ is the true key, this corrected outcome has exactly
the distribution obtained by applying $M_h$ to both
registers of $\ket{\beta_{a_h,c_h}}$ and then measuring.
We can therefore check whether the candidate permutation
allows the corrected displacement. 

The full attack is as follows: measure $m$ Bell outcomes and ask the \(\NP\) oracle whether one candidate key can explain all of them.
Specifically, the \(\NP\) query asks whether there exist
a key $h$ and strings $x_1,\ldots,x_m$ such that
\[
    \pi_h(x_i)\oplus\pi_h(x_i\oplus a_h)=b_i^{(h)}
    \qquad\text{for every }i.
\]
We chose $\ket{\beta_{a,c}}$ to ensures $a_h\neq0$ for every candidate,
so each candidate allows at most $D/2$ displacements.
For a genuine CMC unitary, the true key
explains every outcome, so the attack accepts with
probability one.

In the Haar case, for any candidate key, our measurement lands outside the set with constant probability. 
Hence with repeated measurements, we can rule out any fixed
candidate key with high probability. Also, increasing the number of repetitions makes the probability
of a candidate surviving exponentially small. With sufficiently
many repetitions, this probability is small enough to union
bound over all candidate keys. Thus, with high probability,
no key explains the entire measurement record, and the
\(\NP\) query rejects.

The attack therefore distinguishes CMC constructions from
Haar using polynomially many nonadaptive forward queries
and a single classical \(\NP\) query. In fact, the same attack also learns the unitary: NP search-to-decision reductions will {\em find} a key whose unitary is close to the desired unitary.

This attack applies in particular to unitaries of the form
$C_2PFC_1$ (which are strong secure PRU~\cite{MH25,LRFC25}) and Luby-Rackoff-Function-Clifford unitaries
$C_2S_RF S_LC_1$ (which are also strong PRUs~\cite{LRFC25}). Both of these constructions have middle layers that are
monomial. We discuss all other constructions that fall into this category in
Section~\ref{subsec:con-cmc}.

\section{Preliminaries}
\label{sec:prelims}
For a random variable $X$, write $\mathcal L(X)$ for its distribution.
We write
\[
    T(\rho,\sigma):=\frac12\|\rho-\sigma\|_1
\]
for trace distance. For pure states,
\[
    T\bigl(|u\rangle\langle u|,|v\rangle\langle v|\bigr)
    =\sqrt{1-|\langle u|v\rangle|^2}
    \le\bigl\|\,|u\rangle-|v\rangle\,\bigr\|_2.
\]

We relate distances between distributions to inner products and
Euclidean distances between their qsamples.

\begin{definition}[Qsamples]
\label{def:qsample}
Let $P$ be a probability distribution on a finite set
$\mathcal{Z}$. Its \emph{qsample} is
\[
    |P^+\rangle
    :=\sum_{z\in\mathcal{Z}}\sqrt{P(z)}\,|z\rangle.
\]
This state is normalized and has real, nonnegative amplitudes.

\end{definition}

\begin{definition}[Basis distribution]
\label{def:basis-distribution}
For a normalized pure state
$|\psi\rangle=\sum_{z\in\mathcal{Z}}a(z)|z\rangle$, its
\emph{basis distribution} is the probability distribution
\[
    p_\psi(z):=|a(z)|^2.
\]
\end{definition}

\begin{definition}[Positive-amplitude state]
\label{def:positive-amplitude-state}
For a normalized pure state
$|\psi\rangle=\sum_{z\in\mathcal{Z}}a(z)|z\rangle$, its
\emph{positive-amplitude state} is the qsample of its
basis distribution:
\[
    |\psi^+\rangle
    :=|p_\psi^+\rangle
    =\sum_{z\in\mathcal{Z}}|a(z)|\,|z\rangle.
\]
\end{definition}

\begin{definition}[Distribution distances]
\label{def:distribution-distances}
For probability distributions $P,Q$ on the same finite set
$\mathcal{Z}$, define their \emph{Bhattacharyya coefficient},
\emph{Hellinger distance}, and \emph{total variation distance} by
\begin{align*}
    B(P,Q)
    &:=\sum_{z\in\mathcal{Z}}\sqrt{P(z)Q(z)},\\
    H(P,Q)
    &:=\left(
        \sum_{z\in\mathcal{Z}}
        \bigl(\sqrt{P(z)}-\sqrt{Q(z)}\bigr)^2
    \right)^{1/2},\\
    \operatorname{TV}(P,Q)
    &:=\frac12\sum_{z\in\mathcal{Z}}|P(z)-Q(z)|.
\end{align*}
\end{definition}

\begin{lemma}[Qsample identities]
\label{lem:qsample-identities}
For probability distributions $P,Q$ on the same finite set,
\begin{align*}
    B(P,Q)&=\langle P^+|Q^+\rangle,\\
    H(P,Q)&=\bigl\|\,|P^+\rangle-|Q^+\rangle\,\bigr\|_2,\\
    H^2(P,Q)&=2-2B(P,Q).
\end{align*}
In particular, $H$ satisfies the triangle inequality.
\end{lemma}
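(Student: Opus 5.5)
The plan is to expand everything in the computational basis of $\mathcal{Z}$ and use the fact that all qsample amplitudes are real and nonnegative. By Definition~\ref{def:qsample},
\[
    |P^+\rangle=\sum_{z}\sqrt{P(z)}\,|z\rangle,
    \qquad
    |Q^+\rangle=\sum_{z}\sqrt{Q(z)}\,|z\rangle .
\]

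\textbf{Step 1: the inner product.} Since the basis is orthonormal and $\sqrt{P(z)}$ is real (so conjugation does nothing), the inner product is
\[
    \langle P^+|Q^+\rangle=\sum_z\sqrt{P(z)}\sqrt{Q(z)}=B(P,Q).
\]

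\textbf{Step 2: the Euclidean distance.} The vector $|P^+\rangle-|Q^+\rangle$ has coordinates $\sqrt{P(z)}-\sqrt{Q(z)}$. Its squared $\ell_2$ norm is therefore $\sum_z\bigl(\sqrt{P(z)}-\sqrt{Q(z)}\bigr)^2=H(P,Q)^2$. Taking nonnegative square roots gives the second identity.

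\textbf{Step 3: $H^2=2-2B$.} We expand
\[
    \bigl\|\,|P^+\rangle-|Q^+\rangle\,\bigr\|_2^2
    =\langle P^+|P^+\rangle+\langle Q^+|Q^+\rangle-2\operatorname{Re}\langle P^+|Q^+\rangle .
\]
Normalization gives $\langle P^+|P^+\rangle=\sum_z P(z)=1$, and likewise $\langle Q^+|Q^+\rangle=1$. By Step 1, $\langle P^+|Q^+\rangle=B(P,Q)$, which is real. Substituting these values yields $2-2B(P,Q)$.

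\textbf{Step 4: triangle inequality.} By Step 2, $H$ is the pullback of the Euclidean norm distance under the map $P\mapsto|P^+\rangle$. That distance satisfies the triangle inequality, so $H$ does as well:
\[
    H(P,R)\le H(P,Q)+H(Q,R).
\]

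None of these steps is a real obstacle. The only points needing care are that the amplitudes are real, so no complex conjugation or real-part issue arises, and that the normalization $\sum_z P(z)=1$ is what converts the norm expansion into the constant $2$.
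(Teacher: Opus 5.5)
Your proof is correct and takes the same approach as the paper. Expanding in the standard basis gives the first two identities, expanding the squared norm with normalization gives $H^2=2-2B$, and the triangle inequality comes from the Euclidean norm; your remark that the amplitudes are real, so the cross term is exactly $2B(P,Q)$, spells out a detail the paper leaves implicit.
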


\begin{proof}
The first two identities follow by expanding in the standard
basis. The third follows by expanding the squared norm and using
normalization. The triangle inequality follows from that of the
Euclidean norm.
\end{proof}

\begin{lemma}[Total variation versus Hellinger distance]
\label{lem:tv-hellinger}
For probability distributions $P,Q$ on the same finite set,
\[
    \operatorname{TV}(P,Q)\le H(P,Q).
\]
\end{lemma}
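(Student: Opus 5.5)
The plan is to write the total variation distance as a sum of products, factoring each term through the qsample amplitudes, and then apply Cauchy--Schwarz. For every $z$, factor
\[
    |P(z)-Q(z)|
    =\bigl|\sqrt{P(z)}-\sqrt{Q(z)}\bigr|\cdot\bigl(\sqrt{P(z)}+\sqrt{Q(z)}\bigr).
\]
Summing over $z$ and applying Cauchy--Schwarz gives
\[
    2\operatorname{TV}(P,Q)
    \le
    \Bigl(\sum_z\bigl(\sqrt{P(z)}-\sqrt{Q(z)}\bigr)^2\Bigr)^{1/2}
    \Bigl(\sum_z\bigl(\sqrt{P(z)}+\sqrt{Q(z)}\bigr)^2\Bigr)^{1/2}.
\]
By Definition~\ref{def:distribution-distances}, the first factor is exactly $H(P,Q)$.

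Next I will bound the second factor. Expanding the square and using normalization together with Lemma~\ref{lem:qsample-identities},
\[
    \sum_z\bigl(\sqrt{P(z)}+\sqrt{Q(z)}\bigr)^2
    =2+2B(P,Q).
\]
Since $B(P,Q)=\langle P^+|Q^+\rangle\le 1$ by Cauchy--Schwarz on unit vectors, this is at most $4$, so the second factor is at most $2$. Substituting back gives $2\operatorname{TV}(P,Q)\le 2H(P,Q)$, which is the claim.

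No step is a genuine obstacle. The only point needing care is that with the paper's convention, which omits the factor $\tfrac12$ in $H$, the constants come out exactly, with no stray $\sqrt2$. The sharper bound $\operatorname{TV}\le H\sqrt{1-H^2/4}$ follows from the same computation, but it is not needed here.
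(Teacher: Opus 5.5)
Your proposal is correct and follows the paper's own proof exactly: you factor each $|P(z)-Q(z)|$ through the square roots, apply Cauchy--Schwarz to get $\operatorname{TV}(P,Q)\le\frac12 H(P,Q)\sqrt{2+2B(P,Q)}$, and conclude using $B(P,Q)\le 1$. Your remarks on the normalization convention for $H$ and on the sharper bound $H\sqrt{1-H^2/4}$ are also correct.
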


\begin{proof}
Factoring each difference and applying Cauchy--Schwarz gives
\begin{align*}
    \operatorname{TV}(P,Q)
    &=\frac12\sum_z
      \bigl|\sqrt{P(z)}-\sqrt{Q(z)}\bigr|
      \bigl(\sqrt{P(z)}+\sqrt{Q(z)}\bigr)\\
    &\le\frac12 H(P,Q)\sqrt{2+2B(P,Q)}
    \le H(P,Q).
\end{align*}
The last inequality uses $B(P,Q)\le1$, also by Cauchy--Schwarz.
\end{proof}

\begin{definition}[Monomial layer]
\label{def:monomial-layer}
A unitary $M$ on $n$ qubits is a \emph{monomial layer} if there exist
a permutation $\pi:\mathbb{F}_2^n\to\mathbb{F}_2^n$ and a function
$\omega:\mathbb{F}_2^n\to\mathbb{C}$ with $|\omega(x)|=1$ such that
\[
    M|x\rangle=\omega(x)|\pi(x)\rangle
    \qquad\text{for every }x\in\mathbb{F}_2^n.
\]
Here we identify $\mathbb{F}_2^n$ with the computational-basis labels
$\{0,1\}^n$. Thus $M$ permutes computational-basis states and multiplies
them by phases. 
\end{definition}
\begin{definition}[Pure-state one-way state generator]
\label{def:owsg}
A pure-state one-way state generator is a tuple of QPT algorithms $(\mathsf{KeyGen},\mathsf{StateGen},\mathsf{Ver})$.
The algorithm $\mathsf{KeyGen}(1^n)$ outputs a classical key
$k\in\mathcal K_n := \mathsf{Supp}(\mathsf{KeyGen}(1^n))$.  For every $k\in\mathcal K_n$,
$\mathsf{StateGen}(1^n,k)$ prepares a pure state
$|\psi_{n,k}\rangle$.  The verifier $\mathsf{Ver}(1^n,h,\rho)$
takes a candidate key $h$ and a state $\rho$ and outputs a bit.
The algorithms satisfy the following conditions.

\begin{enumerate}
    \item \emph{Correctness.}
    For some negligible function $\nu$,
    \[
        \Pr_{k\sim\mathsf{KeyGen}(1^n)}
        \left[
                \mathsf{Ver}
                (1^n,k,\psi_{n,k})=1
        \right]
        \ge 1-\nu(n).
    \]

    \item \emph{One-wayness.}
    There exists a negligible function $\mu$ such that for every polynomial $t=t(n)$ and every QPT adversary
    $\mathcal A$, for large enough $n$, 
    \[
        \Pr\!\left[
            \mathsf{Ver}
            (1^n,k',\psi_{n,k})=1
        \right]
        \le \mu(n),
    \]
    where
    \[
        k\leftarrow\mathsf{KeyGen}(1^n),
        \qquad
        k'\leftarrow
        \mathcal A(1^n,\psi_{n,k}^{\otimes t(n)}).
    \]
\end{enumerate}
\end{definition}
\section{Computable and Samplable States}
\label{owsg:model}

Here $n$ denotes the security parameter. All algorithms are uniform.

\begin{definition}[Computable states]
\label{def:computable-states}
Let $\ell:\mathbb{N}\to\mathbb{N}$ be a polynomially bounded,
polynomial-time computable function. For each $n\ge1$, let
$\mathcal{K}_n\subseteq\{0,1\}^{\ell(n)}$ be nonempty, with membership
decidable in polynomial time.
Consider a family
\[
    \Psi=\{\Psi_n\}_{n\ge1},
    \qquad
    \Psi_n=\{|\psi_{n,k}\rangle\}_{k\in\mathcal{K}_n}
\]
of normalized $n$-qubit pure states, with amplitudes
$a_{n,k}(x):=\langle x|\psi_{n,k}\rangle$. Note that we fix a global phase for each state.

We call $\Psi$ \emph{computable} if the following conditions hold.
\begin{enumerate}
    \item \textbf{Quantum generation.}
    A quantum polynomial-time algorithm $\mathsf{Gen}$, on input
    $(1^n,k)$, outputs
    $|\psi_{n,k}\rangle$ for every
    $n\ge1$ and $k\in\mathcal{K}_n$.

    \item \textbf{Classical amplitude computation.}
    A deterministic classical algorithm $\mathsf{Amp}$, on input
    $(k,x,1^b)$, runs in polynomial time
    and outputs a complex number $\widetilde a$  satisfying
    \[
        |\widetilde a-a_{n,k}(x)|\le2^{-b}
    \]
    for every $n,b\ge1$, $k\in\mathcal{K}_n$, and
    $x\in\{0,1\}^{n}$.
\end{enumerate}
\end{definition}

\begin{definition}[Computable samplable states]
\label{def:computable-samplable-states}
A computable state family $\Psi$ is \emph{computable samplable}
if there exists a PPT algorithm $\mathsf{Samp}$
with the following property. For every $n,t\ge1$ and
$k\in\mathcal{K}_n$, the algorithm $\mathsf{Samp}(1^n,k,1^t)$ outputs a string in
$\{0,1\}^{n}$ whose distribution $\widetilde p_{n,k,t}$ satisfies
\[
    \operatorname{TV}(\widetilde p_{n,k,t},p_{n,k})\le\frac1t,
    \qquad
    p_{n,k}(x):=p_{\psi_{n,k}}(x)=|a_{n,k}(x)|^2.
\]
\end{definition}

\begin{definition}[Computable and samplable OWSGs]
\label{def:structured-owsg}
Let
$\Pi=(\mathsf{KeyGen},\mathsf{StateGen},\mathsf{Ver})$
be a pure-state OWSG as in Definition~\ref{def:owsg}, with output family
$\Psi=\{\Psi_n\}_{n\ge1}$, where
\[
    \Psi_n
    =
    \{|\psi_{n,k}\rangle:k\in\mathcal K_n\}.
\]
We call $\Pi$
\begin{enumerate}
    \item a \emph{computable OWSG} if $\Psi$ is \textit{computable} in the sense
    of Definition~\ref{def:computable-states};
    \item a \emph{computable samplable OWSG} if $\Psi$ is \textit{computable samplable} in the sense
    of Definition~\ref{def:computable-samplable-states} and $\mathsf{KeyGen}$ is a classical PPT algorithm.
\end{enumerate}
\end{definition}
 \section{\texorpdfstring{$\NP$}{NP}-aided tomography of computable states}

\label{sec:computable-pru-avgnp}
\label{owsg:section}



In this section we show efficient $\NP$-aided  tomography algorithms for computable states. The algorithms receive polynomially many copies of an unknown computable state $\ket{\psi_k}$ indexed by $k$ in keyspace $\mathcal{K}$ and output $h \in \mathcal{K}$ such that $\ket{\psi_k}$ and $\ket{\psi_h}$ have high overlap. The algorithms  only make classical queries to $\NP$.

\begin{theorem}[\(\NP\)-oracle recovery of a high-fidelity key]
\label{thm:owsg:learning}
Let $\Psi$ be a computable state family (Definition~\ref{def:computable-states}).
There exists a uniform quantum algorithm $\mathsf{Learn}$ with
only classical access to an $\NP$ oracle satisfying the following
guarantee.

For every $n\ge 1$, $k\in\mathcal{K}_n$, and accuracy pair
$0<\varepsilon,\delta<1/4$, the algorithm receives
$(1^n,\varepsilon,\delta)$ and copies of $|\psi_{n,k}\rangle$.
It uses
\[
    O\!\left(
        \frac{\ell(n)+\log(1/\delta)}{\varepsilon}
    \right)
\]
copies, outputs a key $h\in\mathcal{K}_n$ such that
\begin{equation}
    \Pr\!\left[
        |\langle\psi_{n,h}|\psi_{n,k}\rangle|^2
        \ge 1-\varepsilon
    \right]
    \ge 1-\delta.
    \label{owsg:eq-9}
\end{equation}
and runs in time $
    \operatorname{poly}\!\left(
        n,\frac1\varepsilon,\log\frac1\delta
    \right)
$.
\end{theorem}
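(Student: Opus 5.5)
The proof follows the two-stage maximum-likelihood strategy from the technical overview, with each stage analysed through a generic ``NP-aided MLE'' lemma. First we prove that lemma. Let $\{P_h\}_{h\in\mathcal K_n}$ be distributions over transcripts whose probabilities are computable to $\poly$ bits given $h$. Suppose we draw $s=O((\ell(n)+\log(1/\delta))/\varepsilon)$ i.i.d.\ samples from $P_k$ and output any $h$ whose likelihood is at least $2^{-O(s\varepsilon)}$ times the maximum over $\mathcal K_n$. Then with probability $1-\delta$ the output satisfies $H^2(P_h,P_k)\le O(\varepsilon)$.

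The argument is the standard Hellinger bound for MLE over a finite class of size $2^{\ell}$. For any fixed $h$ with $B(P_h,P_k)\le 1-c\varepsilon$, we have
\[
\mathbb{E}\prod_i\sqrt{P_h(X_i)/P_k(X_i)}=B(P_h,P_k)^s\le e^{-cs\varepsilon}.
\]
Markov's inequality plus a union bound over $2^{\ell}$ keys therefore rules out all bad keys (Lemma~\ref{lem:qsample-identities} converts $B$ to $H$). Such an approximate maximiser is found with an $\NP$ oracle by binary search on a likelihood threshold. The query is ``$\exists h\in\mathcal K_n$ with $\sum_i \log \widetilde P_h(X_i)\ge \tau$'', which is in $\NP$ because $\mathsf{Amp}$ evaluates each factor to accuracy $2^{-b}$. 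The key itself is then recovered by the standard bit-by-bit search-to-decision reduction. Precision issues are handled by flooring probabilities at $2^{-\poly}$ and absorbing the rounding into the slack factor.

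\textbf{Stage 1.} Measure $s$ copies in the computational basis and apply the lemma to $P_h=p_{n,h}$. This yields $k_0$ with $H(p_{k_0},p_k)\le O(\sqrt{\varepsilon})$.

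\textbf{Stage 2.} Prepare the reference state $|\psi_{k_0}^+\rangle$ by running $\mathsf{Gen}(k_0)$ and coherently computing the phase of $a_{k_0}(x)$ with $\mathsf{Amp}$, then applying the inverse phase and uncomputing. The uncompute is accurate to $2^{-b}$ per amplitude. Next run the controlled-SWAP experiment on (copy of $\psi_k$, reference). Measure both registers to get $(x,y)$ and measure the ancilla in a uniformly random basis from $\{|0\rangle\pm|1\rangle\}$, $\{|0\rangle\pm i|1\rangle\}$. For every candidate $h$, define the ideal transcript distribution $Q_h$ as the distribution of this experiment run on $|\psi_h\rangle$ with reference $|\psi_h^+\rangle$. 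It is computable from $h$ alone via $\mathsf{Amp}$.

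The real transcript is generated with reference $\psi_{k_0}^+$ rather than $\psi_k^+$. The two references differ in Euclidean norm by $H(p_{k_0},p_k)=O(\sqrt\varepsilon)$ (Lemma~\ref{lem:qsample-identities}), so the per-sample distribution is $O(\sqrt{\varepsilon})$-close in TV to $Q_k$. This misspecification is the first delicate point. The MLE lemma must be made robust, or run with $\varepsilon'=\varepsilon^2$ in Stage 1, so that $s$ samples from the slightly wrong distribution still select an $h$ with $H(Q_h,Q_k)$ small. The cleanest way to do this is to compare the product distributions via Hellinger tensorisation and accept a polynomial blow-up. The copy count in the theorem suggests instead a robust MLE argument in which Stage 1 only needs constant-factor accuracy; we would first try that.

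\textbf{Closing step (expected main obstacle).} We must show that $B(Q_h,Q_k)\ge 1-O(\varepsilon)$ implies $|\langle\psi_h|\psi_k\rangle|^2\ge1-\varepsilon$. Writing $a_h=\sqrt{p_h}e^{i\theta_h}$, the $(x,y)$-marginal of $Q_h$ is $p_h\times p_h$, so closeness of marginals gives $H(p_h,p_k)$ small. Conditioned on $(x,y)$, the ancilla outcome probabilities are $\tfrac12(1\pm\cos)$ and $\tfrac12(1\pm\sin)$ of $\theta(x)-\theta(y)$. Closeness therefore forces $\mathbb{E}_{x,y\sim p\times p}|e^{i(\Delta_h(x)-\Delta_h(y))}-1|^2$ to be small, where $\Delta_h=\theta_h-\theta_k$. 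Define $u=\sum_x p(x)e^{i\Delta_h(x)}$. Then
\[
|u|^2=\mathbb{E}_{x,y}e^{i(\Delta(x)-\Delta(y))}=1-\tfrac12\mathbb{E}|e^{i(\Delta(x)-\Delta(y))}-1|^2,
\]
and $|\langle\psi_h|\psi_k\rangle|$ differs from $|u|$ by at most $O(H(p_h,p_k))$. Making this quantitative is the hardest part. The conditional Hellinger distance between two-outcome qubit measurements must be lower-bounded by a constant times $|e^{i\alpha}-e^{i\beta}|^2$ uniformly in the angle, including near outcome probabilities $0$ or $1$ where Hellinger is not quadratic in the angle. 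We also need the constants to combine into the clean $1-\varepsilon$ bound, rescaling $\varepsilon$ by a constant at the start. Finally, the total copy count is $O(s)$ across both stages, plus the Stage 2 reference copies (which are prepared ourselves), and the running time is polynomial.
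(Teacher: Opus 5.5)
There is a genuine gap in Stage~2. You fit the transcript to the matched family $Q_h=R_{h,h}$ (state $\psi_h$ against its own reference $\psi_h^+$), but the data actually come from $R_{k,k_0}$. The maximum-likelihood step is therefore misspecified, and none of your repairs yields the theorem as stated.

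Misspecification can break MLE outright. If $p_{k_0}$ charges some $y\notin\mathrm{supp}(p_k)$, then $R_{k,k_0}$ produces outcomes with that $y$ on which $Q_k$ vanishes. These outcomes carry mass $\tfrac12p_{k_0}(\mathrm{supp}(p_k)^c)$, which can be of order $H^2(p_k,p_{k_0})$. So once $s\,H^2(p_k,p_{k_0})\gtrsim1$, the true key can have likelihood zero and the maximizer may be an unrelated key.

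Running Stage~1 at accuracy $\varepsilon^2$ does not help. The $s$-sample transcript is only within $\sqrt{s\gamma}$ of $Q_k^{\otimes s}$, where $\gamma$ is the Stage-1 squared-Hellinger accuracy, and $\varepsilon\sqrt s=\Theta(\sqrt{\varepsilon(\ell+\log(1/\delta))})$ is not small in general. Tensorisation works only with $\gamma\lesssim\delta^2/s$. That gives $O((\ell+\log(1/\delta))^2/(\varepsilon\delta^2))$ copies and $\poly(1/\delta)$ time. This is exactly what the paper accepts for its posterior-sampling variant with the same matched family (Corollary~\ref{owf:posterior-parameters}). It is not the $O((\ell+\log(1/\delta))/\varepsilon)$ copies and $\poly(\log(1/\delta))$ time claimed here. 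Your ``robust MLE'' alternative is left unspecified, and standard robust estimators (Scheff\'e tournaments, $\rho$-estimators) use min--max comparisons or exponential sums that are not evidently a single $\NP$ search.

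The missing idea is to keep the second stage well-specified: fit the family $\{R_{j,k_0}\}_j$, i.e., candidate $\psi_j$ measured against the same known reference $\psi_{k_0}^+$ you actually prepare. These probabilities are computable from $(j,k_0)$ via $\mathsf{Amp}$ and \eqref{owsg:eq-8}, and the transcript is within $s\zeta$ of $R_{k,k_0}^{\otimes s}$. Lemma~\ref{owsg:mle} then gives $H^2(R_{k,k_0},R_{h,k_0})\le\varepsilon/400$ with probability $1-\delta/2$ using the stated copy count, and Stage~1 needs only the same $O(\varepsilon)$ accuracy.

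The cost moves to the closing step, which must now handle a common reference $q=p_{k_0}$ matched to neither state; this is Theorem~\ref{owsg:fidelity}. Its new ingredient is that the $x$-marginal of $R_{\phi,q}$ is $(r+q)/2$. Hence $g=H(p,q)$ and $t=H(R_{\psi,q},R_{\phi,q})$ force $H(r,q)\le4(t+g)$ (Lemma~\ref{owsg:magnitudes}). Then $H(R_{\chi,b},R_{\chi,c})\le\|\ket{b}-\ket{c}\|_2$ (Lemma~\ref{owsg:reference-stability}) swaps $q$ for $\psi^+$ and $\phi^+$ at total cost $5(t+g)$, reducing to your matched closing step.

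That matched step (Lemma~\ref{owsg:matched-fidelity}) is easier than you fear. Applying $\operatorname{TV}\le H$ for each fixed basis choice $\alpha$ shows the conditional ancilla distributions for relative phases $\vartheta,\omega$ have squared Hellinger distance at least $(1-\cos(\vartheta-\omega))/4$, uniformly in the angles. The $(x,y)$ contribution equals exactly $(a-b)^2+ab$ times that conditional distance, with $a=\sqrt{p(x)p(y)}$ and $b=\sqrt{r(x)r(y)}$, which absorbs the magnitude mismatch. Summing over pairs gives $1-|\langle\phi|\psi\rangle|^2\le4H^2(R_{\psi,p},R_{\phi,r})$.
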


\noindent Before describing the algorithm $\mathsf{Learn}$ and proving Theorem \ref{thm:owsg:learning}, we first introduce some notation and describe procedures that will be used as subroutines.
Fix the security parameter $n$ and write
\[
    \mathcal{K}=\mathcal{K}_n,\qquad L=\ell(n),\qquad
    \mathcal{X}=\{0,1\}^{n},\qquad D=|\mathcal{X}|=2^{n}.
\]
We suppress $n$ in the notation when clear from context. As in
Definitions~\ref{def:basis-distribution} and
\ref{def:positive-amplitude-state}, for $h \in \mathcal{K}$, for $a_{h}(x):=\langle x|\psi_{h}\rangle$,
\begin{align}
    p_h(x)&:=p_{\psi_h}(x)=|a_h(x)|^2,
    \label{owsg:eq-2}\\
    |\psi_h^+\rangle&=|p_h^+\rangle
      =\sum_{x\in\mathcal{X}}|a_h(x)|\,|x\rangle.
    \label{owsg:eq-3}
\end{align}


\paragraph{Controlled-swap measurement.}
The measurement $\mathcal{M}$ takes two registers $A,B$ supported on $\mathcal{X}$ and performs the following steps.
\begin{enumerate}
    \item Prepare an ancilla $C$ in $|+\rangle$. Apply a SWAP between
    $A$ and $B$, controlled on $C$ being $1$.
    \item Choose $\alpha\in\{0,\pi/2\}$ uniformly. Measure $C$ in the
    orthonormal basis
    \[
        \left\{
            \frac{|0\rangle+ e^{i\alpha}|1\rangle}{\sqrt2}, \frac{|0\rangle-e^{i\alpha}|1\rangle}{\sqrt2}
        \right\},
    \]
    recording the outcome $\sigma \in \{\pm 1\}$.
    \item Measure $A,B$ in the computational basis, obtaining the
    ordered pair $(x,y)$. Output $(\alpha,\sigma,x,y)$.
\end{enumerate}
The outcome space is
\[
    \mathcal{Z}
    :=\{0,\pi/2\}\times\{-1,1\}\times\mathcal{X}\times\mathcal{X}.
\]
For a normalized state $|\psi\rangle$ on $\mathcal{X}$ and a
probability distribution $q$ on $\mathcal{X}$, let $R_{\psi,q}$ be the
output distribution of $\mathcal{M}$ on
$|\psi\rangle_A\otimes|q^+\rangle_B$.
For $j,k_0\in\mathcal{K}$, define
\begin{equation}
    R_{j,k_0}:=R_{\psi_j,p_{k_0}}.
    \label{owsg:eq-7}
\end{equation}

\begin{lemma}[Outcome distribution]
\label{owsg:outcome-distribution}
For $|\psi\rangle=\sum_x\psi(x)|x\rangle$ and
$(\alpha,\sigma,x,y)\in\mathcal{Z}$,
\begin{equation}
    R_{\psi,q}(\alpha,\sigma,x,y)
    =\frac18\left|
        \psi(x)\sqrt{q(y)}
        +\sigma e^{-i\alpha}\sqrt{q(x)}\psi(y)
    \right|^2.
    \label{owsg:born}
\end{equation}
In particular, for $j,k_0\in\mathcal{K}$,
\begin{equation}
    R_{j,k_0}(\alpha,\sigma,x,y)
    =\frac18\left|
        a_j(x)|a_{k_0}(y)|
        +\sigma e^{-i\alpha}|a_{k_0}(x)|a_j(y)
    \right|^2.
    \label{owsg:eq-8}
\end{equation}
\end{lemma}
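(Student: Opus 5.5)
The plan is a direct Born-rule calculation: track the joint state through each step of $\mathcal{M}$, then project onto a fixed outcome. Start from $|+\rangle_C\otimes|\psi\rangle_A\otimes|q^+\rangle_B$ and expand both registers in the computational basis:
\[
    \frac{1}{\sqrt2}\sum_{u,v}\psi(u)\sqrt{q(v)}\bigl(|0\rangle|u\rangle|v\rangle+|1\rangle|v\rangle|u\rangle\bigr),
\]
where in the $|1\rangle$ branch the controlled SWAP exchanges the contents of $A$ and $B$.

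Next, project $A,B$ onto $|x\rangle|y\rangle$. The $|0\rangle$ branch contributes only $u=x,v=y$. The $|1\rangle$ branch needs $v=x,u=y$. This leaves the unnormalized ancilla vector
\[
    \frac{1}{\sqrt2}\bigl(\psi(x)\sqrt{q(y)}\,|0\rangle+\psi(y)\sqrt{q(x)}\,|1\rangle\bigr).
\]
The measurements on $C$ and on $AB$ act on disjoint registers, so the order of steps 2 and 3 does not affect the joint outcome probabilities.

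Then project the ancilla onto the basis vector $\frac{1}{\sqrt2}(|0\rangle+\sigma e^{i\alpha}|1\rangle)$. The bra is $\frac{1}{\sqrt2}(\langle0|+\sigma e^{-i\alpha}\langle1|)$, where $\sigma=\pm1$ is real so only $e^{i\alpha}$ is conjugated. The inner product is
\[
    \frac12\bigl(\psi(x)\sqrt{q(y)}+\sigma e^{-i\alpha}\sqrt{q(x)}\psi(y)\bigr).
\]
Squaring its modulus gives a factor $1/4$. Multiplying by the factor $1/2$ from the uniform choice of $\alpha\in\{0,\pi/2\}$ yields the overall $1/8$ in \eqref{owsg:born}. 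As a sanity check, summing over $\sigma$ cancels the cross terms, and summing over $\alpha$ and $x,y$ then gives total probability $1$.

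Finally, substitute into \eqref{owsg:born} to get \eqref{owsg:eq-8}: take $\psi=\psi_j$ and $q=p_{k_0}$, and use $\sqrt{p_{k_0}(z)}=|a_{k_0}(z)|$. There is no real obstacle here. The only points needing care are the conjugation convention, which accounts for the $e^{-i\alpha}$, and which index the SWAP branch pairs with which amplitude.
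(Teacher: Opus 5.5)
Your proposal is correct and follows the paper's proof: both write the state after the controlled SWAP, read off the amplitude $\frac12\bigl(\psi(x)\sqrt{q(y)}+\sigma e^{-i\alpha}\sqrt{q(x)}\psi(y)\bigr)$ for outcome $(\sigma,x,y)$ at fixed $\alpha$, square its modulus, and multiply by $1/2$ for the uniform choice of $\alpha$. Your extra points are correct details that the paper leaves implicit: the measurements on $C$ and on $AB$ act on disjoint registers and so commute, the conjugation convention produces the $e^{-i\alpha}$, and the probabilities sum to one.
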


\begin{proof}
After the controlled SWAP, the joint state is
\[
    \frac{
        |0\rangle|\psi\rangle|q^+\rangle
        +|1\rangle|q^+\rangle|\psi\rangle
    }{\sqrt2}.
\]
For fixed $\alpha$, its amplitude for outcome $(\sigma,x,y)$ is
\[
    \frac12\left(
        \psi(x)\sqrt{q(y)}
        +\sigma e^{-i\alpha}\sqrt{q(x)}\psi(y)
    \right).
\]
Squaring the modulus and multiplying by the probability $1/2$ of
choosing $\alpha$ proves \eqref{owsg:born}. Substituting
$\psi=\psi_j$ and $q=p_{k_0}$ gives \eqref{owsg:eq-8}.
\end{proof}
\paragraph{Learning Algorithm.}
On input $(1^n,\varepsilon,\delta)$ and copies of an unknown state,
$\mathsf{Learn}^{\NP}$ uses the following parameters:
\begin{align}    s&:=\left\lceil\frac{800}\varepsilon
        \left(L+\left\lceil\log_2\frac{16}{\delta}\right\rceil\right)
        \right\rceil,
    \label{owsg:precision}\\
    \eta_1&:=\frac{\delta}{24sD^s},\qquad
    \eta_2:=\frac{\delta}{24s(4D^2)^s},
    \label{owsg:probability-tolerances}\\
    \zeta&:=\frac{\delta}{6s}.
    \label{owsg:eq-86}
\end{align}
The algorithm uses exactly $2s$ input copies.
\begin{enumerate}
    \item \textbf{Find match for basis distribution.}
    Measure the first $s$ copies in the computational basis, obtaining
    $X_1,\ldots,X_s$. For any $j,x$ let
    $\widetilde p_j(x) \in [0,1]$ represent $\pm\eta_1$ approximations of $p_j(x)$ computed using $\mathsf{Amp}$, i.e. \[
    |\widetilde p_j(x)-p_j(x)|\le \eta_1.
    \]
    Use classical $\NP$ queries to binary search and find the lexicographically first
    maximizer $k_0\in\mathcal{K}$ of
    \begin{equation}
        \prod_{i=1}^s\widetilde p_j(X_i)
        \qquad\text{over }j\in\mathcal{K}.
        \label{owsg:eq-12}
    \end{equation}

    \item \textbf{Prepare reference states.}
Prepare $|\psi_{k_0}\rangle$ using $\mathsf{Gen}(1^n,k_0)$.
Coherently for each $x$ in the superposition, compute amplitude approximations $\widetilde a_{k_0}(x)$ satisfying
\[
    |\widetilde a_{k_0}(x)-a_{k_0}(x)|\le\frac{\zeta}{16\sqrt D},
\]
 apply the phase $e^{-i\arg (\widetilde a_{k_0}(x))}$ (with $\arg(0):=0$), and
uncompute the auxiliary registers. Implement this phase operation
to operator-norm error at most $\zeta/2$. As we show later, this allows the algorithm to obtain trace distance $\zeta$ approximations of
$|\psi_{k_0}^+\rangle$ which we refer to as reference states.

    \item \textbf{Collect measurement $\mathcal{M}$ outcomes.}
    For each of the remaining $s$ copies, independently prepare a fresh
    reference state by Step~2 and apply $\mathcal{M}$ to the copy
    and reference. Record the outcomes $Z_1,\ldots,Z_s$.
\item \textbf{Find match for $\mathcal{M}$ outcome distribution.}
For any $j,z$, let $\widetilde R_{j,k_0}(z)\in[0,1]$ represent an
additive $\eta_2$ approximation of $R_{j,k_0}(z)$ computed using
$\mathsf{Amp}$ and \eqref{owsg:eq-8}, i.e.,
\[
    |\widetilde R_{j,k_0}(z)-R_{j,k_0}(z)|\le\eta_2.
\]
Use classical $\NP$ queries to perform binary search and find the lexicographically first
maximizer $h\in\mathcal{K}$ of
\begin{equation}
    \prod_{i=1}^s\widetilde R_{j,k_0}(Z_i)
    \qquad\text{over }j\in\mathcal{K}.
    \label{owsg:eq-15}
\end{equation}
Return $h$.
Note that the formula for $R_{j,k_0}(z)$ involves only constantly
many sums, products, and moduli of bounded amplitudes, so additive errors are amplified by at most a constant factor and the above precision can be achieved using $\mathsf{Amp}$.

\end{enumerate}
\paragraph{Analysis.}
\label{owsg:analysis}
Fix the unknown key $k\in\mathcal{K}$. We first show that
the basis distributions of $\ket{\psi_{k_0}}$ and $\ket{\psi_k}$ match with high probability, i.e. that $\ket{\psi^+_{k_0}}$ and $\ket{\psi^+_k}$ are close. Then we show that with high probability for the $h$ the algorithm obtains, the distributions we would obtain by applying $\mathcal{M}$ to $\ket{\psi_k}$ and to $\ket{\psi_h}$ (in each case using the reference state generated from $k_0$ in step 2) are close. In other words, the $\NP$ searches for the maximum likelihood distribution do indeed produce keys ($k_0$ and $h$) that produce similar distributions as the hidden key $k$. These follow primarily from the statistical properties of maximum likelihood. The challenge is then to show that any state producing similar distributions must also be close to  $\ket{\psi_k}$ in fidelity.\\

\noindent We begin by proving a strong success lemma for maximum likelihood search.

\begin{lemma}
\label{owsg:mle}
Let $J$ be a nonempty finite ordered set with $|J|\le M$. For each
$j\in J$, let $P_j$ be a probability distribution on a set of size $N$.
Suppose numbers
$\widetilde P_j(z)\in[0,1]$ satisfy
\[
    |\widetilde P_j(z)-P_j(z)|\le\eta
    \qquad\text{for every }j,z,
\]
where $\eta>0$. Let $s\ge1$, and suppose the joint distribution of
$Z=(Z_1,\ldots,Z_s)$ is within total variation distance $\xi$ of
$P_k^{\otimes s}$ for some $k\in J$. Let $h$ be the lexicographically first maximizer of
\[
    \prod_{i=1}^s\widetilde P_j(Z_i)
    \qquad\text{over }j\in J.
\]
For $0<\gamma<2$,
\begin{equation}
    \Pr[H^2(P_k,P_h)>\gamma]
    \le4sN^s\eta+\sqrt2\,M e^{-s\gamma/2}+\xi.
    \label{owsg:eq-37}
\end{equation}
In particular, this probability is at most $\beta$ for $0<\beta<1$ if
\begin{align}
    s&\ge\frac2\gamma
        \left(\ln M+\ln\frac{3\sqrt2}{\beta}\right),
        \label{owsg:eq-25}\\
    \eta&\le\frac{\beta}{12sN^s},
        \label{owsg:eq-26}\\
    \xi&\le\frac\beta3.
        \label{owsg:eq-27}
\end{align}
\end{lemma}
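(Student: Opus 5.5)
The plan is the standard Bhattacharyya/Chernoff argument for maximum likelihood. The only extra work is to absorb the additive approximation error $\eta$ and the total-variation slack $\xi$.

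\textbf{Reduction to exact i.i.d.\ samples.} The event $\{H^2(P_k,P_h)>\gamma\}$ is a deterministic function of $Z$, because $h$ is defined deterministically as the lexicographically first maximizer. Replacing the law of $Z$ by $P_k^{\otimes s}$ therefore changes its probability by at most $\xi$. From now on I take $Z\sim P_k^{\otimes s}$ and write
\[
\Pi_j(Z)=\prod_{i=1}^s P_j(Z_i),\qquad \widetilde\Pi_j(Z)=\prod_{i=1}^s\widetilde P_j(Z_i).
\]
All factors lie in $[0,1]$, so a telescoping bound gives $|\widetilde\Pi_j-\Pi_j|\le s\eta$.

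\textbf{Decomposing the bad event.} Call $j$ \emph{bad} if $H^2(P_k,P_j)>\gamma$. If $h$ is bad, then $\widetilde\Pi_h\ge\widetilde\Pi_k$, and hence $\Pi_k\le\Pi_h+2s\eta$. I will split according to whether $\Pi_k(Z)\le 4s\eta$.
\begin{itemize}
\item The event $\{\Pi_k(Z)\le 4s\eta\}$ does not depend on $j$. Summing $P_k^{\otimes s}(z)$ over all $z$ in this event gives at most $N^s\cdot 4s\eta$, since there are at most $N^s$ sequences and each contributes at most $4s\eta$.
\item Otherwise $2s\eta<\Pi_k/2$, so $\Pi_k\le\Pi_h+2s\eta$ forces $\Pi_k\le 2\Pi_h$.
\end{itemize}

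\textbf{Bounding a single bad $j$.} For a fixed bad $j$, I bound $\Pr[\Pi_k\le2\Pi_j]$ by Markov's inequality applied to $\sqrt{2\Pi_j/\Pi_k}$ on the support of $P_k^{\otimes s}$:
\[
\Pr[\Pi_k\le 2\Pi_j]\le \mathbb E\sqrt{2\Pi_j/\Pi_k}\le\sqrt2\,B(P_k,P_j)^s .
\]
The expectation equals $\sqrt2\,B(P_k,P_j)^s$ once the sum is restricted to the support of $P_k$, and it is at most that in general. By Lemma~\ref{lem:qsample-identities},
\[
B(P_k,P_j)=1-H^2(P_k,P_j)/2<1-\gamma/2,
\]
which is positive because $\gamma<2$. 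Hence $B^s\le e^{-s\gamma/2}$.

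\textbf{Assembling the bound.} A union bound over the at most $M$ bad candidates gives $\sqrt2\,Me^{-s\gamma/2}$. Adding the two error terms $4sN^s\eta$ and $\xi$ yields \eqref{owsg:eq-37}.

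\textbf{The parameter conditions.} The ``in particular'' clause is arithmetic: each of the three terms is at most $\beta/3$.
\begin{itemize}
\item \eqref{owsg:eq-25} gives $Me^{-s\gamma/2}\le\beta/(3\sqrt2)$, so the middle term is at most $\beta/3$.
\item \eqref{owsg:eq-26} gives $4sN^s\eta\le\beta/3$.
\item \eqref{owsg:eq-27} gives $\xi\le\beta/3$ directly.
\end{itemize}

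\textbf{Main obstacle.} The delicate step is handling the additive approximation error. A relative error guarantee is unavailable because probabilities can be tiny. The resolution is to charge all low-likelihood samples globally, paying $N^s\cdot 4s\eta$ once rather than per candidate. This is why the exponentially small tolerances $\eta_1,\eta_2$ appear in the algorithm.
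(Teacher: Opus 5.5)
Your proof is correct and follows essentially the same route as the paper's. Both use the telescoping $s\eta$ bound, charge the low-likelihood set $\{\Pi_k\le 4s\eta\}$ globally at cost $4sN^s\eta$, bound $\Pr[\Pi_k\le 2\Pi_j]\le\sqrt2\,B(P_k,P_j)^s$ (your Markov step on $\sqrt{2\Pi_j/\Pi_k}$ is the same computation as the paper's pointwise inequality $\mathcal{L}_k\le\sqrt2\sqrt{\mathcal{L}_k\mathcal{L}_j}$ summed over $z$), take a union bound over at most $M$ candidates, and add $\xi$ for the change of law; the only cosmetic difference is that you pass to exact i.i.d.\ samples at the start rather than at the end.
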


\begin{proof}
For a sequence $z=(z_1,\ldots,z_s)$, write
\[
    \mathcal{L}_j(z):=\prod_{i=1}^sP_j(z_i),\qquad
    \widetilde{\mathcal{L}}_j(z):=\prod_{i=1}^s\widetilde P_j(z_i).
\]
Since all factors lie in $[0,1]$, telescoping gives
\begin{equation}
    |\widetilde{\mathcal{L}}_j(z)-\mathcal{L}_j(z)|\le s\eta.
    \label{owsg:eq-31}
\end{equation}
First take $Z\sim P_k^{\otimes s}$. Define
$E=\{z:\mathcal{L}_k(z)<4s\eta\}$. There are at most $N^s$ sequences, so
\[
    \Pr[Z \in E]\le4sN^s\eta.
\]
Outside $E$, maximality of the computed likelihood implies
\begin{equation}
    \mathcal{L}_h(Z)\ge\widetilde{\mathcal{L}}_h(Z)-s\eta
    \ge\widetilde{\mathcal{L}}_k(Z)-s\eta
    \ge \mathcal{L}_k(Z)-2s\eta\ge\frac12\mathcal{L}_k(Z).
    \label{owsg:eq-34}
\end{equation}

For each fixed $j$, the inequality $\mathcal{L}_j(z)\ge \mathcal{L}_k(z)/2$ implies
$\mathcal{L}_k(z)\le\sqrt2\sqrt{\mathcal{L}_k(z)\mathcal{L}_j(z)}$. Consequently, by union bound over $z$,
\begin{align}
    \Pr\!\left[\mathcal{L}_j(Z)\ge\frac12\mathcal{L}_k(Z)\right]
    &\le\sqrt2\sum_z\sqrt{\mathcal{L}_k(z)\mathcal{L}_j(z)}\notag\\
    &=\sqrt2\,B(P_k,P_j)^s.
    \label{owsg:eq-35}
\end{align}
If $H^2(P_k,P_j)>\gamma$, then
\[
    B(P_k,P_j)=1-\frac12H^2(P_k,P_j)
    <1-\frac\gamma2\le e^{-\gamma/2}.
\]
A union bound over at most $M$ candidates, together with
\eqref{owsg:eq-34}, bounds the failure probability by
$4sN^s\eta+\sqrt2\,M e^{-s\gamma/2}$.
Changing the joint observation distribution by total variation
$\xi$ adds at most $\xi$ to this probability. This proves
\eqref{owsg:eq-37}. Conditions~\eqref{owsg:eq-25}--\eqref{owsg:eq-27}
bound each of its three terms by $\beta/3$.
\end{proof}

First we use Lemma \ref{owsg:mle} to show that $p_k$ and $p_{k_0}$ are close with high probability. 
Apply the lemma to $P_j=p_j$, with
$J=\mathcal{K}$, $M=|\mathcal{K}|$, $N=D$, and
\[
    \gamma=\frac{\varepsilon}{400},\qquad
    \beta=\frac\delta2,\qquad \eta=\eta_1,\qquad \xi=0.
\]
Plugging in the parameter values gives gives
\begin{equation}
    \Pr\!\left[H^2(p_k,p_{k_0})>\frac{\varepsilon}{400}\right]
    \le\frac\delta2.
    \label{owsg:good-first}
\end{equation}
Next we show that $R_{k,k_0}$ and $R_{h,k_0}$ are close with high probability. Fix any $k_0$ obtained in the first step. 
If the reference states were prepared exactly, Step~3 would by definition produce independent samples from
$R_{k,k_0}$. Although the reference states are not exact, we may bound the error as follows.

\begin{lemma}[Accuracy of reference preparation]
\label{owsg:positive}
For every $k_0\in\mathcal{K}$, Step~2 prepares a reference
$\rho_{k_0,\zeta}^+$ satisfying
\begin{equation}
    T\bigl(\rho_{k_0,\zeta}^+,
      |\psi_{k_0}^+\rangle\langle\psi_{k_0}^+|\bigr)\le\zeta,
    \label{owsg:eq-39}
\end{equation}
where $T(\rho,\sigma):=\tfrac12\|\rho-\sigma\|_1$.
\end{lemma}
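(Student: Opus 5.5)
I would compare three states: the ideal reference $|\psi_{k_0}^+\rangle=\sum_x|a_{k_0}(x)||x\rangle$; the \emph{ideal-phase output} $|\phi\rangle:=V|\psi_{k_0}\rangle$, where $V$ is the exact diagonal unitary $|x\rangle\mapsto e^{-i\arg\widetilde a_{k_0}(x)}|x\rangle$; and the state $\rho^+_{k_0,\zeta}$ that Step~2 actually produces. The proof then has two error terms. The first, $T(|\phi\rangle\langle\phi|,|\psi_{k_0}^+\rangle\langle\psi_{k_0}^+|)$, comes from the amplitudes being only approximate. The second comes from implementing $V$ (including computing and uncomputing the auxiliary registers) only to operator-norm error $\zeta/2$. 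The goal is to bound the first by $\zeta/2$ as well and then apply the triangle inequality for trace distance.

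\textbf{Implementation error.} The implemented channel acts on the pure input $|\psi_{k_0}\rangle$ with ancillas in a fixed state. It applies some unitary $W$ with $\|W-V\otimes I\|_{\mathrm{op}}\le\zeta/2$, where $V\otimes I$ returns the ancillas clean. Then
\[
\bigl\|W|\psi_{k_0}\rangle|0\rangle-|\phi\rangle|0\rangle\bigr\|_2\le\zeta/2 .
\]
By the preliminary bound $T\le\|\,|u\rangle-|v\rangle\|_2$, followed by tracing out the ancilla (trace distance is contractive), the output is within $\zeta/2$ of $|\phi\rangle\langle\phi|$.

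\textbf{Amplitude error.} We have $\langle x|\phi\rangle=a_{k_0}(x)e^{-i\arg\widetilde a(x)}$, so I would bound
\[
\bigl\||\phi\rangle-|\psi^+_{k_0}\rangle\bigr\|_2^2=\sum_x\bigl|a(x)e^{-i\arg\widetilde a(x)}-|a(x)|\bigr|^2 .
\]
This rests on a pointwise claim: for complex $a,\widetilde a$ with $|\widetilde a-a|\le\tau$,
\[
\bigl|a\,e^{-i\arg\widetilde a}-|a|\bigr|\le2\tau .
\]
To prove it, note the left side equals $\bigl|a-|a|e^{i\arg\widetilde a}\bigr|$, which is at most
\[
|a-\widetilde a|+\bigl|\widetilde a-|a|e^{i\arg\widetilde a}\bigr|=|a-\widetilde a|+\bigl||\widetilde a|-|a|\bigr|\le2\tau .
\]
The claim also holds when $\widetilde a=0$ with the convention $\arg 0=0$: then $|a|\le\tau$, and the left side is $\bigl|a-|a|\bigr|\le2|a|\le2\tau$.

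Applying the claim with $\tau=\zeta/(16\sqrt D)$ gives a sum of at most $D\cdot4\tau^2=\zeta^2/64$. The Euclidean distance is therefore at most $\zeta/8$, and hence the trace distance is at most $\zeta/8\le\zeta/2$. Summing the two terms gives total trace distance at most $\zeta$, which is \eqref{owsg:eq-39}.

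\textbf{Main obstacle.} The delicate step is the implementation error, because the definition of ``operator-norm error at most $\zeta/2$'' has to be pinned down. I would interpret it as the composite compute--phase--uncompute circuit being $\zeta/2$-close in operator norm to $V\otimes I$ on the relevant ancilla-initialized subspace. This is achievable: $\mathsf{Amp}$ is deterministic, so the computed $\widetilde a(x)$ is a fixed function of $x$ and uncomputation is exact. The only residual error is synthesizing the controlled phase rotation to finite precision, and a standard approximation with $O(\log(1/\zeta))$ bits of phase suffices.
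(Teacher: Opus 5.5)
Your proof is correct and follows essentially the same route as the paper. You use the same intermediate state $V|\psi_{k_0}\rangle$, the same pointwise bound $\bigl|a e^{-i\arg\widetilde a}-|a|\bigr|\le 2|a-\widetilde a|$ (including the $\widetilde a=0$ case) summed to $\zeta/8$, and the same $\zeta/2$ implementation error; the only cosmetic difference is that the paper adds the two errors as Euclidean distances on the joint state including the auxiliary registers before converting to trace distance, whereas you convert each to trace distance (tracing out the ancilla) and then apply the triangle inequality.
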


\begin{proof}
Let $\widetilde a_{k_0}(x)$ be the amplitude approximations computed in Step~2,
and let $V_{k_0}$ be the diagonal unitary
\[
    V_{k_0}|x\rangle=e^{-i\arg \widetilde a_{k_0}(x)}|x\rangle,
    \qquad \arg0:=0.
\]
For each $x$,
\begin{equation}
    \left|a_{k_0}(x)e^{-i\arg \widetilde a_{k_0}(x)}-|a_{k_0}(x)|\right|
    \le2|a_{k_0}(x)-\widetilde a_{k_0}(x)|.
    \label{owsg:eq-42}
\end{equation}
For $\widetilde a_{k_0}(x)\ne0$, this follows by adding and subtracting
$\widetilde a_{k_0}(x)e^{-i\arg \widetilde a_{k_0}(x)}=|\widetilde a_{k_0}(x)|$ and using the triangle
inequality. For $\widetilde a_{k_0}(x)=0$, the left side is at most $2|a_{k_0}(x)|$.
Summing the squared errors,
\begin{equation}
    \bigl\|V_{k_0}|\psi_{k_0}\rangle-|\psi_{k_0}^+\rangle\bigr\|_2
    \le2\sqrt D\,\frac{\zeta}{16\sqrt D}
    =\frac\zeta8.
    \label{owsg:eq-44}
\end{equation}
Adding the operator-norm implementation error $\zeta/2$ gives
Euclidean error at most $\zeta/8+\zeta/2<\zeta$, including the
auxiliary registers. Since trace distance is upper bounded by this Euclidean
error we are done.
\end{proof}

By Lemma~\ref{owsg:positive}, each actual reference differs from
$|\psi_{k_0}^+\rangle$ by trace distance at most $\zeta$.
Measurement cannot increase trace distance. Since each measurement
uses a fresh target copy and reference preparation, replacing the
$s$ references one at a time gives
\begin{equation}
    \operatorname{TV}\!\left(
        \{Z_1,\ldots,Z_s\},R_{k,k_0}^{\otimes s}
    \right)\le s\zeta=\frac\delta6,
    \label{owsg:eq-88}
\end{equation}
where $\{Z_1,\ldots,Z_s\}$ denotes the joint distribution in step 3.

Apply Lemma~\ref{owsg:mle} to $P_j=R_{j,k_0}$ with
$J=\mathcal{K}$, $M=|\mathcal{K}|$, $N=4D^2$,
$\gamma=\varepsilon/400$, $\beta=\delta/2$,
$\eta=\eta_2$, and $\xi=\delta/6$.
It follows that
\begin{equation}
    \Pr\!\left[
        H^2(R_{k,k_0},R_{h,k_0})>\frac{\varepsilon}{400}
    \right]\le\frac\delta2.
    \label{owsg:good-second}
\end{equation}
By \eqref{owsg:good-first} and \eqref{owsg:good-second}, except with
probability at most $\delta$, both
\[
    H(p_k,p_{k_0})\le\sqrt{\frac{\varepsilon}{400}},\qquad
    H(R_{k,k_0},R_{h,k_0})\le\sqrt{\frac{\varepsilon}{400}}.
\]
We must now show that these two distributional guarantees imply
high fidelity between $|\psi_k\rangle$ and $|\psi_h\rangle$.
The following theorem supplies exactly this implication, with
$\psi=\psi_k$, $\phi=\psi_h$, and $q=p_{k_0}$.
\begin{theorem}
\label{owsg:fidelity}
Let $|\psi\rangle,|\phi\rangle$ be normalized pure states on
$\mathcal{X}$, with basis distributions $p,r$, and let $q$ be any
probability distribution on $\mathcal{X}$. Write
\[
    g:=H(p,q),\qquad
    t:=H(R_{\psi,q},R_{\phi,q}).
\]
Then
\begin{equation}
    1-|\langle\phi|\psi\rangle|^2\le100(t+g)^2.
    \label{owsg:eq-49}
\end{equation}
\end{theorem}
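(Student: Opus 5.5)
The plan is to extract from the outcome distribution $R_{\psi,q}$ the "interference kernel" of $|\psi\rangle$. This kernel is essentially the rank-one operator $|\psi\rangle\langle\psi|$, reweighted by $q$. We then test $R_{\phi,q}$ against a nonnegative penalty function adapted to $\psi$, and use Hellinger distance (not total variation) so that the final bound is quadratic in $t+g$.

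\emph{Step 1 (structure of $R$).} Write $\psi(x)=|\psi(x)|e^{i\theta(x)}$, $\phi(x)=|\phi(x)|e^{i\theta'(x)}$ and $\Delta(x):=\theta'(x)-\theta(x)$. Expand \eqref{owsg:born} from Lemma~\ref{owsg:outcome-distribution}:
\[
R_{\psi,q}(\alpha,\sigma,x,y)=\tfrac18\Bigl(|\psi(x)|^2q(y)+q(x)|\psi(y)|^2+2\sigma\,\mathrm{Re}\bigl(e^{i\alpha}\psi(x)\overline{\psi(y)}\sqrt{q(x)q(y)}\bigr)\Bigr).
\]
So the $\sigma$-odd part of $R_{\psi,q}$ encodes, for $\alpha\in\{0,\pi/2\}$, the real and imaginary parts of the kernel $K_\psi(x,y):=\sqrt{q(x)q(y)}\,\psi(x)\overline{\psi(y)}$. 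The $\sigma$-even part encodes the marginal $(p\times q+q\times p)/2$. A first, crude version of the argument goes as follows. By Lemma~\ref{lem:tv-hellinger}, $\operatorname{TV}(R_{\psi,q},R_{\phi,q})\le t$, which gives $\sum_{x,y}|K_\psi-K_\phi|=O(t)$. Pair both kernels with the test kernel $e^{-i\theta(x)+i\theta(y)}$. For $\psi$ this yields $\bigl(\sum_x|\psi(x)|\sqrt{q(x)}\bigr)^2=B(p,q)^2=(1-g^2/2)^2$ by Lemma~\ref{lem:qsample-identities}. For $\phi$ it yields $\bigl|\sum_x\sqrt{q(x)}|\phi(x)|e^{i\Delta(x)}\bigr|^2$. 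Replacing $\sqrt q$ by $|\psi|$ costs at most $H(p,q)=g$ by Cauchy--Schwarz, which gives $|\langle\phi|\psi\rangle|\ge1-O(t+g)$. This is only a linear bound, but it identifies the right test.

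\emph{Step 2 (quadratic sharpening).} Convert the test into a nonnegative penalty $f\ge0$ on $\mathcal Z$, bounded by $O(1)$, that vanishes on the outcomes where $\psi$ interferes perfectly. The natural choice, per pair $(x,y)$ and basis $\alpha$, is the indicator of the "wrong" sign $\sigma$ relative to the phase $e^{i\alpha}e^{i(\theta(x)-\theta(y))}$, suitably weighted so that
\[
\mathbb E_{R_{\phi,q}}f\;\gtrsim\;1-\Bigl|\textstyle\sum_x\sqrt{q(x)}|\phi(x)|e^{i\Delta(x)}\Bigr|^2-O(g^2).
\]
The key property is that $\mathbb E_{R_{\psi,q}}f$ is not zero but is controlled by the branch imbalance, i.e.\ by $\sum_{x,y}\bigl(|\psi(x)|\sqrt{q(y)}-\sqrt{q(x)}|\psi(y)|\bigr)^2$. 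This quantity is $O(H(p,q)^2)=O(g^2)$ after expanding and using $B(p,q)=1-g^2/2$. We then apply the standard Hellinger transfer inequality. For $f\ge0$ with $\|f\|_\infty\le c$, the triangle inequality in $\ell_2$ applied to $\sqrt f\sqrt{R}$ gives
\[
\sqrt{\mathbb E_{R_{\phi,q}}f}\le\sqrt{\mathbb E_{R_{\psi,q}}f}+\sqrt c\,H(R_{\psi,q},R_{\phi,q}),
\]
so $\mathbb E_{R_{\phi,q}}f\le O\bigl((g+t)^2\bigr)$. Consequently $\bigl|\sum_x\sqrt{q}|\phi|e^{i\Delta}\bigr|^2\ge1-O((t+g)^2)$.

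\emph{Step 3 (back to fidelity).} Observe that $\sum_x\sqrt{q(x)}|\phi(x)|e^{i\Delta(x)}=\langle \tilde\psi|\phi'\rangle$, where $|\tilde\psi\rangle$ is $|\psi\rangle$ with magnitudes replaced by $\sqrt q$ and $|\phi'\rangle$ is $|\phi\rangle$ up to a global phase. Since $\bigl\|\,|\tilde\psi\rangle-|\psi\rangle\bigr\|_2=H(q,p)=g$, we get $1-|\langle\phi|\psi\rangle|^2\le\bigl(\sqrt{1-|\langle\tilde\psi|\phi\rangle|^2}+g\bigr)^2$, using the fact that the trace distance $\sqrt{1-|\langle u|v\rangle|^2}$ between pure states satisfies the triangle inequality. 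Combining with Step 2 gives $1-|\langle\phi|\psi\rangle|^2\le C(t+g)^2$. The remaining work is to track constants and check that $C\le100$.

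The main obstacle is Step 2. We need a single bounded nonnegative $f$ that both has expectation $O(g^2)$ under $R_{\psi,q}$ and whose expectation under $R_{\phi,q}$ dominates the phase misalignment $1-|\sum\sqrt q|\phi|e^{i\Delta}|^2$. The difficulty is the sign/phase bookkeeping across the two bases $\alpha\in\{0,\pi/2\}$ and the need to absorb the magnitude mismatch between $|\phi|$ and $\sqrt q$. The first attempt will be $f(\alpha,\sigma,x,y)=\mathbf 1[\sigma\ne\sigma^\ast_\psi(\alpha,x,y)]$, followed by an explicit computation of both expectations via the expansion in Step 1.
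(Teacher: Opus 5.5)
Your Steps 1 and 3 are fine, but Step 2 cannot work as designed, and it carries the whole content of the theorem. The claim that $\mathbb{E}_{R_{\psi,q}}f$ is controlled by the branch imbalance is false.

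Balancing the branches (say $q=p$, so $g=0$) only places the conditional ancilla state $(e^{i\theta(x)}\ket{0}+e^{i\theta(y)}\ket{1})/\sqrt2$ on the equator. Measuring it in one of the two fixed bases $\alpha\in\{0,\pi/2\}$ gives $\sigma$ with probability $\bigl(1+\sigma\cos(\theta(x)-\theta(y)+\alpha)\bigr)/2$. This is deterministic only when the relative phase lies in $\tfrac{\pi}{2}\mathbb{Z}$. So your indicator of the ``wrong'' sign has $\Theta(1)$ expectation under $R_{\psi,p}$ for generic phases (about $(1-2/\pi)/2$ for random phases), not $O(g^2)$.

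No other nonnegative $f$ rescues the one-sided strategy either. When $g=0$ you need $\mathbb{E}_{R_{\psi,p}}f=0$, so $f$ must vanish on $\operatorname{supp}R_{\psi,p}$. Take a full-support $\psi$ whose off-diagonal relative phases avoid $\tfrac{\pi}{2}\mathbb{Z}$. Its support then contains every outcome except $(0,-1,x,x)$, and no state charges those outcomes. Hence $\mathbb{E}_{R_{\phi,p}}f=0$ for every $\phi$, and $f$ cannot detect any phase misalignment. The transfer inequality $\sqrt{\mathbb{E}_Qf}\le\sqrt{\mathbb{E}_Pf}+\sqrt{c}\,H(P,Q)$ gives a quadratic bound only if $\psi$ essentially never lands where $f$ lives. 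This two-basis measurement has no such region. The magnitude mismatch between $|\phi|$ and $\sqrt q$, which you flag, is also left unresolved.

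What is missing is a two-sided, local comparison. The paper proceeds in three steps.
\begin{enumerate}
\item It uses the $x$-marginal, $(p+q)/2$ versus $(r+q)/2$, to get $H(r,q)\le4(t+g)$.
\item It swaps each reference for the state's own positive-amplitude state, at Hellinger cost at most the Euclidean distance between references. This gives $H(R_{\psi,p},R_{\phi,r})\le5(t+g)$.
\item With matched references, it splits the Hellinger contribution of each pair exactly as $d_{xy}=(a-b)^2+ab\,H^2(Q_\theta,Q_\varphi)$. It then bounds $H^2(Q_\theta,Q_\varphi)\ge(1-\cos(\theta-\varphi))/4$ by applying $\mathrm{TV}\le H$ to each two-outcome conditional and adding the two bases, using $(\cos\theta-\cos\varphi)^2+(\sin\theta-\sin\varphi)^2=2-2\cos(\theta-\varphi)$.
\end{enumerate}
This yields $d_{xy}\ge\frac18|(\rho_\psi)_{xy}-(\rho_\phi)_{xy}|^2$, which sums to $\frac14(1-|\braket{\phi}{\psi}|^2)$. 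The quadratic dependence comes from squaring the total variation pair by pair, i.e.\ Cauchy--Schwarz at the level of each $(x,y)$. It does not come from a global $\ell_1$ bound as in your Step 1, nor from a one-sided test as in your Step 2.
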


We prove the theorem in three steps. First, we show that if $t$ and $g$ are small, this
forces the distribution $r$ to be close to $q$. Since $p$ and $q$ are also close, we may replace the reference state $\ket{q^+}$ with $\ket{\psi^+}$ when sampling from $R_{\psi,q}$ and with $\ket{\phi^+}$ when sampling from $R_{\phi,q}$ without increasing the distance between the distributions by much.
Second, this lets us replace the common reference by each state's
own positive-amplitude state. Finally, we show that the distance between the distributions thus obtained can directly be connected to the fidelity of the states.

First we note that the third marginals in $R_{\psi,q}$ and $R_{\phi,q}$ are distributed as $(p+q)/2$ and
$(r+q)/2$. Since if the two distributions are close, their marginals must be close, and the marginals are simply mixtures of $p$ and $r$ with $q$, $p$ and $r$ must be close.

\begin{lemma}
\label{owsg:magnitudes}
$H(r,q)\le4(t+g)$
\end{lemma}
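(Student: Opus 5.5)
Marginals control the mixtures, and the mixtures control $r$ versus $q$; then the triangle inequality for $H$ finishes. Let $m_\psi$ denote the marginal of the third coordinate $x$ under $R_{\psi,q}$, and $m_\phi$ the same under $R_{\phi,q}$. Summing \eqref{owsg:born} over $\sigma\in\{\pm1\}$ kills the cross term, and summing over $\alpha$ and $y$ gives
\[
    m_\psi=\tfrac12(p+q),\qquad m_\phi=\tfrac12(r+q).
\]

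The first step is that Hellinger distance does not increase under marginalization. Hellinger affinity is monotone under classical post-processing: by Cauchy--Schwarz on each fiber, $B$ of the marginals is at least $B$ of the joint distributions. Hence
\[
    H\!\left(\tfrac{p+q}2,\tfrac{r+q}2\right)\le t .
\]
Next, the mixture $\tfrac{p+q}2$ is close to $q$. Using the qsample identity $H^2=2-2B$, concavity of $\sqrt{\cdot}$ gives $B(\tfrac{p+q}2,q)\ge \tfrac12+\tfrac12B(p,q)$. Therefore
\[
    H^2\!\left(\tfrac{p+q}2,q\right)\le \tfrac12 H^2(p,q),
\]
that is, $H\!\left(\tfrac{p+q}2,q\right)\le g/\sqrt2$. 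The triangle inequality then yields
\[
    H\!\left(\tfrac{r+q}2,q\right)\le t+g .
\]

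The remaining step, which I expect to be the main obstacle, is to \emph{de-mix}: bound $H(r,q)$ by a constant times $H(\tfrac{r+q}2,q)$. Coordinatewise, set $a=\sqrt{r(x)}$ and $b=\sqrt{q(x)}$. Then
\[
    \sqrt{\tfrac{a^2+b^2}2}-b=\frac{(a^2-b^2)/2}{\sqrt{(a^2+b^2)/2}+b},
\]
so it suffices to have $\bigl|\sqrt{(a^2+b^2)/2}-b\bigr|\ge c\,|a-b|$ for an absolute constant $c$. Equivalently, after dividing by $|a-b|$, we need
\[
    \frac{(a+b)/2}{\sqrt{(a^2+b^2)/2}+b}\ge c .
\]
If $a\ge b$, the denominator is at most $a+b$, so the ratio is at least $1/2$. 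If $a<b$, the denominator is at most $2b$ and the numerator is at least $b/2$, so the ratio is at least $1/4$. Thus $c=1/4$ works. Squaring and summing over $x$ gives
\[
    H(r,q)\le 4\,H\!\left(\tfrac{r+q}2,q\right)\le 4(t+g),
\]
which is the claimed constant. The only delicate point is this pointwise inequality, and the two-case check above handles it without needing any smallness assumption on $t$ or $g$.
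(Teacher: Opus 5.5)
Your proof is correct and follows the paper's argument step for step. Both identify the $x$-marginals as $\tfrac{p+q}{2}$ and $\tfrac{r+q}{2}$, use monotonicity of $H$ under marginalization, bound $H(\tfrac{p+q}{2},q)$ by $g$, apply the triangle inequality, and de-mix via the same pointwise rationalization with constant $4$. The only cosmetic difference is that you bound $H(\tfrac{p+q}{2},q)$ through concavity of $\sqrt{\cdot}$ in the Bhattacharyya coefficient (obtaining the slightly sharper $g/\sqrt2$), whereas the paper uses the coordinatewise inequality $\bigl|\sqrt{(u+v)/2}-\sqrt v\bigr|\le|\sqrt u-\sqrt v|$.
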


\begin{proof}
Since the third marginal  of distribution $R_{\psi,q}$ is obtained by measuring the second register of 
\[
    \frac{
        |0\rangle|\psi\rangle|q^+\rangle
        +|1\rangle|q^+\rangle|\psi\rangle
    }{\sqrt2}.
\]
we can trace out the first and last register, leaving a uniform mixture of $p$ and $q$, i.e the measurement outcome is distributed as $(p+q)/2$. Similarly for $R_{\phi,q}$ the marginal is distributed as $(r+q)/2$. Taking a marginal cannot
increase Hellinger distance: by Cauchy--Schwarz, merging outcomes
can only increase the Bhattacharyya coefficient. Thus
\begin{equation}
    H\!\left(\frac{p+q}{2},\frac{r+q}{2}\right)\le t.
    \label{owsg:eq-52}
\end{equation}
For nonnegative $u,v$,
\begin{equation}
    \left|\sqrt{\frac{u+v}{2}}-\sqrt v\right|
    \le|\sqrt u-\sqrt v|
    \le4\left|\sqrt{\frac{u+v}{2}}-\sqrt v\right|.
    \label{owsg:mixture-comparison}
\end{equation}
The first inequality follows because $\sqrt{(u+v)/2}$ lies between
$\sqrt u$ and $\sqrt v$. For the second, when $u\ne v$,
rationalizing gives
\[
    \frac{|\sqrt u-\sqrt v|}
         {|\sqrt{(u+v)/2}-\sqrt v|}
    =
    \frac{2\bigl(\sqrt{(u+v)/2}+\sqrt v\bigr)}
         {\sqrt u+\sqrt v}
    \le4.
\]
The case $u=v$ is immediate.
Applying \eqref{owsg:mixture-comparison} coordinatewise,
\[
    H\!\left(\frac{p+q}{2},q\right)\le H(p,q)=g.
\]
Together with \eqref{owsg:eq-52} and the triangle inequality, this gives
\[
    H\!\left(\frac{r+q}{2},q\right)\le t+g.
\]
A second application of \eqref{owsg:mixture-comparison} yields
\[
    H(r,q)\le4H\!\left(\frac{r+q}{2},q\right)\le4(t+g).
\]
\end{proof}
Having shown that $p$, $q$, $r$ are close, we replace the reference state in each distribution with the corresponding positive-amplitude state.
The following lemma bounds the effect of changing a reference.

\begin{lemma}
\label{owsg:reference-stability}
Fix a target state $|\chi\rangle$ and reference states $|b\rangle,|c\rangle$, as well a
measurement on the target and reference registers. Let
$P_{\chi,b},P_{\chi,c}$ be its outcome distributions for each reference with the same target state. Then
\begin{equation}
    H(P_{\chi,b},P_{\chi,c})
    \le\bigl\|\,|b\rangle-|c\rangle\,\bigr\|_2.
    \label{owsg:contractivity}
\end{equation}
In particular, for distributions $p,q$,
\[
    H(R_{\chi,p},R_{\chi,q})\le H(p,q).
\]
\end{lemma}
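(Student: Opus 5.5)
We argue through amplitudes rather than through distributions directly. By Born's rule, each outcome probability is $P_{\chi,b}(z)=\|(\Pi_z)(|\chi\rangle\otimes|b\rangle)\|_2^2$ for a fixed POVM/projective decomposition. It is cleanest to Naimark-dilate the measurement, including the ancilla preparation and the classical random choice of $\alpha$, into a single isometry $W$ followed by a projective measurement $\{\Pi_z\}$ with $\sum_z\Pi_z=I$. Define the vectors
\[
    v_z:=\Pi_z W(|\chi\rangle\otimes|b\rangle),\qquad w_z:=\Pi_z W(|\chi\rangle\otimes|c\rangle),
\]
so that $P_{\chi,b}(z)=\|v_z\|_2^2$ and $P_{\chi,c}(z)=\|w_z\|_2^2$.

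The Hellinger distance is then controlled coordinatewise by the reverse triangle inequality:
\[
    \bigl(\sqrt{P_{\chi,b}(z)}-\sqrt{P_{\chi,c}(z)}\bigr)^2=\bigl(\|v_z\|-\|w_z\|\bigr)^2\le\|v_z-w_z\|_2^2.
\]
Summing over $z$ and using orthogonality of the projectors,
\[
    \sum_z\|v_z-w_z\|_2^2=\bigl\|W(|\chi\rangle\otimes(|b\rangle-|c\rangle))\bigr\|_2^2=\bigl\|\,|b\rangle-|c\rangle\,\bigr\|_2^2,
\]
because $W$ is an isometry and $|\chi\rangle$ is normalized. Taking square roots gives \eqref{owsg:contractivity}.

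The only point needing care is how to account for the classical randomness in $\alpha$. One option is to absorb it into the dilation using an ancilla in $\tfrac1{\sqrt2}(|0\rangle+|\pi/2\rangle)$ that is measured to record $\alpha$. Alternatively, one applies the bound separately for each fixed $\alpha$ and averages: squared Hellinger distance of a $\tfrac12$-mixture with matched weights is the average of the squared distances. Either route preserves the inequality. This is the main, though mild, obstacle; otherwise the argument is routine.

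For the ``in particular'' statement, take $|b\rangle=|p^+\rangle$ and $|c\rangle=|q^+\rangle$, with $\mathcal{M}$ as the measurement and $|\chi\rangle$ as target, so that $P_{\chi,b}=R_{\chi,p}$ and $P_{\chi,c}=R_{\chi,q}$. Lemma~\ref{lem:qsample-identities} identifies $\||p^+\rangle-|q^+\rangle\|_2=H(p,q)$, which gives $H(R_{\chi,p},R_{\chi,q})\le H(p,q)$.
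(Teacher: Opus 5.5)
Your proposal is correct and is essentially the paper's proof. The paper also bounds each term by the reverse triangle inequality $(\|M_z|u\rangle\|_2-\|M_z|v\rangle\|_2)^2\le\|M_z(|u\rangle-|v\rangle)\|_2^2$, sums using completeness, and specializes via Lemma~\ref{lem:qsample-identities}; the only difference is that it works directly with general measurement operators satisfying $\sum_z M_z^\dagger M_z=I$, which already absorb the ancilla preparation and the random choice of $\alpha$, so your Naimark dilation and separate treatment of $\alpha$ are valid but unnecessary.
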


\begin{proof}
Write $|u\rangle=|\chi\rangle|b\rangle$ and
$|v\rangle=|\chi\rangle|c\rangle$. Choose measurement operators
$M_z$ with $\sum_zM_z^\dagger M_z=I$.
The square root of each outcome probability is the norm of the
corresponding unnormalized postmeasurement vector. Hence the triangle inequality gives
\begin{align*}
    H^2(P_{\chi,b},P_{\chi,c})
    &=\sum_z
      \bigl(\|M_z|u\rangle\|_2-\|M_z|v\rangle\|_2\bigr)^2\\
    &\le\sum_z\|M_z(|u\rangle-|v\rangle)\|_2^2\\
    &=\bigl\|\,|u\rangle-|v\rangle\,\bigr\|_2^2\\
    &=\bigl\|\,|b\rangle-|c\rangle\,\bigr\|_2^2.
\end{align*}
The specialization follows from Lemma~\ref{lem:qsample-identities}.
\end{proof}
Apply this lemma to replace $|q^+\rangle$ by $|\psi^+\rangle$
in the experiment corresponding to $R_{\psi,q}$ and by $|\phi^+\rangle$ in the experiment corresponding to $R_{\phi,q}$.
By the hypothesis on $p$ and Lemma~\ref{owsg:magnitudes},
\[
    H(R_{\psi, p},R_{\psi,q})\le H(p,q) \leq g,\qquad
    H(R_{\phi,q},R_{\phi, r})\le H(r,q)\le4(t+g).
\]
$R_{\psi,q}$ and $R_{\phi,q}$ are at most $t$ far, so
\begin{align}
    H(R_{\psi, p},R_{\phi, r})
    &\le H(R_{\psi, p},R_{\psi,q})
        +H(R_{\psi,q},R_{\phi,q})
        +H(R_{\phi,q},R_{\phi, r})\notag\\
    &\le g+t+4(t+g)=5(t+g).
    \label{owsg:eq-62}
\end{align}
All that remains is to show that closeness of these
distributions forces high fidelity.

\begin{lemma}
\label{owsg:matched-fidelity}
Let $|\psi\rangle,|\phi\rangle$ be normalized states on
$\mathcal{X}$, with basis distributions $p,r$. Then
\begin{equation}
    H^2(R_{\psi,p},R_{\phi,r})
    \ge\frac{1-|\langle\phi|\psi\rangle|^2}{4}.
    \label{owsg:eq-72}
\end{equation}
\end{lemma}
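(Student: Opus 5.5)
The plan is to compute both distributions explicitly using Lemma~\ref{owsg:outcome-distribution}, reduce the Bhattacharyya coefficient to a weighted average of Bhattacharyya coefficients of two-outcome ``coins'', and then recognise the resulting weighted sum as the fidelity.

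\textbf{Step 1: explicit outcome distributions.} Write $\psi(x)=\sqrt{p(x)}e^{i\theta(x)}$ and $\phi(x)=\sqrt{r(x)}e^{i\varphi(x)}$. Substituting $q=p$ into \eqref{owsg:born} gives
\[
R_{\psi,p}(\alpha,\sigma,x,y)=\tfrac18 p(x)p(y)\bigl|1+\sigma e^{i(\Delta_\psi-\alpha)}\bigr|^2=\tfrac14 p(x)p(y)\bigl(1+\sigma\cos(\Delta_\psi-\alpha)\bigr),
\]
where $\Delta_\psi=\Delta_\psi(x,y)=\theta(y)-\theta(x)$. The analogous formula holds for $R_{\phi,r}$, with $r$ and $\Delta_\phi=\varphi(y)-\varphi(x)$. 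For $\alpha=0$ the coin bias is $c_\psi=\cos\Delta_\psi$, and for $\alpha=\pi/2$ it is $\sin\Delta_\psi$.

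\textbf{Step 2: factor the Bhattacharyya coefficient.} Set $m(x):=\sqrt{p(x)r(x)}$ and $w(x,y):=m(x)m(y)$. Then
\[
B(R_{\psi,p},R_{\phi,r})=\sum_{x,y}w(x,y)\cdot\frac12\sum_{\alpha}\mathrm{BC}_\alpha(x,y),
\]
where $\mathrm{BC}_\alpha$ is the Bhattacharyya coefficient of the coins $\bigl((1\pm c_\psi)/2\bigr)$ and $\bigl((1\pm c_\phi)/2\bigr)$.

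For each coin, Lemmas~\ref{lem:qsample-identities} and~\ref{lem:tv-hellinger} give
\[
\mathrm{BC}=1-\tfrac12H^2\le1-\tfrac12\operatorname{TV}^2=1-\tfrac18(c_\psi-c_\phi)^2.
\]
Averaging over $\alpha$ and using $(\cos a-\cos b)^2+(\sin a-\sin b)^2=|e^{ia}-e^{ib}|^2$, we get
\[
\tfrac12\sum_\alpha\mathrm{BC}_\alpha\le1-\tfrac1{16}\bigl|e^{i\Delta_\psi}-e^{i\Delta_\phi}\bigr|^2=1-\tfrac18\bigl(1-\cos(\delta(y)-\delta(x))\bigr),
\]
where $\delta:=\theta-\varphi$.

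\textbf{Step 3: identify the fidelity.} We have $\langle\phi|\psi\rangle=\sum_x m(x)e^{i\delta(x)}$, so
\[
F:=|\langle\phi|\psi\rangle|^2=\sum_{x,y}w(x,y)\cos(\delta(x)-\delta(y)),
\]
the imaginary parts cancelling by symmetry. Also $\sum_{x,y}w=B(p,r)^2$. Summing the bound from Step~2 against $w$ gives
\[
B(R_{\psi,p},R_{\phi,r})\le \tfrac78B(p,r)^2+\tfrac18F.
\]
Hence
\[
H^2=2-2B\ge 2-\tfrac74B(p,r)^2-\tfrac14F\ge\tfrac{1-F}{4},
\]
using $B(p,r)\le1$. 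This is \eqref{owsg:eq-72}.

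\textbf{Main obstacle.} The delicate point is the per-pair coin bound. One must lower bound the Hellinger distance of the two-outcome distributions by a quantity quadratic in the phase difference, uniformly in $(x,y)$. The route $H^2\ge\operatorname{TV}^2$, combined with pairing the $\alpha=0$ and $\alpha=\pi/2$ settings so that cosine and sine differences recombine into $|e^{i\Delta_\psi}-e^{i\Delta_\phi}|^2$, is what makes the constants work. After that, the only remaining point is noticing that the weights $m(x)m(y)$ reproduce $|\langle\phi|\psi\rangle|^2$ exactly.
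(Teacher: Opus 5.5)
Your proof is correct and is essentially the paper's argument: the same two-outcome coin bound via $\operatorname{TV}\le H$, with the $\alpha=0$ and $\alpha=\pi/2$ settings recombined into $|e^{i\Delta_\psi}-e^{i\Delta_\phi}|^2$, aggregated over pairs $(x,y)$ with weight $\sqrt{p(x)p(y)r(x)r(y)}$. The only difference is bookkeeping. You bound the Bhattacharyya coefficient and expand $|\langle\phi|\psi\rangle|^2=\sum_{x,y}w(x,y)\cos(\delta(x)-\delta(y))$ directly, whereas the paper splits each pair's Hellinger contribution as $(a-b)^2+ab\,H^2(Q_\theta,Q_\varphi)$ and sums to $\frac18\|\rho_\psi-\rho_\phi\|_F^2$; both computations amount to the bound $H^2(R_{\psi,p},R_{\phi,r})\ge 2-\frac74B(p,r)^2-\frac14|\langle\phi|\psi\rangle|^2$.
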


\begin{proof}
Let
\[
    \rho_\psi:=|\psi\rangle\langle\psi|,
    \qquad
    \rho_\phi:=|\phi\rangle\langle\phi|.
\]
We show that the squared Hellinger distance contribution of each pair $(x,y)$
can be used to control the difference between the corresponding entries of
$\rho_\psi$ and $\rho_\phi$.

Fix $(x,y)$ and write
\[
    c_\psi:=(\rho_\psi)_{xy}
        =\psi(x)\overline{\psi(y)}=ae^{i\theta},
    \qquad
    c_\phi:=(\rho_\phi)_{xy}
        =\phi(x)\overline{\phi(y)}=be^{i\varphi},
\]
where
\[
    a=\sqrt{p(x)p(y)},\qquad
    b=\sqrt{r(x)r(y)}.
\]
Writing $\Delta:=\theta-\varphi$, the squared difference between
the two density-matrix entries is
\begin{align}
    |c_\psi-c_\phi|^2
    &=a^2+b^2-2ab\cos\Delta \notag\\
    &=(a-b)^2+2ab(1-\cos\Delta).
    \label{owsg:entry-error}
\end{align}
We now compare this quantity with the contribution of $(x,y)$
to the squared Hellinger distance. Define
\begin{equation}
    Q_\vartheta(\alpha,\sigma)
    :=\frac{1+\sigma\cos(\vartheta+\alpha)}4,
    \qquad
    \alpha\in\{0,\pi/2\},\quad
    \sigma\in\{-1,1\}.
    \label{owsg:eq-63}
\end{equation}
The measurement outcome distributions have probabilities,
\[
    R_{\psi,p}(\alpha,\sigma,x,y)
        =a^2Q_\theta(\alpha,\sigma),
    \qquad
    R_{\phi,r}(\alpha,\sigma,x,y)
        =b^2Q_\varphi(\alpha,\sigma).
\]
We first relate the distance between $Q_\theta$ and $Q_\varphi$ to the differnce in the corresponding phases.

\medskip
\noindent\textbf{Claim.}
For all $\vartheta,\omega$,
\begin{equation}
    H^2(Q_\vartheta,Q_\omega)
    \ge
    \frac{1-\cos(\vartheta-\omega)}4.
    \label{owsg:eq-64}
\end{equation}

\noindent\emph{Proof of claim.}
For a fixed value of $\alpha$, let
\[
    P_{\vartheta,\alpha}(\sigma)
    :=Q_\vartheta(\sigma\mid\alpha)
    =\frac{1+\sigma\cos(\vartheta+\alpha)}2.
\]
 Computing the TVD, 
\[
    \operatorname{TV}
        (P_{\vartheta,\alpha},P_{\omega,\alpha})
    =
    \frac12
    |\cos(\vartheta+\alpha)-\cos(\omega+\alpha)|.
\]
For any distributions $P,Q$,
\[
    \operatorname{TV}(P,Q)\le H(P,Q),
\]
which follows from
\[
\begin{aligned}
    \operatorname{TV}(P,Q)
    &=
    \frac12\sum_z
    |\sqrt{P(z)}-\sqrt{Q(z)}|
    (\sqrt{P(z)}+\sqrt{Q(z)})\\
    &\le
    \frac12 H(P,Q)
    \left(
        \sum_z(\sqrt{P(z)}+\sqrt{Q(z)})^2
    \right)^{1/2}\\
    &\le H(P,Q).
\end{aligned}
\]
Therefore
\[
    H^2(P_{\vartheta,\alpha},P_{\omega,\alpha})
    \ge
    \frac14
    |\cos(\vartheta+\alpha)-\cos(\omega+\alpha)|^2.
\]

The marginal distribution of $\alpha$ is uniform in both
$Q_\vartheta$ and $Q_\omega$. Thus
\begin{align*}
    H^2(Q_\vartheta,Q_\omega)
    &=
    \frac12
    \sum_{\alpha\in\{0,\pi/2\}}
    H^2(P_{\vartheta,\alpha},P_{\omega,\alpha})\\
    &\ge
    \frac18\Bigl(
        (\cos\vartheta-\cos\omega)^2
        +(\sin\vartheta-\sin\omega)^2
    \Bigr)\\
    &=
    \frac{1-\cos(\vartheta-\omega)}4.
\end{align*}
This proves the claim.
\hfill$\square$

\medskip

Return now to the fixed pair $(x,y)$. Its contribution to
$H^2(R_{\psi,p},R_{\phi,r})$ is
\[
    d_{xy}
    :=
    \sum_{\alpha,\sigma}
    \left(
        a\sqrt{Q_\theta(\alpha,\sigma)}
        -
        b\sqrt{Q_\varphi(\alpha,\sigma)}
    \right)^2.
\]
Expanding the square and using
$\sum_{\alpha,\sigma}Q_\vartheta(\alpha,\sigma)=1$ gives
\begin{align*}
    d_{xy}
    &=
    a^2+b^2
    -2ab
    \sum_{\alpha,\sigma}
        \sqrt{
            Q_\theta(\alpha,\sigma)
            Q_\varphi(\alpha,\sigma)
        }.
\end{align*}
On the other hand,
\begin{align*}
    ab\,H^2(Q_\theta,Q_\varphi)
    &=
    ab\sum_{\alpha,\sigma}
        \left(
            \sqrt{Q_\theta(\alpha,\sigma)}
            -
            \sqrt{Q_\varphi(\alpha,\sigma)}
        \right)^2\\
    &=
    2ab
    -2ab
    \sum_{\alpha,\sigma}
        \sqrt{
            Q_\theta(\alpha,\sigma)
            Q_\varphi(\alpha,\sigma)
        }.
\end{align*}
Comparing the two expressions yields the exact decomposition
\begin{equation}
    d_{xy}
    =
    (a-b)^2
    +
    ab\,H^2(Q_\theta,Q_\varphi).
    \label{owsg:pair-decomposition}
\end{equation}
Thus the first term measures the difference in the magnitudes of
the two density-matrix entries, while the second measures their
phase difference.

Applying the claim,
\[
    d_{xy}
    \ge
    (a-b)^2
    +\frac{ab}{4}(1-\cos\Delta).
\]
Comparing this with \eqref{owsg:entry-error},
\[
    \frac18|c_\psi-c_\phi|^2
    =
    \frac18(a-b)^2
    +\frac{ab}{4}(1-\cos\Delta).
\]
Hence
\begin{equation}
    d_{xy}
    \ge
    \frac18|c_\psi-c_\phi|^2.
    \label{owsg:pair-entry-bound}
\end{equation}

Finally, summing over all pairs gives
\begin{align*}
    H^2(R_{\psi,p},R_{\phi,r})
    &=\sum_{x,y}d_{xy}\\
    &\ge
    \frac18\sum_{x,y}
        |(\rho_\psi)_{xy}-(\rho_\phi)_{xy}|^2\\
    &=
    \frac18\|\rho_\psi-\rho_\phi\|_F^2.
\end{align*}
Since $\rho_\psi$ and $\rho_\phi$ are pure states, and since the Frobenius norm for a Hermitian matrix is the trace of its square,
\begin{align*}
    \|\rho_\psi-\rho_\phi\|_F^2
    &= \operatorname{Tr}((\rho_\psi-\rho_\phi)^2)\\
    &=
    \operatorname{Tr}(\rho_\psi^2)
    +\operatorname{Tr}(\rho_\phi^2)
    -2\operatorname{Tr}(\rho_\psi\rho_\phi)\\
    &=
    2\bigl(1-|\langle\phi|\psi\rangle|^2\bigr).
\end{align*}
Substituting this above proves \eqref{owsg:eq-72}.
\end{proof}
We can put these lemmas together to prove Theorem~\ref{owsg:fidelity} as described.
\begin{proof}[Proof of Theorem~\ref{owsg:fidelity}]
By \eqref{owsg:eq-62},
\[
    H(R_{\psi,p},R_{\phi,r})\le 5(t+g).
\]
Lemma~\ref{owsg:matched-fidelity} therefore gives
\[
    1-|\langle\phi|\psi\rangle|^2
    \le 4H^2(R_{\psi,p},R_{\phi,r})
    \le 100(t+g)^2.
\]
\end{proof}
Finally, we prove Theorem~\ref{thm:owsg:learning}
\begin{proof}[Proof of Theorem~\ref{thm:owsg:learning}]
Run $\mathsf{Learn}^{\NP}$ as described above.
By \eqref{owsg:good-first} and \eqref{owsg:good-second}, except with
probability at most $\delta$,
\[
    H(p_k,p_{k_0})\le\sqrt{\frac{\varepsilon}{400}},
    \qquad
    H(R_{k,k_0},R_{h,k_0})
    \le\sqrt{\frac{\varepsilon}{400}}.
\]
Applying Theorem~\ref{owsg:fidelity} with
$\psi=\psi_k$, $\phi=\psi_h$, and $q=p_{k_0}$ gives
\[
\begin{aligned}
    1-|\langle\psi_h|\psi_k\rangle|^2
    &\le
    100\bigl(
        H(R_{k,k_0},R_{h,k_0})
        +H(p_k,p_{k_0})
    \bigr)^2 \\
    &\le
    100\left(2\sqrt{\frac{\varepsilon}{400}}\right)^2
    =\varepsilon.
\end{aligned}
\]
\end{proof}

\subsection{Inverting computable pure-state OWSGs}
\label{owsg:inversion}

Let $\Pi=(\mathsf{KeyGen},\mathsf{StateGen},\mathsf{Ver})$ be a
computable pure-state OWSG in the sense of Definition~\ref{def:structured-owsg}.
Its output family is computable, so we may apply
Theorem~\ref{thm:owsg:learning} to copies of the OWSG challenge state and
return the learned key.

\begin{corollary}[\(\NP\)-oracle attack on computable pure-state OWSGs]
\label{owsg:attack}
For $0<\varepsilon,\delta<1/4$, there is a uniform quantum algorithm
$\mathcal{A}^{\NP}$ using only classical oracle queries and
\[
    s_{\mathrm{atk}}=O\!\left(\frac{\ell(n)+\log(1/\delta)}{\varepsilon}\right)
\]
challenge copies, with running time
$\operatorname{poly}(n,1/\varepsilon,\log(1/\delta))$.
In the experiment
\[
    k\leftarrow\mathsf{KeyGen}(1^n),\qquad
    h\leftarrow\mathcal{A}^{\NP}
       (1^n,\varepsilon,\delta,|\psi_k\rangle^{\otimes s_{\mathrm{atk}}}),
\]
the following guarantees hold.
\begin{enumerate}
    \item If verification on key $h$ has acceptance operator
    $|\psi_h\rangle\langle\psi_h|$, then
    \[
        \Pr[\mathsf{Ver}(1^n,h,|\psi_k\rangle)=1]
        \ge(1-\delta)(1-\varepsilon).
    \]
    \item More generally, suppose verification satisfies the pointwise
    correctness bound
    \[
        \Pr[\mathsf{Ver}(1^n,h,|\psi_h\rangle)=1]
        \ge1-\nu(n)
        \qquad\text{for every }h\in\mathcal{K}.
    \]
    Then
    \[
        \Pr[\mathsf{Ver}(1^n,h,|\psi_k\rangle)=1]
        \ge1-\delta-\nu(n)-\sqrt\varepsilon.
    \]
\end{enumerate}
When $\nu$ is negligible, these generators are not one-way against
quantum polynomial-time adversaries with an $\NP$ oracle.
\end{corollary}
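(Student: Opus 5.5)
The adversary $\mathcal{A}^{\NP}$ is simply $\mathsf{Learn}^{\NP}$ from Theorem~\ref{thm:owsg:learning}, run on $(1^n,\varepsilon,\delta)$ and the challenge copies. The copy count, running time, and restriction to classical oracle queries then come directly from that theorem. Because the theorem's guarantee holds for every fixed $k\in\mathcal{K}_n$, it also holds when $k\leftarrow\mathsf{KeyGen}(1^n)$, by averaging over $k$. So with probability at least $1-\delta$ (over the algorithm's randomness, for each $k$) the output satisfies the good event
\[
G:=\bigl\{\,|\langle\psi_h|\psi_k\rangle|^2\ge1-\varepsilon\,\bigr\}.
\]
Both parts of the corollary reduce to bounding acceptance on $G$. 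Since acceptance probabilities are nonnegative, the event $\neg G$ can simply be dropped.

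\emph{Part 1.} If $\mathsf{Ver}(1^n,h,\cdot)$ has acceptance operator $|\psi_h\rangle\langle\psi_h|$, then conditioned on $h$ it accepts $|\psi_k\rangle$ with probability exactly $|\langle\psi_h|\psi_k\rangle|^2$. On $G$ this is at least $1-\varepsilon$. Hence the overall acceptance probability is at least $\Pr[G](1-\varepsilon)\ge(1-\delta)(1-\varepsilon)$.

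\emph{Part 2.} Fix the output $h$ and let $E_h$ be the POVM element for acceptance of $\mathsf{Ver}(1^n,h,\cdot)$, so $0\le E_h\le I$. Every POVM element satisfies $|\mathrm{Tr}(E\rho)-\mathrm{Tr}(E\sigma)|\le T(\rho,\sigma)$. Combining this with the pure-state formula $T=\sqrt{1-|\langle\psi_h|\psi_k\rangle|^2}$ from the preliminaries, on $G$ we get
\[
\Pr[\mathsf{Ver}(1^n,h,|\psi_k\rangle)=1]\ge\Pr[\mathsf{Ver}(1^n,h,|\psi_h\rangle)=1]-\sqrt\varepsilon\ge1-\nu(n)-\sqrt\varepsilon.
\]
The second inequality is the pointwise correctness hypothesis, which applies because the learner always outputs $h\in\mathcal{K}_n$. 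Averaging gives at least $(1-\delta)(1-\nu-\sqrt\varepsilon)\ge1-\delta-\nu-\sqrt\varepsilon$.

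\emph{Breaking one-wayness.} Fix the constants $\varepsilon=\delta=1/16$. Then $s_{\mathrm{atk}}=O(\ell(n))$ is polynomial and the running time is polynomial. Part~2 gives success probability at least $7/8-1/4-\nu(n)$, which is a constant, whereas one-wayness requires success at most $\mu(n)$, a negligible function. The only subtle point is that this argument needs the stronger, pointwise (per-key) correctness in place of the averaged correctness of Definition~\ref{def:owsg}. The learned key $h$ need not be distributed like $\mathsf{KeyGen}$, so averaged correctness alone would not bound acceptance for it.
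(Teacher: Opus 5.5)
Your proposal is correct and follows the paper's proof essentially step for step. You run $\mathsf{Learn}^{\NP}$, average its per-key fidelity guarantee over $\mathsf{KeyGen}$, use that the canonical verifier accepts with probability exactly the fidelity, and for a general verifier bound the change in $\operatorname{Tr}(E_h\cdot)$ by the trace distance $\sqrt{1-F_{h,k}}\le\sqrt\varepsilon$, paying $\delta$ for failed learning. One minor slip: with $\varepsilon=\delta=1/16$, the bound $1-\delta-\nu-\sqrt\varepsilon$ equals $11/16-\nu(n)$, as in the paper, not $7/8-1/4-\nu(n)$; your value is still a valid, weaker lower bound, so the conclusion is unaffected.
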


\begin{proof}
Run $\mathsf{Learn}^{\NP}$ and return its key $h$.
For every fixed $k$, except with probability at most $\delta$,
\[
    F_{h,k}:=|\langle\psi_h|\psi_k\rangle|^2\ge1-\varepsilon.
\]
For canonical verification, the conditional acceptance probability is
$F_{h,k}$. Averaging over the learner and then over key generation gives
$(1-\delta)(1-\varepsilon)$.

For a general verifier, let $E_h$ be its acceptance effect, with
$0\le E_h\le I$. On successful learning,
\[
    T\bigl(|\psi_h\rangle\langle\psi_h|,
           |\psi_k\rangle\langle\psi_k|\bigr)
    =\sqrt{1-F_{h,k}}\le\sqrt\varepsilon.
\]
Consequently,
\[
    \langle\psi_k|E_h|\psi_k\rangle
    \ge\langle\psi_h|E_h|\psi_h\rangle-\sqrt\varepsilon
    \ge1-\nu(n)-\sqrt\varepsilon.
\]
Allowing probability $\delta$ for failed learning proves the bound.
For example, $\varepsilon=\delta=1/16$ gives success at least
$11/16-\nu(n)$ with $O(\ell(n)+1)$ copies. For any inverse-polynomial
$0<\tau<1/2$, choosing $\varepsilon\le\tau^2/4$ and
$\delta\le\tau/2$ gives success at least $1-\nu(n)-\tau$ in polynomial
time.
\end{proof}

\begin{restatable}[One-way functions from computable samplable states]
    {theorem}{computablesamplableowf}
\label{thm:computable-samplable-owf}
Every computable samplable pure-state OWSG implies a classical one-way
function secure against quantum polynomial-time inversion.
\end{restatable}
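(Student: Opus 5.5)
We argue by contradiction through distributional one-way functions, using the fact (Impagliazzo--Luby, which holds against quantum inverters) that a distributionally one-way function implies a one-way function secure against QPT inversion. Fix a polynomial $s$ and accuracy parameters $\varepsilon=\delta=\tau$ for an inverse-polynomial $\tau$ to be chosen. We define a classical function $f$ that outputs a classical simulation of the transcript of $\mathsf{Learn}$ run on the secret key. The first stage is simulated directly: on coins $(r_{\mathsf{KG}},r_1,\ldots,r_s)$, set $k=\mathsf{KeyGen}(1^n;r_{\mathsf{KG}})$ and $X_i=\mathsf{Samp}(1^n,k,1^t;r_i)$. By Definition~\ref{def:computable-samplable-states}, $(X_1,\ldots,X_s)$ is within total variation $s/t$ of $p_k^{\otimes s}$. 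We then take $f_1(r)=(X_1,\ldots,X_s)$.

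The second stage is the delicate one, because we must classically sample from $R_{k,k_0}$, where $k_0$ is produced by the first inversion. We propose $f_2(k_0,r)=(k_0,Z_1,\ldots,Z_s)$, where $k_0$ is an input and the $Z_i$ are sampled classically as follows. By Lemma~\ref{owsg:magnitudes} and the decomposition in the proof of Lemma~\ref{owsg:outcome-distribution}, the pair $(x,y)$ is distributed as the mixture $\tfrac12(p_k\times p_{k_0}+p_{k_0}\times p_k)$. This mixture is exactly samplable up to $\mathsf{Samp}$ error: sample one coordinate from $p_k$ and one from $p_{k_0}$, then randomize their order. Conditioned on $(x,y)$, the pair $(\alpha,\sigma)$ has a distribution that is efficiently computable from $\mathsf{Amp}$ via \eqref{owsg:eq-8}, since it is a two-bit distribution with computable probabilities. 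We sample it to inverse-exponential precision. The combined sampler has total variation $O(s/t)$ from $R_{k,k_0}^{\otimes s}$.

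Define $f(r_{\mathsf{KG}},r,r')=(f_1,\,Z\text{-transcript})$, where $k_0$ inside the second stage is computed by a fixed deterministic rule. The difficulty is that $\mathsf{Learn}$ selects $k_0$ using an $\NP$ search. To avoid this, we treat two functions separately, $F_1(r_{\mathsf{KG}},r)=f_1$ and $F_2(r_{\mathsf{KG}},k_0,r')=(k_0,Z)$ with $k_0$ uniform over strings in $\mathcal K$ (membership is decidable), and we use their concatenation or a parallel combination. If every OWF fails, both are distributionally invertible.

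The attack on the OWSG is then as follows. Measure $s$ copies to obtain $X$, which are $O(s/t)$-close to $F_1$'s output. Apply the distributional inverter for $F_1$ to get coins $\hat r_{\mathsf{KG}}$, and let $k_0=\mathsf{KeyGen}(\hat r_{\mathsf{KG}})$. The key $k_0$ is distributed close to a posterior sample given $X$. Next prepare references, collect $Z$ as in Steps 2--3, and invert $F_2$ on $(k_0,Z)$ to obtain $h$.

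The main obstacle is replacing maximum likelihood by posterior sampling in Lemma~\ref{owsg:mle}. We handle this with a posterior-sampling analogue: with $k\sim\mathsf{KeyGen}$ and a posterior sample $\hat k$ given $Z\sim P_k^{\otimes s}$, the pair $(k,\hat k)$ has the same law as $(\hat k,k)$. Hence
\[
\Pr[H^2(P_k,P_{\hat k})>\gamma]=\mathbb{E}_k\sum_{j} \Pr[\hat k=j]\ldots,
\]
and we bound this by averaging the likelihood-ratio estimate \eqref{owsg:eq-35}. If $\hat k=j$ is favoured, then the posterior weight of $j$ exceeds that of $k$ by a factor times the prior ratio. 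Uniform keys make the prior ratio $1$, and for general $\mathsf{KeyGen}$ we work with coin posteriors, so the prior ratio is absorbed by counting coins. This yields a failure bound of roughly $2^{L'}e^{-s\gamma/2}$ plus inverter error. The distributional inverters are only guaranteed on the correct input distribution. Our inputs deviate by $O(s/t)+s\zeta$ in total variation, and for $F_2$ additionally by the fact that $k_0$ is not uniform. For the latter we apply the inverter to the conditional distribution given $k_0$ and pay a factor $|\mathcal K|$ only if needed; to avoid this loss we instead define $F_2$ with $k_0$ computed as $\mathsf{KeyGen}$ of fresh coins, which matches the true marginal up to the first-stage error. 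Finally, with $H$ bounds in hand, Theorem~\ref{owsg:fidelity} gives fidelity $1-\varepsilon$, and Corollary~\ref{owsg:attack}'s verifier argument shows that the OWSG is broken with inverse-polynomial probability. This contradicts one-wayness, so one of $F_1,F_2$ is distributionally one-way and therefore yields a one-way function.
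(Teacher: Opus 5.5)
\textbf{There is a genuine gap in your second stage.} The distributional inverter for $F_2$ is guaranteed, up to inverse-polynomial total variation, only on $F_2$'s own output distribution. In that distribution the challenge key $k$ and the reference key $k_0$ are \emph{independent} draws from $\mathsf{KeyGen}$. In your attack, however, $k_0$ is a stage-1 posterior sample and is strongly correlated with $k$; that correlation is the whole purpose of stage 1, since you need $p_{k_0}\approx p_k$. Making $k_0$'s marginal match $\pi$ does not make the joint law of $(k_0,Z)$ match. Conditioned on $k_0$, $F_2$ draws $k$ from the prior, while the attack draws it from the stage-1 posterior, and the likelihood ratio between these can be exponentially large.

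For example, take subset-type states where $p_k$ fixes a small support. In the attack, essentially all $x_i,y_i$ lie in the common support of $p_k$ and $p_{k_0}$. Under $F_2$ this event has probability roughly $(s/D)^{s_2}$. An inverter with error $\tau$ on $F_2$'s distribution may therefore fail completely on every input you actually feed it. Conditioning on $k_0$ and paying a factor $|\mathcal K|$ is exponential, and it does not remove the prior mismatch in any case. Your argument would go through for flat families such as HPS, where $k_0$ carries no information about $k$, but not in general.

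\textbf{The missing idea} is to make the second-stage family independent of $k_0$. The paper's second posterior call is for the matched family $S_k=R_{k,k}$, the outcome distribution of $\mathcal M$ on $|\psi_k\rangle|\psi_k^+\rangle$. This is classically samplable from $k$ alone (Lemma~\ref{owf:sample-S}).
\begin{itemize}
\item The transcript actually collected uses references built from the stage-1 key $A$. It is still close to $S_K^{\otimes s_2}$: $\langle\psi_K^+|\psi_A^+\rangle$ equals the Bhattacharyya coefficient of $p_K,p_A$, which is at least $1-\gamma/2$. This bounds the trace distance between the $s_2$-fold references by $\sqrt{s_2\gamma}$.
\item The cost is that $\gamma$ must be about $\delta^2/s_2$, so stage 1 needs many more samples.
\item The gain is that the second inverter now runs on a distribution that is TV-close to the one it is guaranteed on.
\item Fidelity then follows directly from $1-|\langle\psi_{K'}|\psi_K\rangle|^2\le 4H^2(S_K,S_{K'})$ (Lemma~\ref{owsg:matched-fidelity}), not from Theorem~\ref{owsg:fidelity}.
\end{itemize}

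Two smaller points also need fixing.
\begin{itemize}
\item Your final verifier step cannot cite Corollary~\ref{owsg:attack}, because that corollary needs pointwise correctness, which Definition~\ref{def:owsg} does not give. Instead, use that a posterior-sampled key has marginal close to $\pi$, which transfers average correctness (Lemma~\ref{owf:average-correctness}).
\item Inverters for $F_1$ and $F_2$ obtained separately are only guaranteed on possibly disjoint infinite sets of $n$. You need to invert one combined sampler; the paper does this with a uniform selector bit.
\end{itemize}
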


\begin{proof}
The proof is given in Appendix~\ref{sec:owf-samplable}.
\end{proof}
\begin{remark}[Verification assumptions and scope]
\label{owsg:scope}
Part~2 of Corollary~\ref{owsg:attack} uses pointwise correctness, which is
stronger than the average correctness in Definition~\ref{def:owsg}.
This is needed because the worst-case $\NP$ learner's output key need not
follow the honest $\mathsf{KeyGen}$ distribution.
Appendix~\ref{sec:owf-samplable} avoids this stronger correctness assumption by using
posterior sampling.
\end{remark}

\section{Properties of Clifford and Bell measurements}
\label{sec:clifford-bell}
\label{sec:attack-preliminaries}

In this section, we prove several lemmas on the action of the
Clifford group on Bell states and on properties of the Haar
distribution. These results will support our \(\NP\)-aided attack
in Section~\ref{sec:cmc}, which distinguishes CMC constructions
from Haar-random unitaries. We begin by fixing our notation
for Pauli operators and Bell states.

Throughout, let $\cmcQubits\geq1$ and $\cmcDimension=2^{\cmcQubits}$.
We think of $\cmcBitStrings{\cmcQubits}$ as $\cmcField^{\cmcQubits}$;
addition and inner products of strings are over $\cmcField$.
For unit vectors or unitary operators, $\propto$ denotes equality up
to global phase.

\subsection{Pauli and Bell conventions}
\label{subsec:attack-paulis}
\label{subsec:attack-bell}

\begin{definition}[Pauli operators and labels]
\label{def:cmc-pauli}
For $x\in\cmcField$, let
$\cmcPauliX\cmcKet{x}=\cmcKet{x+1}$ and
$\cmcPauliZ\cmcKet{x}=(-1)^x\cmcKet{x}$.
For a label $v=(a,c)\in\cmcField^{2\cmcQubits}$, define
\[
    \cmcPauli{v}=\cmcPauli{a,c}
      =\bigotimes_{j=1}^{\cmcQubits}
          \cmcPauliX^{a_j}\cmcPauliZ^{c_j},
    \qquad
    \cmcParity{v}=a\cdot c.
\]
Thus $\cmcPauli{a,c}\cmcKet{x}
=(-1)^{c\cdot x}\cmcKet{x+a}$.
The Pauli group is
$\cmcPauliGroup=\{i^\ell\cmcPauli{v}:
\ell\in\{0,1,2,3\},\ v\in\cmcField^{2\cmcQubits}\}$.
The label of a Pauli up to global phase is the corresponding $v$.
\end{definition}

\begin{definition}[Bell states and swap]
\label{def:cmc-bell}
Define the maximally entangled state $\cmcPhi$ and Bell vectors $\cmcKet{\cmcBellLabel{a,c}}$ by
\[
    \cmcPhi=\frac1{\sqrt{\cmcDimension}}\sum_x\cmcKet{x,x},
    \quad
    \cmcKet{\cmcBellLabel{a,c}}
      =(I\otimes\cmcPauli{a,c})\cmcPhi
      =\frac1{\sqrt{\cmcDimension}}\sum_x
          (-1)^{c\cdot x}\cmcKet{x,x+a}.
\]
The swap operator exchanges the two entire registers:
$\cmcSwap\cmcKet{x,y}=\cmcKet{y,x}$.
Its $+1$ and $-1$ eigenspaces are the symmetric and antisymmetric
subspaces, respectively. Their projectors are $(I+\cmcSwap)/2$
and $(I-\cmcSwap)/2$, respectively.
\end{definition}

The labels above let us express Pauli multiplication and Bell-state
symmetry in terms of binary strings. In particular, the parity
$\cmcParity{v}$ determines whether a Bell state is symmetric or
antisymmetric, as the following identities show.

\begin{lemma}[Pauli identities and Bell symmetry]
\label{lem:cmc-pauli-bell}
For $v=(a,c)$ and $w=(b,d)$,
\[
    \cmcPauli{v}\cmcPauli{w}
      =(-1)^{c\cdot b}\cmcPauli{v+w},
    \qquad
    \cmcPauli{v}^{\cmcTranspose}
      =(-1)^{\cmcParity{v}}\cmcPauli{v}.
\]
Pauli labels are unique up to global phase: if
$\cmcPauli{v}\propto\cmcPauli{w}$, then $v=w$.
The Bell vectors form an orthonormal basis, and
\begin{equation}
\label{eq:cmc-swap-sign}
    \cmcSwap\cmcKet{\cmcBellLabel{v}}
       =(-1)^{\cmcParity{v}}\cmcKet{\cmcBellLabel{v}}.
\end{equation}
In particular, the symmetric Bell vectors have $\cmcParity{v}=0$,
and the antisymmetric Bell vectors have $\cmcParity{v}=1$. For any $\cmcDimension\times\cmcDimension$ matrices
$A,B$, we also have
\begin{equation}
\label{eq:cmc-transpose-trick}
    (A\otimes B)\cmcPhi=(I\otimes BA^{\cmcTranspose})\cmcPhi.
\end{equation}
\end{lemma}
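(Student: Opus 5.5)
We verify each claim of Lemma~\ref{lem:cmc-pauli-bell} directly on computational basis vectors, using the action $\cmcPauli{a,c}\cmcKet{x}=(-1)^{c\cdot x}\cmcKet{x+a}$ from Definition~\ref{def:cmc-pauli}.

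\emph{Multiplication rule.} For $v=(a,c)$ and $w=(b,d)$ we have
\[
\cmcPauli{v}\cmcPauli{w}\cmcKet{x}=(-1)^{d\cdot x}(-1)^{c\cdot(x+b)}\cmcKet{x+a+b}=(-1)^{c\cdot b}\cmcPauli{v+w}\cmcKet{x}.
\]
\emph{Transpose rule.} Since $X^{\cmcTranspose}=X$ and $Z^{\cmcTranspose}=Z$, the transpose reverses the order in each tensor factor. This gives $\cmcPauli{a,c}^{\cmcTranspose}=\bigotimes_j Z^{c_j}X^{a_j}$. Single-qubit anticommutation $ZX=-XZ$ contributes a sign $(-1)^{a_jc_j}$ on each qubit, for a total of $(-1)^{a\cdot c}=(-1)^{\cmcParity{v}}$.

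\emph{Uniqueness of labels.} If $\cmcPauli{v}\propto\cmcPauli{w}$, apply both sides to $\cmcKet{0}$. This yields $\cmcKet{a}\propto\cmcKet{b}$, so $a=b$. Then compare the phases $(-1)^{c\cdot x}$ and $(-1)^{d\cdot x}$ over all $x$. Their ratio must be constant, and it equals $1$ at $x=0$, so $(c+d)\cdot x=0$ for all $x$, hence $c=d$.

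\emph{Transpose trick.} Prove \eqref{eq:cmc-transpose-trick} first, since the remaining claims depend on it. For $A=I$, expanding $(B\otimes I)\cmcPhi=D^{-1/2}\sum_{x,y}B_{yx}\cmcKet{y,x}$ and relabeling gives $(B\otimes I)\cmcPhi=(I\otimes B^{\cmcTranspose})\cmcPhi$. For general $A,B$, write
\[
(A\otimes B)\cmcPhi=(I\otimes B)(A\otimes I)\cmcPhi=(I\otimes B)(I\otimes A^{\cmcTranspose})\cmcPhi=(I\otimes BA^{\cmcTranspose})\cmcPhi.
\]

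\emph{Orthonormality.} We have $\langle\cmcBellLabel{v}|\cmcBellLabel{w}\rangle=\cmcBra{\Phi_D}(I\otimes \cmcPauli{v}^\dagger\cmcPauli{w})\cmcPhi=\operatorname{Tr}(\cmcPauli{v}^\dagger\cmcPauli{w})/D$. By the multiplication rule and $\cmcPauli{v}^\dagger\propto\cmcPauli{v}$, the product $\cmcPauli{v}^\dagger\cmcPauli{w}$ is proportional to $\cmcPauli{v+w}$. This has trace $0$ unless $v+w=0$: if $a\ne0$ the diagonal vanishes, and if $a=0$ but $c\neq0$ the character sum $\sum_x(-1)^{c\cdot x}$ vanishes. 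Normalization is immediate. There are $D^2$ labels, so the Bell vectors form a basis.

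\emph{Swap sign.} First, $\cmcSwap\cmcPhi=\cmcPhi$. Also $\cmcSwap(I\otimes P)=(P\otimes I)\cmcSwap$. Hence
\[
\cmcSwap\cmcKet{\cmcBellLabel{v}}=(\cmcPauli{v}\otimes I)\cmcPhi=(I\otimes\cmcPauli{v}^{\cmcTranspose})\cmcPhi=(-1)^{\cmcParity{v}}\cmcKet{\cmcBellLabel{v}},
\]
where the second equality is the transpose trick and the third is the transpose rule. The symmetric/antisymmetric classification then follows from the eigenspace description in Definition~\ref{def:cmc-bell}.

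None of these steps presents a real obstacle. The only points needing care are sign bookkeeping: the ordering convention $X^aZ^c$ in the multiplication rule, and getting the transpose trick in the stated form $BA^{\cmcTranspose}$ rather than $A^{\cmcTranspose}B$. The factorization $(A\otimes B)=(I\otimes B)(A\otimes I)$ used above produces exactly the stated order.
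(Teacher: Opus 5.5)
Your proof is correct. For the multiplication and transpose rules, orthonormality, and the transpose trick, you do the same direct computation as the paper. The paper's orthonormality check is the explicit sum $\frac{\delta_{a,b}}{D}\sum_x(-1)^{(c+d)\cdot x}$, which is your trace of $P_{v+w}$ written out. Its transpose trick compares the coefficient $D^{-1/2}\sum_x A_{j,x}B_{k,x}$ on both sides, which is your relabeling argument.

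Your route differs in two sub-claims.
\begin{itemize}
\item \emph{Label uniqueness.} The paper obtains it as a corollary of Bell orthonormality: $P_v\propto P_w$ forces $|\beta_v\rangle\propto|\beta_w\rangle$. You read it off directly from the action on basis vectors, which removes that dependency.
\item \emph{Swap sign.} The paper substitutes $y=x+a$ in the expansion of $|\beta_{a,c}\rangle$. You prove the transpose trick first and then derive $\mathsf{Swap}|\beta_v\rangle=(P_v\otimes I)|\Phi_D\rangle=(I\otimes P_v^{\mathsf T})|\Phi_D\rangle=(-1)^{q(v)}|\beta_v\rangle$. This makes explicit that Bell (anti)symmetry is just the (skew-)symmetry of the matrix $P_v$, the same fact the paper later uses in the Haar acceptance bound. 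The paper's substitution is shorter and does not depend on the other parts.
\end{itemize}

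The one step you leave implicit, $P_v^\dagger\propto P_v$, follows from the transpose rule because $P_v$ has real entries.
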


\begin{proof}
The Pauli multiplication and transpose identities follow from
$\cmcPauliZ\cmcPauliX=-\cmcPauliX\cmcPauliZ$.

The Bell inner products are
\[
    \cmcBra{\cmcBellLabel{a,c}}\cmcKet{\cmcBellLabel{b,d}}
       =\frac{\delta_{a,b}}{\cmcDimension}
          \sum_x(-1)^{(c+d)\cdot x}
       =\delta_{a,b}\delta_{c,d}.
\]
There are $\cmcDimension^2$ such vectors in a space of dimension $\cmcDimension^2$, so they form an
orthonormal basis. Pauli-label uniqueness follows:
$\cmcPauli{v}\propto\cmcPauli{w}$ implies
$\cmcKet{\cmcBellLabel{v}}\propto
 \cmcKet{\cmcBellLabel{w}}$,
so orthogonality forces $v=w$.

For swap, substituting $y=x+a$ gives
\begin{align*}
    \cmcSwap\cmcKet{\cmcBellLabel{a,c}}
       &=\frac1{\sqrt{\cmcDimension}}
          \sum_x(-1)^{c\cdot x}\cmcKet{x+a,x}\\
       &=\frac1{\sqrt{\cmcDimension}}
          \sum_y(-1)^{c\cdot(y+a)}\cmcKet{y,y+a}\\
       &=(-1)^{a\cdot c}\cmcKet{\cmcBellLabel{a,c}}.
\end{align*}

Finally, the coefficient of $\cmcKet{j,k}$ on either side of
\eqref{eq:cmc-transpose-trick} is
$\cmcDimension^{-1/2}\sum_x A_{j,x}B_{k,x}$.
\end{proof}

We will group Bell outcomes by their displacement coordinate. The
corresponding subspaces also have a computational-basis description,
which lets us analyze displacement probabilities in either basis.

\begin{definition}[Displacement subspaces]
\label{def:attack-displacement}
The displacement of a computational-basis pair $\cmcKet{x,y}$ is
$x+y$. For a Bell label $(a,c)$, define $\cmcDisp{a,c}=a$.
For each $a\in\cmcField^{\cmcQubits}$, let
\[
    \cmcSector{a}
      =\operatorname{span}\{\cmcKet{x,x+a}:
                            x\in\cmcField^{\cmcQubits}\}.
\]
The Bell vectors $\{\cmcKet{\cmcBellLabel{a,c}}\}_c$ form an
orthonormal basis of $\cmcSector{a}$. By
Lemma~\ref{lem:cmc-pauli-bell}, its antisymmetric part is spanned
by those Bell vectors with $a\cdot c=1$ and, for $a\neq0$, has
dimension $\cmcDimension/2$.
For a set $R$, a Bell outcome has displacement in a set $R$ if its first label
coordinate belongs to $R$; its phase coordinate is unrestricted
by this condition.
\end{definition}

\subsection{Clifford action and classical correction}
\label{subsec:attack-cliffords}

\begin{definition}[Clifford unitaries and their descriptions]
\label{def:cmc-clifford}
A Clifford unitary is a unitary implemented, up to global phase, by a
circuit over the gates
\[
    \cmcHadamard=\frac1{\sqrt2}
       \begin{pmatrix}1&1\\1&-1\end{pmatrix},
    \qquad \cmcPhaseGate=\cmcDiag(1,i),
    \qquad
    \cmcCNOT\cmcKet{x,y}=\cmcKet{x,x+y}.
\]
The one-qubit gates may act on any qubit, and CNOT may act on any
ordered pair of distinct qubits. Whenever an algorithm receives a
Clifford, its input is such a circuit description. Running times are
measured in $\cmcQubits$ and the number of gates in that description.
\end{definition}

We first describe how a Clifford transforms Pauli labels under
conjugation. Ignoring global phases turns this action into an
efficiently computable linear map on binary strings.

\begin{lemma}[Clifford conjugation of Paulis]
\label{lem:cmc-pauli-conjugation}
For every Clifford $\cmcClifford$, there is a unique invertible
linear map $\cmcLinearMap{\cmcClifford}$ on
$\cmcField^{2\cmcQubits}$ such that
\[
    \cmcClifford\cmcPauli{v}\cmcClifford^{\cmcAdjoint}
       \propto\cmcPauli{\cmcLinearMap{\cmcClifford}v}
    \quad\text{for every }v.
\]
The matrix of this map is computable in polynomial time from the
circuit description. For Cliffords $A,B$,
$\cmcLinearMap{AB}=\cmcLinearMap{A}\cmcLinearMap{B}$.
\end{lemma}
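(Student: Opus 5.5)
We prove the lemma in three stages: first for the individual gates, then for products, and finally uniqueness and invertibility.

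\emph{Existence.} Since $\cmcClifford$ is unitary, the map $P\mapsto \cmcClifford P\cmcClifford^{\cmcAdjoint}$ is multiplicative. Write $\cmcClifford$ as a product of its circuit gates. It then suffices to check that each generator $\cmcHadamard$, $\cmcPhaseGate$, $\cmcCNOT$ (acting on the specified qubits) sends every single-qubit $\cmcPauliX$ and $\cmcPauliZ$ to a Pauli up to phase. This is the standard table:
\begin{itemize}
\item $\cmcHadamard$: $\cmcPauliX\leftrightarrow\cmcPauliZ$;
\item $\cmcPhaseGate$: $\cmcPauliX\mapsto i\cmcPauliX\cmcPauliZ$, $\cmcPauliZ\mapsto\cmcPauliZ$;
\item $\cmcCNOT$: $\cmcPauliX_1\mapsto\cmcPauliX_1\cmcPauliX_2$, $\cmcPauliZ_2\mapsto\cmcPauliZ_1\cmcPauliZ_2$, with the other two fixed.
\end{itemize}
A Pauli $\cmcPauli{a,c}$ is a product of the $\cmcPauliX^{a_j}$ and $\cmcPauliZ^{c_j}$. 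By Lemma~\ref{lem:cmc-pauli-bell}, $\cmcPauli{v}\cmcPauli{w}\propto\cmcPauli{v+w}$. Hence the image of $\cmcPauli{v}$ is, up to phase, $\cmcPauli{w}$, where $w$ is the $\cmcField$-sum of the images of the generators appearing in $v$. This defines $\cmcLinearMap{\cmcClifford}$ as a linear map determined by its values on the $2\cmcQubits$ basis labels.

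\emph{Well-definedness and uniqueness.} Suppose two linear maps $F$ and $F'$ both satisfy $\cmcClifford\cmcPauli{v}\cmcClifford^{\cmcAdjoint}\propto\cmcPauli{Fv}$ and $\propto\cmcPauli{F'v}$. Then $\cmcPauli{Fv}\propto\cmcPauli{F'v}$, and Pauli-label uniqueness (Lemma~\ref{lem:cmc-pauli-bell}) gives $Fv=F'v$. Thus the map depends only on the unitary, not on its circuit decomposition. This also shows that, for each $v$, $\cmcLinearMap{\cmcClifford}v$ is determined pointwise, so linearity is automatic once existence is known.

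\emph{Composition and invertibility.} For the composition law,
\[
AB\cmcPauli{v}B^{\cmcAdjoint}A^{\cmcAdjoint}\propto A\cmcPauli{\cmcLinearMap{B}v}A^{\cmcAdjoint}\propto\cmcPauli{\cmcLinearMap{A}\cmcLinearMap{B}v},
\]
and uniqueness gives $\cmcLinearMap{AB}=\cmcLinearMap{A}\cmcLinearMap{B}$. Invertibility follows because $\cmcClifford^{\cmcAdjoint}$ is also Clifford: reverse the circuit and invert each gate, using $\cmcHadamard^{-1}=\cmcHadamard$, $\cmcPhaseGate^{-1}=\cmcPhaseGate^3$, $\cmcCNOT^{-1}=\cmcCNOT$. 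Then $\cmcLinearMap{\cmcClifford^{\cmcAdjoint}}\cmcLinearMap{\cmcClifford}=\cmcLinearMap{I}=I$. Alternatively, injectivity is immediate: $\cmcLinearMap{\cmcClifford}v=0$ forces $\cmcClifford\cmcPauli{v}\cmcClifford^{\cmcAdjoint}\propto I$, hence $\cmcPauli{v}\propto I$ and $v=0$.

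\emph{Efficiency.} Start from the $2\cmcQubits\times2\cmcQubits$ identity matrix and apply the gates in circuit order. Each gate is multiplied in as an elementary $\cmcField$ matrix that acts on $O(1)$ rows and columns, for example swapping the $x_j$ and $z_j$ coordinates for $\cmcHadamard$. This costs $O(\cmcQubits)$ per gate, so the total time is polynomial in $\cmcQubits$ and the number of gates.

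The only real bookkeeping is verifying the generator table against the paper's label convention $\cmcPauli{a,c}=\prod\cmcPauliX^{a_j}\cmcPauliZ^{c_j}$. This takes a little care, since the convention affects only signs and phases, which are ignored anyway.
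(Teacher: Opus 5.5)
Your proof is correct and follows essentially the same route as the paper. Both use the generator conjugation table, multiplicativity of conjugation together with label uniqueness from Lemma~\ref{lem:cmc-pauli-bell} to get linearity and uniqueness, injectivity for invertibility, and tracking of the $2n$ Pauli generators through the circuit for efficiency; your extra inverse-circuit argument for invertibility (using $\cmcPhaseGate^{-1}=\cmcPhaseGate^3$) is also valid, though not needed.
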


\begin{proof}
Each generating gate conjugates Pauli generators to Paulis. Up to
phase, Hadamard exchanges $\cmcPauliX$ and $\cmcPauliZ$, while
$\cmcPhaseGate$ sends $\cmcPauliX$ to $\cmcPauliX\cmcPauliZ$
and keeps $\cmcPauliZ$ the same. For a CNOT from $i$ to $j$, the updates are
\[
    \cmcPauliX_i\mapsto\cmcPauliX_i\cmcPauliX_j,
    \quad \cmcPauliX_j\mapsto\cmcPauliX_j,
    \quad \cmcPauliZ_i\mapsto\cmcPauliZ_i,
    \quad \cmcPauliZ_j\mapsto\cmcPauliZ_i\cmcPauliZ_j.
\]
These identities follow directly from the gate definitions; other
qubits are unaffected. They establish the Pauli conjugation property
for a circuit. Uniqueness of labels and the multiplication identity
in Lemma~\ref{lem:cmc-pauli-bell} show that the induced map is linear.
Conjugation is injective, so this linear map is invertible.
Tracking the $2\cmcQubits$ single-qubit Pauli generators through the
circuit computes its columns in polynomial time. Applying conjugation
first by $B$ and then by $A$ proves the composition formula.
\end{proof}

To understand the action of
$\cmcClifford\otimes\cmcClifford$ on Bell states, the transpose
identity~\eqref{eq:cmc-transpose-trick} introduces the factor
$\cmcClifford\cmcClifford^{\cmcTranspose}$ in addition to Pauli
conjugation. The next lemma shows that this factor is itself a Pauli
up to phase and determines the offset in the Bell-label map.

\begin{lemma}[A Clifford times its transpose]
\label{lem:cmc-transpose-pauli}
For every Clifford $\cmcClifford$, there is a unique label
$\cmcOffset{\cmcClifford}$ such that
\begin{equation}
\label{eq:cmc-transpose-pauli}
    \cmcClifford\cmcClifford^{\cmcTranspose}
       \propto\cmcPauli{\cmcOffset{\cmcClifford}}.
\end{equation}
This label is computable in polynomial time. For Cliffords $A,B$,
\[
    \cmcOffset{AB}
      =\cmcLinearMap{A}\cmcOffset{B}+\cmcOffset{A}.
\]
\end{lemma}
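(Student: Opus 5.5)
The plan is to prove the statement by induction on the length of the circuit description. We handle the composition formula first, since it reduces everything to the three generating gates.

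\emph{Composition.} For Cliffords $A,B$ we have $(AB)(AB)^{\cmcTranspose}=A\,(BB^{\cmcTranspose})\,A^{\cmcTranspose}$.
\begin{itemize}
    \item Suppose $BB^{\cmcTranspose}\propto\cmcPauli{\cmcOffset{B}}$ and $AA^{\cmcTranspose}\propto\cmcPauli{\cmcOffset{A}}$.
    \item Write $A^{\cmcTranspose}=A^{\cmcAdjoint}(AA^{\cmcTranspose})$. Then
    \[
        (AB)(AB)^{\cmcTranspose}\propto A\cmcPauli{\cmcOffset{B}}A^{\cmcAdjoint}\cmcPauli{\cmcOffset{A}}.
    \]
    \item By Lemma~\ref{lem:cmc-pauli-conjugation}, $A\cmcPauli{\cmcOffset{B}}A^{\cmcAdjoint}\propto\cmcPauli{\cmcLinearMap{A}\cmcOffset{B}}$.
    \item By the multiplication rule of Lemma~\ref{lem:cmc-pauli-bell}, the product is $\propto\cmcPauli{\cmcLinearMap{A}\cmcOffset{B}+\cmcOffset{A}}$.
\end{itemize}
This shows that if $A$ and $B$ have offsets, then so does $AB$, with the stated formula. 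Global phases in the circuit implementation do not matter, because $\propto$ absorbs them. For the same reason, $\lambda C(\lambda C)^{\cmcTranspose}=\lambda^2CC^{\cmcTranspose}$ is harmless.

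\emph{Existence, by induction.} It then suffices to check the generators, each acting on its qubits tensored with the identity. The transpose factorizes over tensor products, so the identity parts contribute $I$.
\begin{itemize}
    \item $\cmcHadamard$ is real symmetric with $\cmcHadamard^2=I$, so $\cmcHadamard\cmcHadamard^{\cmcTranspose}=I$ and $\cmcOffset{\cmcHadamard}=0$.
    \item $\cmcPhaseGate$ is diagonal, so $\cmcPhaseGate\cmcPhaseGate^{\cmcTranspose}=\cmcPhaseGate^2=\cmcPauliZ$ and the offset is the $\cmcPauliZ$ label on that qubit.
    \item $\cmcCNOT$ is a real permutation matrix that is its own inverse, so $\cmcCNOT\cmcCNOT^{\cmcTranspose}=\cmcCNOT\,\cmcCNOT^{-1}=I$.
\end{itemize}
Induction on the gate sequence, using the composition step, gives existence for every circuit.

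\emph{Uniqueness and efficiency.} Uniqueness follows directly from label uniqueness in Lemma~\ref{lem:cmc-pauli-bell}. Since $\cmcOffset{C}$ is unique, it depends only on the unitary (up to phase) and not on the chosen circuit. This is what makes the composition formula well defined.

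For computation, scan the circuit gate by gate. Maintain the pair $(\cmcLinearMap{C},\cmcOffset{C})$ for the prefix, and update it via $\cmcOffset{gC}=\cmcLinearMap{g}\cmcOffset{C}+\cmcOffset{g}$ together with $\cmcLinearMap{gC}=\cmcLinearMap{g}\cmcLinearMap{C}$. Each update is a polynomial-size linear-algebra step over $\cmcField$, so the whole scan runs in polynomial time.

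The only step that needs care is the rewriting $A^{\cmcTranspose}=A^{\cmcAdjoint}AA^{\cmcTranspose}$ and the order of the factors. One must make sure the conjugated Pauli lands on the left, so that the formula comes out as $\cmcLinearMap{A}\cmcOffset{B}+\cmcOffset{A}$ rather than some other combination. Addition of labels is commutative and signs are ignored, so once the factors are arranged correctly this is routine.
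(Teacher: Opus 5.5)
Your proposal is correct and follows the paper's proof essentially step for step. Both arguments use the same factorization $(AB)(AB)^{\mathsf{T}} = A(BB^{\mathsf{T}})A^{\dagger}(AA^{\mathsf{T}})$, combined with Clifford conjugation and the Pauli multiplication rule, and the same generator values ($I$, $Z_j$, $I$ for $H$, $S$, CNOT), followed by induction over the circuit, uniqueness from Pauli label uniqueness, and gate-by-gate computation of the offset; your explicit treatment of global phases and of the identity factors in each gate only fills in details the paper leaves implicit.
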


\begin{proof}
Put $\cmcTransposePauli{\cmcClifford}
=\cmcClifford\cmcClifford^{\cmcTranspose}$. The generating gates satisfy
\[
    \cmcTransposePauli{\cmcHadamard_j}=I,
    \qquad \cmcTransposePauli{\cmcPhaseGate_j}=\cmcPauliZ_j,
    \qquad \cmcTransposePauli{\cmcCNOT_{i\to j}}=I.
\]
For a product,
\[
    \cmcTransposePauli{AB} = (AB)(AB)^{\cmcTranspose}=A(BB^{\cmcTranspose})A^{\cmcTranspose}=A\cmcTransposePauli{B}A^{\cmcAdjoint}
          (AA^{\cmcTranspose})
       =A\cmcTransposePauli{B}A^{\cmcAdjoint}
          \cmcTransposePauli{A}.
\]
Lemma~\ref{lem:cmc-pauli-conjugation} and induction on the circuit
therefore show that this is always a Pauli up to phase. Uniqueness
follows from Lemma~\ref{lem:cmc-pauli-bell}. The above identity proves the recursion and computes the offset
one generator at a time.
\end{proof}

Applying the same Clifford to both registers permutes the Bell basis. The next lemma describes this map and shows how classical postprocessing recovers the measurement distribution before the Clifford was applied.

\begin{lemma}[The Clifford map on Bell labels]
\label{lem:cmc-clifford-bell}
For every Clifford $\cmcClifford$, there exists a map
$\cmcLabelMap{\cmcClifford}:\cmcField^{2\cmcQubits}
\to\cmcField^{2\cmcQubits}$ with the following properties.
\begin{enumerate}
    \item \emph{Affine form and offset.}
    $\cmcLabelMap{\cmcClifford}(v)
       =\cmcLinearMap{\cmcClifford}v+\cmcOffset{\cmcClifford}$,
    where $\cmcLinearMap{\cmcClifford}$ and
    $\cmcOffset{\cmcClifford}$ are as in
    Lemmas~\ref{lem:cmc-pauli-conjugation} and
    \ref{lem:cmc-transpose-pauli}. In particular,
    $\cmcLabelMap{\cmcClifford}(0)=\cmcOffset{\cmcClifford}$.
    \item \emph{Invertibility.} The map is a bijection, with
    $\cmcLabelMap{\cmcClifford}^{-1}(z)
       =\cmcLinearMap{\cmcClifford}^{-1}
          (z+\cmcOffset{\cmcClifford})$.
    \item \emph{Bell action.} For every label $v$,
    \begin{equation}
    \label{eq:cmc-affine-action}
        (\cmcClifford\otimes\cmcClifford)
           \cmcKet{\cmcBellLabel{v}}
           \propto\cmcKet{\cmcBellLabel{\cmcLabelMap{\cmcClifford}(v)}}.
    \end{equation}
    \item \emph{Preservation of parity.}
    $\cmcParity{\cmcLabelMap{\cmcClifford}(v)}=\cmcParity{v}$.
    Equivalently, if $(a',c')=\cmcLabelMap{\cmcClifford}(a,c)$,
    then $a'\cdot c'=a\cdot c$ in $\cmcField$.
    \item \emph{Composition.} For Cliffords $A,B$,
    $\cmcLabelMap{AB}=\cmcLabelMap{A}\circ\cmcLabelMap{B}$.
    \item \emph{Efficient computation.} The affine description,
    its inverse, and their evaluations are computable in polynomial
    time from the circuit description of $\cmcClifford$.
    \item \emph{Classical correction of outcomes.} For any density
    operator $\rho$ on two $\cmcQubits$-qubit registers, measuring
    $(\cmcClifford\otimes\cmcClifford)\rho
     (\cmcClifford^{\cmcAdjoint}\otimes\cmcClifford^{\cmcAdjoint})$
    in the Bell basis and replacing the outcome $z$ by
    $\cmcLabelMap{\cmcClifford}^{-1}(z)$ gives exactly the Bell
    measurement distribution of $\rho$.
\end{enumerate}
\end{lemma}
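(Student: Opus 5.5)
I will define $\cmcLabelMap{\cmcClifford}(v):=\cmcLinearMap{\cmcClifford}v+\cmcOffset{\cmcClifford}$ and derive everything from a single computation of the Bell action. The core step is property~3. Write $\cmcKet{\cmcBellLabel{v}}=(I\otimes\cmcPauli{v})\cmcPhi$ and apply the transpose trick~\eqref{eq:cmc-transpose-trick} with $A=\cmcClifford$ and $B=\cmcClifford\cmcPauli{v}$. This gives
\[
    (\cmcClifford\otimes\cmcClifford)\cmcKet{\cmcBellLabel{v}}
    =(I\otimes \cmcClifford\cmcPauli{v}\cmcClifford^{\cmcTranspose})\cmcPhi
    =(I\otimes \cmcClifford\cmcPauli{v}\cmcClifford^{\cmcAdjoint}\,\cmcClifford\cmcClifford^{\cmcTranspose})\cmcPhi .
\]
By Lemma~\ref{lem:cmc-pauli-conjugation}, the conjugated Pauli satisfies $\cmcClifford\cmcPauli{v}\cmcClifford^{\cmcAdjoint}\propto\cmcPauli{\cmcLinearMap{\cmcClifford}v}$. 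By Lemma~\ref{lem:cmc-transpose-pauli}, $\cmcClifford\cmcClifford^{\cmcTranspose}\propto\cmcPauli{\cmcOffset{\cmcClifford}}$. The product of the two Paulis is then $\propto\cmcPauli{\cmcLinearMap{\cmcClifford}v+\cmcOffset{\cmcClifford}}$ by the multiplication rule in Lemma~\ref{lem:cmc-pauli-bell}, which proves~\eqref{eq:cmc-affine-action}. Property~1, including $\cmcLabelMap{\cmcClifford}(0)=\cmcOffset{\cmcClifford}$, then holds by definition.

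Invertibility (property~2) follows because $\cmcLinearMap{\cmcClifford}$ is invertible; the inverse formula is obtained by solving the affine equation over $\cmcField$, where subtraction is addition. Composition (property~5) is the direct calculation
\[
    \cmcLabelMap{A}(\cmcLabelMap{B}(v))=\cmcLinearMap{A}\cmcLinearMap{B}v+\cmcLinearMap{A}\cmcOffset{B}+\cmcOffset{A},
\]
which equals $\cmcLinearMap{AB}v+\cmcOffset{AB}$ by the composition rules in Lemmas~\ref{lem:cmc-pauli-conjugation} and~\ref{lem:cmc-transpose-pauli}. Alternatively, it follows from property~3 together with uniqueness of Bell labels up to phase. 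Efficient computation (property~6) uses three facts: both lemmas compute $\cmcLinearMap{\cmcClifford}$ and $\cmcOffset{\cmcClifford}$ in polynomial time, a $2n\times2n$ matrix over $\cmcField$ can be inverted by Gaussian elimination, and evaluating an affine map is a matrix-vector product.

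For parity preservation (property~4), I would argue through the swap operator. The operator $\cmcClifford\otimes\cmcClifford$ commutes with $\cmcSwap$. Applying $\cmcSwap$ to both sides of~\eqref{eq:cmc-affine-action} and using~\eqref{eq:cmc-swap-sign} gives
\[
    (-1)^{\cmcParity{v}}(\cmcClifford\otimes\cmcClifford)\cmcKet{\cmcBellLabel{v}}
    \propto(-1)^{\cmcParity{\cmcLabelMap{\cmcClifford}(v)}}\cmcKet{\cmcBellLabel{\cmcLabelMap{\cmcClifford}(v)}} .
\]
The global phase in $\propto$ is the same on both sides, so the two signs must agree. The main care point is to track that this phase is common and actually cancels rather than absorbing the sign. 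I would do this by writing the relation as an exact equality with an explicit unit scalar $\lambda_v$.

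For classical correction (property~7), the Bell basis is orthonormal, so the probability of outcome $z$ on the transformed state is $\langle\beta_z|(\cmcClifford\otimes\cmcClifford)\rho(\cmcClifford\otimes\cmcClifford)^{\cmcAdjoint}|\beta_z\rangle$. Property~3 applied to $u=\cmcLabelMap{\cmcClifford}^{-1}(z)$ gives $(\cmcClifford\otimes\cmcClifford)^{\cmcAdjoint}\cmcKet{\cmcBellLabel{z}}\propto\cmcKet{\cmcBellLabel{u}}$. Hence this probability equals $\langle\beta_u|\rho|\beta_u\rangle$, the phase disappearing in the modulus squared. Since $\cmcLabelMap{\cmcClifford}$ is a bijection, relabeling $z\mapsto u$ pushes the outcome distribution forward to exactly the Bell distribution of $\rho$. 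I expect no real obstacle; the only delicate point is the phase bookkeeping in property~4.
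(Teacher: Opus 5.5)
Your proposal is correct and follows the paper's proof essentially step for step. You use the same definition $T_C(v)=F_Cv+t_C$, the same transpose-trick computation for the Bell action, and the same swap-commutation argument with an explicit unit scalar (the paper's $\eta_v$) for parity; composition, invertibility and efficiency come from Lemmas~\ref{lem:cmc-pauli-conjugation} and~\ref{lem:cmc-transpose-pauli}, and your classical-correction step via $(C\otimes C)^\dagger|\beta_z\rangle\propto|\beta_{T_C^{-1}(z)}\rangle$ is the paper's projector identity read in the reverse direction.
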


\begin{proof}
Define $\cmcLabelMap{\cmcClifford}(v)
=\cmcLinearMap{\cmcClifford}v+\cmcOffset{\cmcClifford}$.
The Lemma~\ref{lem:cmc-pauli-conjugation} and Lemma~\ref{lem:cmc-transpose-pauli} give its affine form, invertibility,
composition rule, and efficient computation. Since, $\cmcLinearMap{\cmcClifford}$ is invertible, we have $\cmcLabelMap{\cmcClifford}^{-1}(z)
       =\cmcLinearMap{\cmcClifford}^{-1}
          (z+\cmcOffset{\cmcClifford})$.  
Using \eqref{eq:cmc-transpose-trick}, we obtain
\begin{align*}
    (\cmcClifford\otimes\cmcClifford)
       \cmcKet{\cmcBellLabel{v}}
      &=(I\otimes\cmcClifford\cmcPauli{v}
                         \cmcClifford^{\cmcTranspose})\cmcPhi\\
      &=(I\otimes
         (\cmcClifford\cmcPauli{v}\cmcClifford^{\cmcAdjoint})
         (\cmcClifford\cmcClifford^{\cmcTranspose}))\cmcPhi\\
      &\propto(I\otimes
         \cmcPauli{\cmcLinearMap{\cmcClifford}v}
         \cmcPauli{\cmcOffset{\cmcClifford}})\cmcPhi
       \propto\cmcKet{\cmcBellLabel{\cmcLabelMap{\cmcClifford}(v)}}.
\end{align*}
This proves the Bell action.

To prove parity preservation, note that swap commutes with
$\cmcClifford\otimes\cmcClifford$. Write
$(\cmcClifford\otimes\cmcClifford)\cmcKet{\cmcBellLabel{v}}
=\eta_v\cmcKet{\cmcBellLabel{\cmcLabelMap{\cmcClifford}(v)}}$,
where $\cmcAbs{\eta_v}=1$. Applying swap and using
\eqref{eq:cmc-swap-sign} on the right side gives the eigenvalue
$(-1)^{\cmcParity{\cmcLabelMap{\cmcClifford}(v)}}$.
Commuting swap past $\cmcClifford\otimes\cmcClifford$ on the left side instead gives
$(-1)^{\cmcParity{v}}$. Hence we have that $(-1)^{\cmcParity{\cmcLabelMap{\cmcClifford}(v)}} = (-1)^{\cmcParity{v}}$ or $\cmcParity{\cmcLabelMap{\cmcClifford}(v)} = \cmcParity{v}$. In particular, $\cmcParity{\cmcOffset{\cmcClifford}}=0$.

Finally, the Bell action implies the exact projector identity
\[
    (\cmcClifford\otimes\cmcClifford)
      \cmcKet{\cmcBellLabel{v}}\cmcBra{\cmcBellLabel{v}}
    (\cmcClifford^{\cmcAdjoint}\otimes\cmcClifford^{\cmcAdjoint})
       =\cmcKet{\cmcBellLabel{\cmcLabelMap{\cmcClifford}(v)}}
        \cmcBra{\cmcBellLabel{\cmcLabelMap{\cmcClifford}(v)}}.
\]
Applying the transformation backwards through bell measurements outcomes, we get the
classical correction property.
\end{proof}

We now choose an antisymmetric Bell state as the probe for the attack.
This choice ensures that Bell outcomes have nonzero displacement even
after the same unitary is applied to both registers.

Let $e_1\in\cmcField^{\cmcQubits}$ be the first standard basis vector, and fix
\begin{equation}
\label{eq:cmc-probe}
    \cmcProbeLabel=(e_1,e_1),
    \qquad
    \cmcPsi{V}=(V\otimes V)\cmcProbe.
\end{equation}
By Lemma~\ref{lem:cmc-pauli-bell}, the probe $\cmcProbe$ is
antisymmetric. Since $V\otimes V$ commutes with swap, so
$\cmcPsi{V}$ is antisymmetric for every unitary $V$. 

\subsection{Haar acceptance bounds}
\label{subsec:attack-haar}

We next bound the probability that repeated Bell measurements yield
displacements in a fixed set. For a fixed unitary, fresh preparations
give independent outcomes, so the probability that all
$\cmcSamples$ outcomes lie in the set is the $\cmcSamples$-th power
of the single-trial probability. Averaging over a Haar-random unitary
therefore requires a moment bound. We begin with the corresponding
formula for projections of a Haar-random state.

\begin{lemma}[Moments of a Haar-state projection]
\label{lem:cmc-haar-state-moment}
Let $\cmcKet{\vartheta}$ be a Haar-random unit vector in
$\cmcComplex^N$, and let $\cmcProjector$ be an orthogonal projector
of rank $r$, where $1\leq r\leq N$. For every positive integer
$\cmcSamples$,
\[
    \cmcExpectation_{\vartheta}
       \left[\cmcVectorNorm{\cmcProjector\cmcKet{\vartheta}}^{2\cmcSamples}\right]
       =\frac{\binom{r+\cmcSamples-1}{\cmcSamples}}
               {\binom{N+\cmcSamples-1}{\cmcSamples}}
       =\frac{\cmcRising{r}{\cmcSamples}}
               {\cmcRising{N}{\cmcSamples}},
\]
where $\cmcRising{x}{\cmcSamples}
=\prod_{j=0}^{\cmcSamples-1}(x+j)$ denotes the rising factorial.
\end{lemma}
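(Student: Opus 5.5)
By unitary invariance of the Haar measure, the expectation depends only on the rank $r$ of $\cmcProjector$, so we may assume without loss of generality that $\cmcProjector$ projects onto the span of the first $r$ standard basis vectors. Then $\cmcVectorNorm{\cmcProjector\cmcKet{\vartheta}}^2=\sum_{i\le r}|\vartheta_i|^2$. The plan is to compute its $\cmcSamples$-th moment through the symmetric-subspace formula for Haar moments:
\[
    \cmcExpectation_\vartheta\bigl[(\cmcKet{\vartheta}\cmcBra{\vartheta})^{\otimes \cmcSamples}\bigr]
    =\frac{\Pi_{\mathrm{sym}}^{(N,\cmcSamples)}}{\binom{N+\cmcSamples-1}{\cmcSamples}}.
\]
This identity holds because the left side commutes with every $U^{\otimes \cmcSamples}$ and is supported on the symmetric subspace, which is irreducible, so Schur's lemma fixes it up to a scalar; the scalar is pinned down by taking traces.

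Next, write $\cmcVectorNorm{\cmcProjector\cmcKet{\vartheta}}^{2\cmcSamples}=\operatorname{Tr}\bigl[\cmcProjector^{\otimes \cmcSamples}(\cmcKet{\vartheta}\cmcBra{\vartheta})^{\otimes \cmcSamples}\bigr]$. Taking expectations and using the identity above gives
\[
    \frac{\operatorname{Tr}\bigl[\cmcProjector^{\otimes \cmcSamples}\Pi_{\mathrm{sym}}^{(N,\cmcSamples)}\bigr]}{\binom{N+\cmcSamples-1}{\cmcSamples}}.
\]
Since $\cmcProjector^{\otimes \cmcSamples}$ commutes with permutations of the tensor factors, the operator $\cmcProjector^{\otimes \cmcSamples}\Pi_{\mathrm{sym}}^{(N,\cmcSamples)}$ is the projector onto the symmetric subspace of $(\operatorname{range}\cmcProjector)^{\otimes \cmcSamples}$. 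Its trace is therefore $\binom{r+\cmcSamples-1}{\cmcSamples}$, which proves the first equality. The rising-factorial form follows from $\binom{x+\cmcSamples-1}{\cmcSamples}=\cmcRising{x}{\cmcSamples}/\cmcSamples!$, since the factors $\cmcSamples!$ cancel.

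The step most in need of care is justifying the symmetric-subspace moment formula. If invoking Schur's lemma is undesirable, a direct alternative is available: a Haar vector can be realized as a normalized standard complex Gaussian vector, so $\sum_{i\le r}|\vartheta_i|^2\sim\mathrm{Beta}(r,N-r)$ (a Dirichlet marginal). Its $\cmcSamples$-th moment is $\cmcRising{r}{\cmcSamples}/\cmcRising{N}{\cmcSamples}$, which gives the result again; the case $r=N$ is trivially $1$.
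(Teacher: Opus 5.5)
Your proposal is correct and is essentially the paper's argument: the paper makes the same reduction to a coordinate projector and uses the same Haar moment formula, written in the type-state basis as $\frac{1}{|\mathcal{T}|}\sum_t \lvert T_t\rangle\langle T_t\rvert$ (which equals $\Pi_{\mathrm{sym}}/\binom{N+m-1}{m}$), though it cites this formula from AGL24 rather than deriving it via Schur's lemma. Its final step counts the types supported on the first $r$ coordinates, which is exactly your trace $\operatorname{Tr}\bigl[\Pi^{\otimes m}\Pi_{\mathrm{sym}}\bigr]=\dim\mathrm{Sym}^m(\operatorname{range}\Pi)=\binom{r+m-1}{m}$.
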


\begin{proof}
By unitary invariance of the Haar distribution, we may choose a basis
such that
\[
    \cmcProjector=\sum_{i=1}^{r}\cmcKet{i}\cmcBra{i}.
\]

Let
\[
    \mathcal{T}
      =\left\{
          t=(t_1,\ldots,t_N)\in\mathbb{Z}_{\geq0}^{N}
          : \sum_{i=1}^{N}t_i=\cmcSamples
        \right\}
\]
be the set of type vectors. For each $t\in\mathcal{T}$, let $S_t$
contain the strings in $[N]^{\cmcSamples}$ in which the symbol $i$
occurs exactly $t_i$ times. The corresponding normalized type state is
\[
    \cmcKet{T_t}
      =\frac{1}{\sqrt{|S_t|}}
         \sum_{x\in S_t}\cmcKet{x},
    \qquad
    |S_t|=\frac{\cmcSamples!}{\prod_{i=1}^{N}t_i!}.
\]
Different types have disjoint supports, so these states are orthonormal.

The Haar-state moment identity of~\cite[Lemma~3.9]{AGL24} states that
\[
    \cmcExpectation_{\vartheta}
      \left[
        (\cmcKet{\vartheta}\cmcBra{\vartheta})^{\otimes\cmcSamples}
      \right]
      =\frac{1}{|\mathcal{T}|}
         \sum_{t\in\mathcal{T}}\cmcKet{T_t}\cmcBra{T_t}.
\]
Since $\cmcProjector$ is an orthogonal projector,
\[
    \cmcVectorNorm{\cmcProjector\cmcKet{\vartheta}}^{2\cmcSamples}
      =\operatorname{Tr}\!\left[
          \cmcProjector^{\otimes\cmcSamples}
          (\cmcKet{\vartheta}\cmcBra{\vartheta})^{\otimes\cmcSamples}
        \right].
\]
Taking expectations therefore gives
\[
    \cmcExpectation_{\vartheta}
      \left[
        \cmcVectorNorm{\cmcProjector\cmcKet{\vartheta}}^{2\cmcSamples}
      \right]
      =\frac{1}{|\mathcal{T}|}
         \sum_{t\in\mathcal{T}}
           \cmcBra{T_t}\cmcProjector^{\otimes\cmcSamples}
           \cmcKet{T_t}.
\]

The projector $\cmcProjector^{\otimes\cmcSamples}$ keeps precisely
the basis strings whose symbols all lie in $[r]$. Consequently,
\[
    \cmcProjector^{\otimes\cmcSamples}\cmcKet{T_t}
      =
      \begin{cases}
        \cmcKet{T_t}, & t_i=0\text{ for every }i>r,\\
        0,           & \text{otherwise}.
      \end{cases}
\]
Thus $\cmcExpectation_{\vartheta}
      \left[
        \cmcVectorNorm{\cmcProjector\cmcKet{\vartheta}}^{2\cmcSamples}
      \right]$ is the fraction of type vectors supported on the
first $r$ coordinates. By a combinatorial argument, there are
$\binom{N+\cmcSamples-1}{\cmcSamples}$ types in total and
$\binom{r+\cmcSamples-1}{\cmcSamples}$ supported on those coordinates.
Hence
\[
    \cmcExpectation_{\vartheta}
      \left[
        \cmcVectorNorm{\cmcProjector\cmcKet{\vartheta}}^{2\cmcSamples}
      \right]
      =
      \frac{\binom{r+\cmcSamples-1}{\cmcSamples}}
           {\binom{N+\cmcSamples-1}{\cmcSamples}}
      =
      \frac{\prod_{j=0}^{\cmcSamples-1}(r+j)}
           {\prod_{j=0}^{\cmcSamples-1}(N+j)}
      =
      \frac{\cmcRising{r}{\cmcSamples}}
           {\cmcRising{N}{\cmcSamples}}.
\]
\end{proof}

For the antisymmetric probe, the displacement acceptance probability
can be written as an average of projection probabilities in dimension
$\cmcDimension-1$. Each term has the Haar-state moments computed
above, and convexity bounds the moments of their average.

\begin{lemma}[Haar acceptance of displacement sets]
\label{lem:cmc-haar-displacement}
Let $R$ be a fixed nonempty subset of
$\cmcField^{\cmcQubits}\setminus\{0\}$, and write $r=\cmcAbs{R}$.
For a unitary $V$, let $\cmcAcceptance{R}(V)$ be the probability
that a Bell measurement of $\cmcPsi{V}$ has displacement in $R$.
Then, for every positive integer $\cmcSamples$,
\begin{equation}
\label{eq:cmc-haar-displacement-moment}
    \cmcExpectation_{V\leftarrow\cmcHaar(\cmcUnitaryGroup{\cmcDimension})}
       \left[\cmcAcceptance{R}(V)^{\cmcSamples}\right]
      \leq\frac{\cmcRising{r}{\cmcSamples}}
                   {\cmcRising{\cmcDimension-1}{\cmcSamples}}.
\end{equation}
In particular, if $r\leq\cmcDimension/2$ and
$1\leq\cmcSamples\leq\cmcDimension-2$, then
\begin{equation}
\label{eq:cmc-haar-displacement-geometric}
    \cmcExpectation_V\left[\cmcAcceptance{R}(V)^{\cmcSamples}\right]
      \leq\left(\frac34\right)^{\cmcSamples}.
\end{equation}
\end{lemma}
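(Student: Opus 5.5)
We work throughout in the antisymmetric subspace of the two registers. By Lemma~\ref{lem:cmc-pauli-bell} and Definition~\ref{def:attack-displacement}, this subspace is spanned by the Bell vectors $\ket{\beta_{a,c}}$ with $a\cdot c=1$. For each $a\neq0$, exactly $D/2$ of these have displacement $a$, so the antisymmetric part of $\mathcal H_a$ has dimension $D/2$. The state $\cmcPsi{V}$ is antisymmetric. Hence $p_R(V)=\|\Pi_R\cmcPsi{V}\|^2$, where $\Pi_R$ is the projector onto the span of the antisymmetric Bell vectors with displacement in $R$; this projector has rank $rD/2$ inside the antisymmetric space of dimension $N_-=D(D-1)/2$. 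The difficulty is that $\cmcPsi{V}$ is not a Haar-random vector in that space, since $V\otimes V$ acts irreducibly but not transitively. So Lemma~\ref{lem:cmc-haar-state-moment} cannot be applied directly. The plan is to exhibit $p_R(V)$ as an average of projection probabilities of vectors that \emph{are} Haar-random in dimension $D-1$.

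\textbf{Key step (reduction to dimension $D-1$).} Write $\cmcProbe=(I\otimes P_{v_\star})\ket{\Phi}$. By \eqref{eq:cmc-transpose-trick}, $\cmcPsi{V}=(I\otimes VP_{v_\star}V^{\mathsf T})\ket\Phi$, which corresponds to the antisymmetric matrix $A_V=VP_{v_\star}V^{\mathsf T}/\sqrt D$. For a fixed first-register outcome, I would compute the displacement distribution by measuring the first register in the computational basis, using that Bell outcomes grouped by displacement equal the XOR of computational outcomes (Definition~\ref{def:attack-displacement}). Given first outcome $x$, the second register is in a state proportional to the $x$-th row of $A_V$, and this row is orthogonal to $\ket{x}$ by antisymmetry. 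The displacement lies in $R$ exactly when the second outcome lies in $x+R$. Thus
\[
p_R(V)=\frac1D\sum_x\|\Pi_{x+R}\ket{\vartheta_x}\|^2,
\]
where $\ket{\vartheta_x}$ is the normalized row. Its weight is $1/D$ because each row of $VP_{v_\star}V^{\mathsf T}$ is a unit vector. Each $\ket{\vartheta_x}$ lives in the $(D-1)$-dimensional space $\ket{x}^\perp$, and $\Pi_{x+R}$ is a rank-$r$ projector there, since $0\notin R$.

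\textbf{Claim: each $\ket{\vartheta_x}$ is Haar-random on $\ket{x}^\perp$.} For $W$ a unitary fixing $\ket x$, right-invariance of Haar measure gives $V\mapsto VW$. I need instead a left action, $V\mapsto W'V$ for $W'$ fixing $\ket x$. Under this action the row $x$ becomes $\bra{x}W'VP V^{\mathsf T}W'^{\mathsf T}$, which equals the old row times $W'^{\mathsf T}$, and $W'^{\mathsf T}$ is an arbitrary unitary fixing $\ket{x}$. Combining this invariance with the Haar measure on the unit sphere of $\ket{x}^\perp$ gives the claim. This is the step I expect to require the most care, both in confirming invariance under all unitaries of $\ket{x}^\perp$ and in handling the transpose convention correctly.

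\textbf{Conclusion.} By convexity of $t\mapsto t^m$ (Jensen over the uniform average in $x$),
\[
\mathbb E[p_R(V)^m]\le\frac1D\sum_x\mathbb E\|\Pi_{x+R}\ket{\vartheta_x}\|^{2m}.
\]
Lemma~\ref{lem:cmc-haar-state-moment} with $N=D-1$ and rank $r$ bounds each term by $(r)^{\overline m}/(D-1)^{\overline m}$, which gives \eqref{eq:cmc-haar-displacement-moment}. For \eqref{eq:cmc-haar-displacement-geometric}, with $r\le D/2$ the $j$-th factor is
\[
\frac{r+j}{D-1+j}\le\frac{D/2+j}{D-1+j}.
\]
This ratio increases in $j$, so its largest value over $j\le m-1\le D-3$ occurs at $j=D-3$, where it equals $(3D/2-3)/(2D-4)=3/4$. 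Hence each factor is at most $3/4$, and the product is at most $(3/4)^m$.
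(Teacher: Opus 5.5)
Your proposal is correct and follows essentially the same route as the paper. In both, $\ket{\psi^-_V}$ is decomposed by first-register outcome into unit vectors $V P_{v_\star} V^{\mathsf T}\ket{x}$, which are orthogonal to $\ket{x}$ by skew-symmetry. Left Haar invariance under unitaries fixing $\ket{x}$ makes each one Haar-random in dimension $D-1$, and the proof finishes with the rank-$r$ projection-moment lemma, Jensen, and the factorwise bound $\frac{D/2+j}{D-1+j}\le 3/4$ for $j\le D-3$. Your opening discussion of the antisymmetric subspace is accurate but not needed.
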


\begin{proof}
For each computational-basis label $x$, define
\[
    \cmcKet{\vartheta_x(V)}
      =V\cmcPauli{\cmcProbeLabel}
         V^{\cmcTranspose}\cmcKet{x}.
\]
Expanding $\cmcPsi{V}$ gives
\begin{equation}
\label{eq:cmc-skew-coefficients}
\begin{aligned}
    \cmcPsi{V}
      &=\frac1{\sqrt{\cmcDimension}}
          \sum_z V\cmcKet{z}\otimes
                 V\cmcPauli{\cmcProbeLabel}\cmcKet{z}\\
      &=\frac1{\sqrt{\cmcDimension}}
          \sum_x \cmcKet{x}\otimes\cmcKet{\vartheta_x(V)}.
\end{aligned}
\end{equation}
Each $\cmcKet{\vartheta_x(V)}$ is normalized because
$V\cmcPauli{\cmcProbeLabel}V^{\cmcTranspose}$ is unitary.
Moreover,
$\cmcPauli{\cmcProbeLabel}^{\cmcTranspose}
=-\cmcPauli{\cmcProbeLabel}$, so this matrix is skew-symmetric.
Its diagonal entries are therefore $0$:
\[
    \cmcBra{x}\cmcKet{\vartheta_x(V)}=0.
\]
Thus, measuring the first register gives a uniformly random $x$,
and the second register is a unit vector orthogonal to $\cmcKet{x}$.

We claim that, for each fixed $x$, $\cmcKet{\vartheta_x(V)}$ is Haar-random
in the $(\cmcDimension-1)$-dimensional subspace orthogonal to
$\cmcKet{x}$. To see this, we show that $A\cmcKet{\vartheta_x(V)}$ has the same distribution as $\cmcKet{\vartheta_x(V)}$ for any $A$ that fixes $\cmcKet{x}$. That is
$A^{\cmcTranspose}\cmcKet{x}=\cmcKet{x}$, and hence
\[
    \cmcKet{\vartheta_x(AV)}
      =AV\cmcPauli{\cmcProbeLabel}
          V^{\cmcTranspose}A^{\cmcTranspose}\cmcKet{x}
      =A\cmcKet{\vartheta_x(V)}.
\]
Since $AV$ has the same distribution as $V$ ($V$ is sampled from the Haar distribution), the distribution of
$\cmcKet{\vartheta_x(V)}$ is invariant under every $A$. Hence $\cmcKet{\vartheta_x(V)}$ is Haar-random
in the $(\cmcDimension-1)$-dimensional subspace orthogonal to
$\cmcKet{x}$.

For each displacement $b$, the Bell vectors
$\{\cmcKet{\cmcBellLabel{b,d}}\}_d$ form an orthonormal basis of
$\operatorname{span}\{\cmcKet{x,x+b}\}_x$.
Consequently, the total probability of displacement $b$ is the
same whether we measure in the Bell basis or the computational basis.
Using \eqref{eq:cmc-skew-coefficients}, we therefore have
\[
    \cmcAcceptance{R}(V)
      =\frac1{\cmcDimension}\sum_x s_x(V),
    \qquad
    s_x(V)
      =\sum_{b\in R}
         \cmcAbs{\cmcBra{x+b}\cmcKet{\vartheta_x(V)}}^2.
\]
Because $0\notin R$, the vectors
$\{\cmcKet{x+b}:b\in R\}$ are $r$ distinct orthonormal vectors
orthogonal to $\cmcKet{x}$.
Thus $s_x(V)$ is the squared norm of a rank-$r$ projection of
a Haar-random unit vector in dimension $\cmcDimension-1$.
Lemma~\ref{lem:cmc-haar-state-moment} gives
\[
    \cmcExpectation_V[s_x(V)^{\cmcSamples}]
      =\frac{\cmcRising{r}{\cmcSamples}}
             {\cmcRising{\cmcDimension-1}{\cmcSamples}}.
\]
Finally, convexity of $t\mapsto t^{\cmcSamples}$ implies
\begin{align*}
    \cmcExpectation_V
      \left[\cmcAcceptance{R}(V)^{\cmcSamples}\right]
      &=\cmcExpectation_V\left[
          \left(\frac1{\cmcDimension}\sum_x s_x(V)\right)^{\cmcSamples}
        \right]\\
      &\leq\frac1{\cmcDimension}\sum_x
          \cmcExpectation_V[s_x(V)^{\cmcSamples}]\\
      &=\frac{\cmcRising{r}{\cmcSamples}}
             {\cmcRising{\cmcDimension-1}{\cmcSamples}}.
\end{align*}
This proves \eqref{eq:cmc-haar-displacement-moment}.

If $r\leq\cmcDimension/2$ and
$\cmcSamples\leq\cmcDimension-2$, then every
$0\leq j\leq\cmcSamples-1$ satisfies $j\leq\cmcDimension-3$,
so
\[
    \frac{r+j}{\cmcDimension-1+j}
      \leq\frac{\cmcDimension/2+j}{\cmcDimension-1+j}
      \leq\frac34.
\]
Multiplying these inequalities gives
\[
    \frac{\cmcRising{r}{\cmcSamples}}
         {\cmcRising{\cmcDimension-1}{\cmcSamples}}
      =\prod_{j=0}^{\cmcSamples-1}
         \frac{r+j}{\cmcDimension-1+j}
      \leq\left(\frac34\right)^{\cmcSamples}.
\]
\end{proof}

\section{\texorpdfstring{\(\NP\)}{NP}-aided attack on CMC type constructions}
\label{sec:cmc}
\label{sec:cmc-attack}

In this section, we give an attack on CMC type constructions: a monomial
layer between two Clifford unitaries. We use the Bell-measurement and
Clifford machinery developed in Section~\ref{sec:clifford-bell} to turn
quantum measurements into classical constraints on a candidate key.
A single classical \(\NP\) query checks whether any key satisfies all
these constraints. A genuine CMC unitary always passes this check,
whereas the Haar bounds from Section~\ref{subsec:attack-haar} show that
a Haar-random unitary passes with small probability.

Fix $\cmcQubits\geq2$ and write $\cmcDimension=2^{\cmcQubits}$.
We consider a family indexed by a nonempty key space
$\cmcKeySpace\subseteq\cmcBitStrings{\cmcKeyLength}$, with
\[
    \cmcUnitary_k
      =\cmcClifford_{2,k}\cmcMonomial_k\cmcClifford_{1,k}.
\]
Here $\cmcClifford_{1,k}$ and $\cmcClifford_{2,k}$ are Clifford
unitaries, and $\cmcMonomial_k$ is a monomial layer as in
Definition~\ref{def:monomial-layer}, with permutation
$\cmcPermutation_k$ and phases $\cmcPhase_k$.
We assume uniform deterministic algorithms that, in time polynomial
in $\cmcQubits+\cmcKeyLength$, check whether $k\in\cmcKeySpace$
on input $(\cmcUnary{\cmcQubits},\cmcUnary{\cmcKeyLength},k)$,
produce both Clifford circuit descriptions up to global phase on
input $(\cmcUnary{\cmcQubits},k)$, and evaluate
$\cmcPermutation_k(x)$ on input $(\cmcUnary{\cmcQubits},k,x)$.
The phases may be arbitrary: the attack never evaluates
$\cmcPhase_k$ or $\cmcPermutation_k^{-1}$. Notice that using the \(\NP\) oracle, we can also learn the candidate key $k$.

\begin{theorem}[\(\NP\)-aided distinction of CMC and Haar unitaries]
\label{thm:cmc-np-attack}
For every CMC family satisfying the assumptions above, there exists
a uniform quantum algorithm $\cmcAdversary$ with only classical
access to an \(\NP\) oracle satisfying the following guarantee.

For every positive integer $\cmcSamples$, the algorithm receives
$(\cmcUnary{\cmcQubits},\cmcUnary{\cmcKeyLength},\cmcUnary{\cmcSamples})$
and oracle access to an unknown unitary $\cmcUnitary$.
It runs in time polynomial in
$\cmcQubits+\cmcKeyLength+\cmcSamples$, uses $2\cmcSamples$
nonadaptive forward queries to $\cmcUnitary$ and one classical query
to an \(\NP\)-complete language, and outputs a bit with these guarantees:
\begin{enumerate}
    \item \textbf{CMC case.} If $\cmcUnitary=\cmcUnitary_k$ for any
    $k\in\cmcKeySpace$, then $\cmcAdversary$ outputs $1$ with
    probability one.
    \item \textbf{Haar case.} If $\cmcUnitary$ is Haar-random, then
    \begin{equation}
    \label{eq:cmc-final-soundness}
        \cmcProbability_{\cmcUnitary\leftarrow
          \cmcHaar(\cmcUnitaryGroup{\cmcDimension})}
          [\cmcAdversary^{\cmcUnitary}=1]
        \leq 2^{\cmcKeyLength}
          \frac{\cmcRising{\cmcDimension/2}{\cmcSamples}}
               {\cmcRising{\cmcDimension-1}{\cmcSamples}}.
    \end{equation}
    For $\cmcSamples\leq\cmcDimension-2$, this is at most
    $2^{\cmcKeyLength}(3/4)^{\cmcSamples}$.
\end{enumerate}

In particular, suppose $\cmcKeyLength=\cmcKeyLength(\cmcQubits)$
is polynomially bounded.
For every positive integer-valued polynomial
$\cmcErrorExponent=\cmcErrorExponent(\cmcQubits)$, taking
$\cmcSamples=3(\cmcKeyLength+\cmcErrorExponent)$ gives distinguishing
advantage at least $1-2^{-\cmcErrorExponent}$ for all sufficiently
large $\cmcQubits$, under any distribution on the CMC key.
\end{theorem}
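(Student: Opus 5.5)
The algorithm $\cmcAdversary$ prepares $\cmcSamples$ fresh copies of the probe $\cmcProbe$. For each copy it applies the oracle $\cmcUnitary$ to both registers, using two forward queries per copy and $2\cmcSamples$ in total, all nonadaptive. It then measures in the Bell basis, recording outcomes $z_1,\ldots,z_\cmcSamples$. Finally it makes one $\NP$ query asking whether there exist $h\in\cmcKeySpace$ and strings $x_1,\ldots,x_\cmcSamples$ satisfying
\[
    \cmcPermutation_h(x_i)+\cmcPermutation_h(x_i+a_h)=b_i^{(h)}\qquad\text{for every }i.
\]
Here $(a_h,c_h)=\cmcLabelMap{\cmcClifford_{1,h}}(\cmcProbeLabel)$ and $(b_i^{(h)},d_i^{(h)})=\cmcLabelMap{\cmcClifford_{2,h}}^{-1}(z_i)$. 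By Lemma~\ref{lem:cmc-clifford-bell} (efficient computation), together with the assumed algorithms for key membership, Clifford descriptions and evaluation of $\cmcPermutation_h$, the witness can be checked in time polynomial in $\cmcQubits+\cmcKeyLength+\cmcSamples$. So this is an $\NP$ language, and we reduce it to the $\NP$-complete language. Note that the check never uses $\cmcPhase_h$ or $\cmcPermutation_h^{-1}$.

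\textbf{Completeness.} Suppose $\cmcUnitary=\cmcUnitary_k$.
\begin{enumerate}
    \item By the Bell action in Lemma~\ref{lem:cmc-clifford-bell}, the first layer sends the probe to $\ket{\beta_{a_k,c_k}}$ up to phase.
    \item By parity preservation, $a_k\cdot c_k=1$, so $a_k\neq 0$.
    \item By the classical-correction property applied to $\cmcClifford_{2,k}$, the corrected outcome $\cmcLabelMap{\cmcClifford_{2,k}}^{-1}(z_i)$ is distributed as a Bell measurement of $(\cmcMonomial_k\otimes\cmcMonomial_k)\ket{\beta_{a_k,c_k}}$.
    \item This state is $\cmcDimension^{-1/2}\sum_x(\pm)\,\cmcPhase_k(x)\cmcPhase_k(x+a_k)\ket{\cmcPermutation_k(x),\cmcPermutation_k(x+a_k)}$. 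It lies in the span of the sectors $\cmcSector{b}$ with $b\in R_k:=\{\cmcPermutation_k(x)+\cmcPermutation_k(x+a_k)\}$.
    \item A Bell outcome has displacement $b$ only if the state has weight on $\cmcSector{b}$, so every corrected displacement lies in $R_k$.
\end{enumerate}
Hence $h=k$ together with the appropriate $x_i$ is a witness, and the algorithm accepts with probability one.

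\textbf{Soundness.} Let $\cmcUnitary$ be Haar-random. Fix a candidate $h$ and write $V_h:=\cmcClifford_{2,h}^{\cmcAdjoint}\cmcUnitary\cmcClifford_{1,h}$, which is again Haar-distributed.
\begin{enumerate}
    \item By the composition and correction properties, the corrected outcomes are distributed as Bell measurements of $(V_h\otimes V_h)(\cmcClifford_{1,h}^{\cmcAdjoint}\otimes\cmcClifford_{1,h}^{\cmcAdjoint})\ket{\beta_{a_h,c_h}}$. This is $\propto(V_h\otimes V_h)\cmcProbe=\cmcPsi{V_h}$.
    \item The allowed set $R_h$ is fixed once $h$ is fixed. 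It excludes $0$ since $a_h\ne0$ and $\cmcPermutation_h$ is injective. It has size at most $\cmcDimension/2$ because $x$ and $x+a_h$ yield the same displacement.
    \item Conditioned on $\cmcUnitary$, the $\cmcSamples$ trials are independent, so $h$ survives with probability $\cmcAcceptance{R_h}(V_h)^\cmcSamples$.
    \item Lemma~\ref{lem:cmc-haar-displacement} bounds its expectation by $\cmcRising{r}{\cmcSamples}/\cmcRising{\cmcDimension-1}{\cmcSamples}\le\cmcRising{\cmcDimension/2}{\cmcSamples}/\cmcRising{\cmcDimension-1}{\cmcSamples}$, using that the rising factorial is monotone in $r$.
\end{enumerate}
A union bound over at most $2^{\cmcKeyLength}$ keys gives \eqref{eq:cmc-final-soundness}. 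The $(3/4)^\cmcSamples$ form follows from \eqref{eq:cmc-haar-displacement-geometric}.

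\textbf{Main obstacle.} The delicate point is in the soundness argument. The key $h$ chosen by the oracle depends on the outcomes, so we must not bound probabilities for an $h$ fixed after measurement. The union bound handles this correctly: fix each $h$ in advance, bound its survival probability, and sum. A second point is checking that the first-layer conjugation really maps back to the probe, since $V_h$ absorbs $\cmcClifford_{1,h}$.

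\textbf{Final claim.} With $\cmcSamples=3(\cmcKeyLength+\cmcErrorExponent)$ we get
\[
    2^{\cmcKeyLength}(3/4)^{\cmcSamples}=\bigl(2(27/64)\bigr)^{\cmcKeyLength}(27/64)^{\cmcErrorExponent}\le 2^{-\cmcErrorExponent},
\]
using $2\cdot 27/64<1$ and $27/64<1/2$. This requires $\cmcSamples\le\cmcDimension-2$, which holds for large $\cmcQubits$ because $\cmcKeyLength$ and $\cmcErrorExponent$ are polynomial. Completeness holds for every key, hence under any key distribution, so the distinguishing advantage is at least $1-2^{-\cmcErrorExponent}$.
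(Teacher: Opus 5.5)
Your proof follows the paper's argument step for step: the same antisymmetric probe, the same $\NP$ predicate on corrected displacements, completeness from the Bell support of the monomial layer, and soundness by fixing each candidate $h$ in advance, applying Lemma~\ref{lem:cmc-haar-displacement}, and taking a union bound over at most $2^L$ keys. The final parameter calculation is also correct.

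There is one incorrect identity, in soundness step~1. The oracle's input is always the probe $\ket{\beta_{v_\star}}$. So the correction property says the corrected outcomes are Bell measurements of $(C_{2,h}^{\dagger}U\otimes C_{2,h}^{\dagger}U)\ket{\beta_{v_\star}}$. You instead wrote this operator applied to $\ket{\beta_{a_h,c_h}}$.

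With your choice $V_h=C_{2,h}^{\dagger}UC_{1,h}$, the true state is $(V_hC_{1,h}^{\dagger}\otimes V_hC_{1,h}^{\dagger})\ket{\beta_{v_\star}}$. Your state is $(V_h\otimes V_h)\ket{\beta_{v_\star}}\propto(C_{2,h}^{\dagger}U\otimes C_{2,h}^{\dagger}U)\ket{\beta_{a_h,c_h}}$, which is orthogonal to the true state whenever $(a_h,c_h)\neq v_\star$. For example, for $U=C_{2,h}$ the true corrected displacement is always $e_1$, while yours is always $a_h$. Hence the survival probability of $h$ conditioned on $U$ is $p_{R_h}(C_{2,h}^{\dagger}U)^m$, not $p_{R_h}(V_h)^m$.

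The fix is to set $V_h:=C_{2,h}^{\dagger}U$, as the paper does. The candidate's first layer $C_{1,h}$ enters only through $a_h$, and hence through $R_h$; it should not be absorbed into the unitary. Your final bound is still correct, because $U\mapsto UC_{1,h}$ preserves Haar measure, so $\mathbb{E}_U[p_{R_h}(C_{2,h}^{\dagger}UC_{1,h})^m]=\mathbb{E}_U[p_{R_h}(C_{2,h}^{\dagger}U)^m]$. But you need to invoke that invariance explicitly; the pointwise claim as written is false.
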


We prove the theorem by describing the algorithm that learns the key and its \(\NP\)
query, then analyzing the CMC and Haar cases.

\subsection{Algorithm}
\label{subsec:cmc-algorithm}

The algorithm collects $\cmcSamples$ Bell measurement outcomes and asks whether
one candidate key can explain all of them.
\begin{enumerate}
    \item For each $i=1,\ldots,\cmcSamples$, prepare a fresh copy of $\cmcProbe$ from
    \eqref{eq:cmc-probe}.
    \item For each $i$, apply $\cmcUnitary$ to both registers and measure
    $\cmcPsi{\cmcUnitary}$ in the Bell basis.\footnote{The Bell measurement uses CNOT gates from the first register to the
    second, followed by Hadamard gates on the first register and
    computational-basis measurements of the whole state. The first register yields $d_i$
    and the second yields $b_i$.} Record the full label
    $z_i=(b_i,d_i)$.
    \item Make the classical \(\NP\) query described below. Output
    $1$ if the answer is yes, and $0$ otherwise.
\end{enumerate}
Each repetition uses two forward oracle queries, for a total of
$2\cmcSamples$. All queries can be made in parallel before any
measurement.

\subsection{\texorpdfstring{\(\NP\)}{NP} query}
\label{subsec:cmc-np-query}

A candidate key $h$ specifies both Clifford layers. The first layer
maps $\cmcProbe$ to a Bell state with label $(a_h,c_h)$; the
second layer can be undone by classically correcting each observed
Bell label. More precisely, compute
\begin{equation}
\label{eq:cmc-correction}
\begin{aligned}
    (a_h,c_h)
      &=\cmcLabelMap{\cmcClifford_{1,h}}(\cmcProbeLabel),\\
    (\cmcCorrected{b_i}{h},\cmcCorrected{d_i}{h})
      &=\cmcLabelMap{\cmcClifford_{2,h}}^{-1}(z_i).
\end{aligned}
\end{equation}
These maps are efficiently computable by
Lemma~\ref{lem:cmc-clifford-bell}. They preserve Bell parity, so
$a_h\cdot c_h=1$ and, in particular, $a_h\neq0$.
Both coordinates of $z_i$ are needed to compute the correction,
even though the test below uses only the corrected displacement.

The monomial layer sends a basis pair $(x,x+a_h)$ to
$(\cmcPermutation_h(x),\cmcPermutation_h(x+a_h))$.
The query asks whether every corrected displacement could arise
this way for one common key:
\begin{equation}
\label{eq:cmc-np-predicate}
\begin{gathered}
    \exists h\in\cmcKeySpace\;
    \exists x_1,\ldots,x_{\cmcSamples}\in\cmcField^{\cmcQubits}:\\
    \cmcPermutation_h(x_i+a_h)+\cmcPermutation_h(x_i)
      =\cmcCorrected{b_i}{h}
    \quad\text{for every }i=1,\ldots,\cmcSamples.
\end{gathered}
\end{equation}
The witness consists of $h,x_1,\ldots,x_{\cmcSamples}$ and has
$\cmcKeyLength+\cmcSamples\cmcQubits$ bits. Since a poly-time algorithm can check the validity of the witness.
Thus one query to an \(\NP\)-complete language decides whether
such a witness exists. 

\subsection{CMC case}
\label{subsec:cmc-case}

We first record why the true key always passes the test, and why
each candidate allows at most half of all displacements.

\begin{lemma}[Bell support of a monomial layer]
\label{lem:cmc-monomial-support}
Let $\cmcMonomial$ be a monomial layer with permutation
$\cmcPermutation$. For $a\in\cmcField^{\cmcQubits}$, define
\[
    \cmcRange{\cmcPermutation}(a)
      =\{\cmcPermutation(x+a)+\cmcPermutation(x):
          x\in\cmcField^{\cmcQubits}\}.
\]
Every Bell outcome $(b,d)$ of
$(\cmcMonomial\otimes\cmcMonomial)\cmcKet{\cmcBellLabel{a,c}}$
has $b\in\cmcRange{\cmcPermutation}(a)$.
If $a\neq0$, then
$0\notin\cmcRange{\cmcPermutation}(a)$ and
$\cmcAbs{\cmcRange{\cmcPermutation}(a)}\leq\cmcDimension/2$.
\end{lemma}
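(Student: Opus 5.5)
The plan is to expand the Bell vector in the computational basis, push the monomial layer through term by term, and then read off the support of the output state on the displacement subspaces $\cmcSector{b}$ from Definition~\ref{def:attack-displacement}. Concretely, by Definition~\ref{def:cmc-bell},
\[
    (\cmcMonomial\otimes\cmcMonomial)\cmcKet{\cmcBellLabel{a,c}}
      =\frac1{\sqrt{\cmcDimension}}\sum_x(-1)^{c\cdot x}
         \cmcPhase(x)\cmcPhase(x+a)\,
         \cmcKet{\cmcPermutation(x),\cmcPermutation(x+a)}.
\]
Each term $\cmcKet{\cmcPermutation(x),\cmcPermutation(x+a)}$ lies in $\cmcSector{b}$ with $b=\cmcPermutation(x)+\cmcPermutation(x+a)\in\cmcRange{\cmcPermutation}(a)$. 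The output state therefore lies in $\bigoplus_{b\in\cmcRange{\cmcPermutation}(a)}\cmcSector{b}$.

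The next step converts this into a statement about Bell outcomes. Since $\{\cmcKet{\cmcBellLabel{b,d}}\}_d$ is an orthonormal basis of $\cmcSector{b}$, and the sectors $\cmcSector{b}$ for distinct $b$ are mutually orthogonal (they have disjoint computational-basis supports), a Bell vector $\cmcKet{\cmcBellLabel{b,d}}$ with $b\notin\cmcRange{\cmcPermutation}(a)$ is orthogonal to the output state. Such outcomes therefore occur with probability zero, which proves the first claim.

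For $a\neq0$ there are two remaining claims, each resting on a single fact.
\begin{enumerate}
  \item Since $\cmcPermutation$ is a bijection and $x\neq x+a$, we have $\cmcPermutation(x)\neq\cmcPermutation(x+a)$. Every displacement is therefore nonzero, so $0\notin\cmcRange{\cmcPermutation}(a)$.
  \item For the size bound, the map $x\mapsto x+a$ is a fixed-point-free involution on $\cmcField^{\cmcQubits}$. It partitions the $\cmcDimension$ strings into $\cmcDimension/2$ unordered pairs $\{x,x+a\}$. Both elements of a pair give the same value, because addition in $\cmcField^{\cmcQubits}$ is commutative. The map $x\mapsto\cmcPermutation(x+a)+\cmcPermutation(x)$ therefore factors through a set of size $\cmcDimension/2$, which gives $\cmcAbs{\cmcRange{\cmcPermutation}(a)}\leq\cmcDimension/2$.
\end{enumerate}

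None of these steps looks difficult. The only point needing care is the orthogonality argument: we must make sure that distinct displacement sectors really are orthogonal, so that zero weight in $\cmcSector{b}$ forces zero probability for every $d$.
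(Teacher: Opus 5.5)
Your proposal is correct and matches the paper's proof step for step. You expand $(M\otimes M)\ket{\beta_{a,c}}$ in the computational basis and observe that every term has displacement in $R_\pi(a)$, so Bell vectors of any other displacement are orthogonal to it; for $a\neq 0$ you then use injectivity of $\pi$ to get $0\notin R_\pi(a)$ and the pairing $x\leftrightarrow x+a$ to get $|R_\pi(a)|\le D/2$. One cosmetic point: the equality of values on a pair uses $a+a=0$ in $\mathbb{F}_2^n$, not only commutativity.
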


\begin{proof}
By the definitions of a Bell state and a monomial layer,
\[
    (\cmcMonomial\otimes\cmcMonomial)
      \cmcKet{\cmcBellLabel{a,c}}
      =\frac1{\sqrt{\cmcDimension}}\sum_x
        (-1)^{c\cdot x}\cmcPhase(x)\cmcPhase(x+a)
        \cmcKet{\cmcPermutation(x),\cmcPermutation(x+a)}.
\]
Every basis pair in this sum has displacement in
$\cmcRange{\cmcPermutation}(a)$. A Bell state with any other
displacement is orthogonal to every such pair, so it cannot occur
as a measurement outcome.

If $a\neq0$, the inputs $x$ and $x+a$ are distinct, and their
images under the permutation are distinct. Their displacement
is therefore nonzero. Also, $x$ and $x+a$ give the same displacement.
There are $\cmcDimension/2$ such unordered pairs, so there are at
most $\cmcDimension/2$ possible displacements.
\end{proof}

Now suppose the oracle is $\cmcUnitary_k$ and use the true key
$h=k$ in the query. The first Clifford layer maps the probe to
$\cmcKet{\cmcBellLabel{a_k,c_k}}$ up to global phase.
By Lemma~\ref{lem:cmc-clifford-bell}, correcting the outcomes for
the second Clifford layer gives exactly the Bell distribution of
\[
    (\cmcMonomial_k\otimes\cmcMonomial_k)
      \cmcKet{\cmcBellLabel{a_k,c_k}}.
\]
The support lemma shows that every corrected displacement has the
form $\cmcPermutation_k(x_i+a_k)+\cmcPermutation_k(x_i)$ for some
$x_i$. Thus the true key and these $x_i$ satisfy
\eqref{eq:cmc-np-predicate} for every possible measurement record.
The algorithm outputs $1$ with probability one.

\subsection{Haar case}
\label{subsec:cmc-haar-case}

Fix a candidate key $h$ before sampling the Haar oracle, and let
$R_h=\cmcRange{\cmcPermutation_h}(a_h)$ be its allowed displacements.
Since $a_h\neq0$, the support lemma gives
\[
    0\notin R_h,
    \qquad 1\leq\cmcAbs{R_h}\leq\cmcDimension/2.
\]
Correcting a Bell outcome using the second Clifford layer of $h$
has the same effect on its distribution as applying
$\cmcClifford_{2,h}^{\cmcAdjoint}$ after the oracle.
Thus the probability that one outcome passes the test for $h$ is
\[
    \cmcAcceptance{R_h}
      (\cmcClifford_{2,h}^{\cmcAdjoint}\cmcUnitary).
\]
For this fixed $h$, the unitary
$\cmcClifford_{2,h}^{\cmcAdjoint}\cmcUnitary$ is Haar-random whenever
$\cmcUnitary$ is Haar-random.

Conditioned on the oracle unitary, the $\cmcSamples$ repetitions
use fresh states and have independent measurement outcomes.
The probability that this one key explains every outcome is therefore
bounded by Lemma~\ref{lem:cmc-haar-displacement}:
\begin{align*}
    \cmcProbability[\text{$h$ explains all $\cmcSamples$ outcomes}]
      &=\cmcExpectation_{\cmcUnitary}\!\left[
          \cmcAcceptance{R_h}
            (\cmcClifford_{2,h}^{\cmcAdjoint}\cmcUnitary)^{\cmcSamples}
        \right]\\
      &\leq\frac{\cmcRising{\cmcAbs{R_h}}{\cmcSamples}}
                    {\cmcRising{\cmcDimension-1}{\cmcSamples}}\\
      &\leq\frac{\cmcRising{\cmcDimension/2}{\cmcSamples}}
                    {\cmcRising{\cmcDimension-1}{\cmcSamples}}.
\end{align*}
The algorithm accepts only if some key explains every outcome.
There are at most $2^{\cmcKeyLength}$ keys, so a union bound proves
\eqref{eq:cmc-final-soundness}.
When $\cmcSamples\leq\cmcDimension-2$,
Lemma~\ref{lem:cmc-haar-displacement} gives the simpler bound
$2^{\cmcKeyLength}(3/4)^{\cmcSamples}$.

Finally, for $\cmcSamples=3(\cmcKeyLength+\cmcErrorExponent)$,
\[
    2^{\cmcKeyLength}(3/4)^{\cmcSamples}
      \leq 2^{\cmcKeyLength}2^{-(\cmcKeyLength+\cmcErrorExponent)}
      =2^{-\cmcErrorExponent},
\]
where we used $(3/4)^3\leq1/2$.
Combining this Haar bound with
acceptance probability one for every CMC key proves the claimed
advantage.

\section{Breaking known constructions}
\label{sec:constructions}

In this section, we apply our results to known constructions of
pseudorandom states, one-way state generators, pseudorandom unitaries,
and pseudorandom isometries. We show that many of these constructions
have computable samplable outputs, and hence admit our \(\NP\)-aided
tomography attack. For computable samplable OWSGs,
we invoke
Theorem~\ref{thm:computable-samplable-owf} to obtain one-way functions.
We then use the CMC attack to handle further unitary constructions
whose output amplitudes are not known to be computable.

Throughout this section, computability and samplability are given the
key, as in Definitions~\ref{def:computable-states}
and~\ref{def:computable-samplable-states}. For a unitary or isometry,
these properties refer to its output on a specified computational-basis
input. Uniformly computable entries or phases mean deterministic
classical evaluation to error $2^{-b}$ in $\poly(n+b)$ time.

We first prove some useful lemmas in Section~\ref{subsec:con-tools}.
We apply them to state constructions in Section~\ref{con:states} and to
unitary and isometry constructions in
Section~\ref{subsec:con-explicit-columns}. We discuss CMC constructions
in Section~\ref{subsec:con-cmc}, and the constructions for which our
attacks do not apply directly in Section~\ref{subsec:con-limits}.

\subsection{Useful lemmas}
\label{subsec:con-tools}

Given copies of a state from a computable family, our tomography
algorithm recovers a key for a nearby state. To distinguish a
pseudorandom state from Haar, we can use
this key to prepare a candidate state and compare it with fresh copies
of the challenge. We first formalize this argument, and observe that
it also works after a public change of basis. We then prove a few
facts about computability and samplability that will be used across
the constructions below. In particular, we show how to compute
amplitudes and sample measurement outcomes after applying phases,
permutations, and certain simple mixing operations.

\subsubsection{From tomography to attacks}

We begin with the distinguishing attack. The main observation is that
a Haar state has small overlap with every state in the keyed family,
except with small probability. Thus, even though the learner chooses
its output key after seeing the challenge, fresh swap tests will
reject a Haar state. For a unitary or isometry, we obtain the required
copies simply by querying the oracle repeatedly on one fixed basis input.

For pure OWSGs, we can return the learner's key directly and apply
Corollary~\ref{owsg:attack}, under its canonical or pointwise-correct
verifier hypotheses. These attacks need only computability. When the
family is also samplable, Theorem~\ref{thm:computable-samplable-owf}
shows that its one-wayness, with classical key generation, implies a
classical one-way function secure against quantum polynomial-time
inversion.

\begin{corollary}[Computable columns admit an \(\NP\)-aided attack]
\label{cor:con-column-attack}
For an $n$-qubit computable pure-state family with key length $L$ and
$D=2^n$, a quantum algorithm with classical $\NP$ access uses
$O(L+s)$ copies to distinguish the family from Haar. It rejects a real
state with probability $O(2^{-s})$ and accepts Haar with probability
at most $2^L(3/4)^{D-1}+O(2^{-s})$. Its running time is polynomial
in $n+s$ when $L=\poly(n)$. For a computable column of a unitary or isometry,
the copies can be obtained by parallel forward queries on one fixed
basis input.
\end{corollary}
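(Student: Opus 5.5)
The plan is to combine Theorem~\ref{thm:owsg:learning} with a swap-test verification step, followed by a union bound over keys to control the Haar case. Given copies of the unknown state $\rho$ (either $\ket{\psi_k}^{\otimes}$ for a real key or a Haar state), we proceed in two phases.

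\textbf{Learning phase.} Run $\mathsf{Learn}^{\NP}$ with constant accuracy, say $\varepsilon=\delta=1/16$, but with $\delta$ replaced by $2^{-s}$. This uses $O(L+s)$ copies and outputs some key $h\in\mathcal{K}$. In the real case, Theorem~\ref{thm:owsg:learning} gives $|\langle\psi_h|\psi_k\rangle|^2\ge 15/16$ except with probability $2^{-s}$. In the Haar case the learner still outputs \emph{some} key $h$; we make no claim about which one.

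\textbf{Testing phase.} Prepare $t=\Theta(s)$ copies of $\ket{\psi_h}$ with $\mathsf{Gen}$, swap-test each against a fresh challenge copy, and count acceptances. Accept if the empirical acceptance rate exceeds a threshold between $3/4$ and $(1+15/16)/2$, for instance $7/8$.

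\textbf{Real case.} Each swap test accepts with probability $(1+F)/2\ge 31/32$. A Chernoff bound then shows that falling below $7/8$ happens with probability $2^{-\Omega(t)}=O(2^{-s})$. Adding the learner's failure probability keeps the total rejection probability at $O(2^{-s})$.

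\textbf{Haar case.} This is the main obstacle, because $h$ depends on the challenge copies. The way around it is to split the event. If the test accepts, then either some key $j$ has $|\langle\psi_j|\vartheta\rangle|^2\ge 1/2$, or the key $h$ is bad but the test still passed.

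\emph{First event.} For a Haar vector $\vartheta$ in dimension $D$ and a fixed unit vector, $\Pr[|\langle\psi_j|\vartheta\rangle|^2\ge 1/2]=(1/2)^{D-1}\le(3/4)^{D-1}$. A union bound over $2^L$ keys bounds the first event by $2^L(3/4)^{D-1}$.

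\emph{Second event.} Condition on $\vartheta$ (fixing the unitary or state) and on every key having overlap below $1/2$. The test copies are fresh, so they are independent of $h$ given $\vartheta$. Each swap test therefore accepts with probability at most $3/4$, and Chernoff bounds the chance of exceeding rate $7/8$ by $O(2^{-s})$.

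For a Haar state, fresh copies are i.i.d.\ given $\vartheta$, so this conditioning argument is legitimate. The same reasoning applies to copies obtained by repeated parallel queries $U\ket{0}$.

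\textbf{Running time and copies.} The total copy count is $O(L+s)$. The running time is polynomial because $\mathsf{Learn}$ and $\mathsf{Gen}$ are.
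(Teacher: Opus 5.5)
Your proposal is correct and follows essentially the same route as the paper: learn a key with $\NP$-aided tomography at fidelity $15/16$ and failure probability about $2^{-s}$, swap-test the candidate against $\Theta(s)$ fresh challenge copies, and in the Haar case union-bound over all $2^L$ keys so that the adaptively chosen $h$ is covered. The only difference is in the constants. The paper uses overlap threshold $1/4$, which gives exactly $(3/4)^{D-1}$ and per-test acceptance at most $5/8$, with cutoff $3/4$. Your overlap threshold $1/2$ gives the slightly stronger $(1/2)^{D-1}$ and per-test acceptance at most $3/4$, with cutoff $7/8$.
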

\begin{proof}
We run the learner with target fidelity $15/16$ and failure probability
$2^{-s-2}$. Using the recovered key, we prepare the candidate state
and perform $128(s+1)$ swap tests against fresh challenge copies.
We accept if at least $3/4$ of these tests accept. On successful
learning in the real case, each test accepts with probability at
least $31/32$.

Now consider a Haar state $\ket\phi$. For each fixed key $h$, we have
\[
 \Pr[|\braket{\psi_h}{\phi}|^2\geq1/4]=(3/4)^{D-1}
\]
A union bound over all $2^L$ keys shows that, except with the stated
probability, no key gives a state with squared overlap at least $1/4$.
This holds in particular for the key chosen by the learner. Each swap
test then accepts with probability at most $5/8$. Applying Hoeffding's
inequality in the two cases gives the claimed error bounds.
Finally, the same argument applies to unitary and isometry columns,
since a Haar unitary or isometry maps a fixed basis input to a Haar
state in the output dimension.
\end{proof}

It is enough for the amplitudes to be computable after a known change
of basis. We can apply this change of basis to the challenge copies
before running the learner. This observation will be useful for the
IQP group-action states.

\begin{corollary}[Public changes of basis]
\label{cor:con-public-basis}
If $W$ is a known, key-independent efficient unitary and
$\{W\ket{\psi_k}\}_k$ is computable, the learning and distinguishing
attacks apply to $\{\ket{\psi_k}\}_k$ by first applying $W$ to the
copies.
\end{corollary}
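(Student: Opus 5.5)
The statement reduces to earlier results by a simple relabeling. I will define the transformed family $\Psi^W=\{\ket{\psi^W_k}\}_k$ with $\ket{\psi^W_k}:=W\ket{\psi_k}$ and show two things: it is a computable family in the sense of Definition~\ref{def:computable-states}, and fidelities and Haar statistics are unchanged by $W$. Once both hold, Theorem~\ref{thm:owsg:learning} and Corollary~\ref{cor:con-column-attack} apply verbatim to $\Psi^W$.

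\textbf{Computability of $\Psi^W$.} Classical amplitude computation for $\Psi^W$ is exactly the hypothesis. Quantum generation follows by running $\mathsf{Gen}(1^n,k)$ and then the efficient circuit for $W$. The key space $\mathcal K_n$ and key length $L$ are unchanged. The copies of $\ket{\psi^W_k}$ required by the learner are obtained by applying $W$ to each challenge copy of $\ket{\psi_k}$. Since $W$ is key-independent and publicly known, this preprocessing is efficient and costs no extra copies.

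\textbf{Learning.} Run $\mathsf{Learn}^{\NP}$ on the transformed copies. It returns $h$ with
\[
|\langle\psi^W_h|\psi^W_k\rangle|^2\ge 1-\varepsilon
\]
with probability $1-\delta$. Because $W$ is unitary, $\langle\psi^W_h|\psi^W_k\rangle=\langle\psi_h|\psi_k\rangle$. So $h$ is a high-fidelity key for the original family, and Corollary~\ref{owsg:attack} gives the OWSG inversion under the same verifier hypotheses.

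\textbf{Distinguishing.} Apply Corollary~\ref{cor:con-column-attack} to $\Psi^W$, again feeding it $W$ applied to the challenge copies. In the real case the transformed challenge is a member of $\Psi^W$, so the acceptance guarantees carry over directly. In the Haar case, if $\ket\phi$ is Haar-random then $W\ket\phi$ is also Haar-random by unitary invariance of the Haar measure. The union-bound estimate over keys and the swap-test analysis therefore hold unchanged. The swap tests may equivalently be performed between $W\ket{\psi_h}$ and $W\ket\phi$, or between $\ket{\psi_h}$ and $\ket\phi$.

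\textbf{Main obstacle.} The only delicate point is that $W$ is implemented by an efficient circuit, possibly approximately. If it is implemented only to some operator-norm error, that error must be absorbed into the total-variation slack $\xi$ of Lemma~\ref{owsg:mle}, in the same way as the reference-state error in \eqref{owsg:eq-88}. For an exact gate-level circuit, such as the Hadamard layer for the IQP states, no error arises and the argument is purely a relabeling.
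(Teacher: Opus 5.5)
Your proposal is correct and follows the paper's argument: apply $W$ to the challenge copies, run the learner on the computable family $\{W\ket{\psi_k}\}_k$, and use the invariance of fidelity and Haar measure under $W$ (the paper phrases this as undoing $W$ on the candidate state). Your remark about absorbing an approximate implementation of $W$ into the total-variation slack is a harmless refinement the paper does not need, since it takes $W$ to be an exact efficient circuit.
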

\begin{proof}
Apply $W$ to each challenge copy and run the learner on the transformed
family. To prepare a candidate in the original basis, undo $W$ on the
state specified by the recovered key. Fidelity and Haar measure are
invariant under $W$, so both guarantees are preserved.
\end{proof}

The above corollary does not give computability or samplability in
the original basis. It also requires $W$ to be public and independent
of the key. In particular, it does not let us remove a secret output
Clifford.

\subsubsection{Computability and samplability}
\label{subsec:con-stabilizer-monomial}

Many of the constructions we consider begin with a stabilizer state
and then apply phases and permutations. Stabilizer states are
computable samplable. To see why these properties are preserved,
observe that a permutation only changes where an amplitude appears,
while a phase leaves the measurement probabilities unchanged.

\begin{lemma}[Stabilizers and monomial layers]
\label{lem:con-stabilizer}
\label{lem:con-monomial-closure}
Efficiently described stabilizer states are computable samplable.
If $M_k\ket z=\omega_k(z)\ket{\pi_k(z)}$ has efficient pure
implementation, classically efficient $\pi_k^{\pm1}$, and uniformly
computable phases, it preserves both properties. Sampling alone needs
only the forward permutation.
\end{lemma}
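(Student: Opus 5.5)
We argue the two parts of Lemma~\ref{lem:con-stabilizer} separately: first that stabilizer states are computable samplable, then that a monomial layer preserves both properties.

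\emph{Stabilizer states.} We use the standard affine form of a stabilizer state. An efficiently described $n$-qubit stabilizer state (say, given by a Clifford circuit applied to $\ket{0^n}$) can be written, up to global phase, as
\[
    \ket{S}=\frac{1}{\sqrt{2^{r}}}\sum_{u\in\mathbb{F}_2^{r}} i^{\ell(u)}(-1)^{q(u)}\ket{Au+c}.
\]
Here $A$ is an injective $n\times r$ matrix over $\mathbb{F}_2$, $c$ is an offset, $\ell$ is linear, and $q$ is quadratic. This normal form is computable in polynomial time from the circuit, by Gaussian elimination on the stabilizer tableau (Dehaene--De Moor).

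Fixing the global phase as this canonical form prescribes, $\mathsf{Amp}$ works as follows on input $x$. It tests whether $x-c$ lies in the column space of $A$. If not, it outputs $0$. If so, it solves for the unique $u$ and outputs $2^{-r/2}i^{\ell(u)}(-1)^{q(u)}$, which is exact up to the rounding of $2^{-r/2}$. Quantum generation is just the Clifford circuit itself. Sampling is exact: draw $u$ uniformly and output $Au+c$.

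The step needing the most care is making the fixed global phase consistent between $\mathsf{Gen}$ and $\mathsf{Amp}$. Definition~\ref{def:computable-states} fixes a global phase for each state, so we define $\ket{\psi_k}$ to be exactly the canonical form above. $\mathsf{Gen}$ then prepares this vector up to an irrelevant global phase, and one can add a controlled phase correction if exact equality is wanted.

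\emph{Monomial layers.} Suppose $\ket{\psi_k}$ is computable samplable and we set $\ket{\psi'_k}=M_k\ket{\psi_k}$.
\begin{itemize}[label={}]
    \item \emph{Generation.} We run $\mathsf{Gen}$ and then the efficient pure implementation of $M_k$.
    \item \emph{Amplitudes.} We have
    \[
        a'_k(y)=\omega_k(\pi_k^{-1}(y))\,a_k(\pi_k^{-1}(y)).
    \]
    To approximate it, compute $z=\pi_k^{-1}(y)$, approximate $a_k(z)$ to error $2^{-b-2}$, and approximate $\omega_k(z)$ to error $2^{-b-2}$. Since both moduli are at most $1$ (up to small slack), the product has error at most $2^{-b}$.
    \item \emph{Sampling.} The basis distribution satisfies $p'_k(y)=p_k(\pi_k^{-1}(y))$, so it is the pushforward of $p_k$ under $\pi_k$. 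The sampler runs $\mathsf{Samp}(1^n,k,1^t)$ and applies the forward permutation $\pi_k$. Total variation distance does not increase under a bijective pushforward, so the $1/t$ guarantee is preserved. This also shows that sampling needs only the forward permutation.
\end{itemize}
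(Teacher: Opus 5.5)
Your proof is correct and follows the paper's route: stabilizer amplitudes and samples come from the standard Dehaene--De Moor normal form under a fixed global-phase convention. For the monomial layer, you use $a'_k(y)=\omega_k(\pi_k^{-1}(y))\,a_k(\pi_k^{-1}(y))$ for amplitudes and push samples forward through $\pi_k$, exactly as the paper does, just with more explicit detail on the normal form and the precision bookkeeping.
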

\begin{proof}
For stabilizer states, both tasks follow from standard classical
algorithms~\cite{DDM03,Van11}, using a fixed convention for the global
phase. Now let $\ket{\phi_k}$ be any computable samplable state and
apply a monomial unitary. The amplitude at $y$ is
\begin{equation}
 \bra yM_k\ket{\phi_k}
 =\omega_k(\pi_k^{-1}(y))\braket{\pi_k^{-1}(y)}{\phi_k}.
 \label{eq:con-monomial-amplitude}
\end{equation}
We can compute this amplitude by evaluating the inverse permutation,
the original amplitude, and the phase. To sample, we draw $z$ from the
original measurement distribution and output $\pi_k(z)$. This gives
the desired distribution because phases do not affect measurement
probabilities. The sampling procedure uses only the forward
permutation. We can repeat the argument for polynomially many
monomial layers.
\end{proof}

Another useful case is when, up to an efficiently computable phase,
an amplitude is a product of coefficients determined by successive
prefixes and the next bit. Computing an amplitude then
requires following a single path of prefixes. Sampling follows the
same path, choosing one bit at a time according to the corresponding
conditional probabilities.

\begin{lemma}[Prefix products]
\label{con:prefix-states}
Suppose a clean, efficient pure state has amplitudes
\begin{equation}
 a_k(x)=\omega_k(x)\prod_{j=1}^n c_{k,x_{<j}}(x_j),
 \qquad |\omega_k(x)|=1,\qquad |c_{k,v}(0)|^2+|c_{k,v}(1)|^2=1,
 \label{con:prefix-amplitude}
\end{equation}
where the factors are uniformly computable from the key and prefix
to error $2^{-b}$ in $\poly(n+b)$ time. Then the family is
computable samplable.
\end{lemma}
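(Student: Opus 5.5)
Quantum generation is given by hypothesis (the state is a clean, efficient pure state), so the plan reduces to two classical tasks: evaluating amplitudes to precision $2^{-b}$, and sampling from $p_k(x)=|a_k(x)|^2$ to within total variation $1/t$. Both exploit the fact that the normalization $|c_{k,v}(0)|^2+|c_{k,v}(1)|^2=1$ makes
\[
p_k(x)=\prod_{j=1}^n |c_{k,x_{<j}}(x_j)|^2
\]
into a chain-rule factorization. In it, $|c_{k,v}(x_j)|^2$ is exactly the conditional probability of the next bit $x_j$ given the prefix $v$. The phase $\omega_k$ has modulus one, so it drops out of $p_k$.

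\textbf{Amplitude computation.} Given $(k,x,1^b)$, I will walk the single path of prefixes $x_{<1},\dots,x_{<n}$. At each step, evaluate $c_{k,x_{<j}}(x_j)$ to error $2^{-b'}$ with $b'=b+\lceil\log_2(n+1)\rceil+1$, and also evaluate $\omega_k(x)$ to the same error. Every exact factor has modulus at most $1$. I will clip each approximation to the unit disk, which can only decrease the error. A telescoping argument, identical to \eqref{owsg:eq-31} in Lemma~\ref{owsg:mle}, then bounds the error of the product of $n+1$ factors by $(n+1)2^{-b'}\le 2^{-b}$. The running time is $n\cdot\poly(n+b')=\poly(n+b)$.

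\textbf{Sampling.} Given $(1^n,k,1^t)$, I will generate bits one at a time. At step $j$, with current prefix $v$, compute approximations $\widetilde w_0\approx|c_{k,v}(0)|^2$ and $\widetilde w_1\approx|c_{k,v}(1)|^2$ to precision $2^{-b}$, and renormalize them to $\widetilde w_0/(\widetilde w_0+\widetilde w_1)$. The sampler then sets $x_j=0$ with this probability, realized to $b$ bits of precision using fair coins. Each step's conditional distribution is within $O(2^{-b})$ in total variation of the true conditional. I will handle the degenerate case where both approximations are near $0$ by outputting $0$; this is harmless, since the true weights sum to $1$ and so this case cannot occur once $2^{-b}<1/4$.

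The main obstacle is making this error accounting rigorous. I will use the standard hybrid/coupling argument: the total variation between two sequential processes is at most the sum over steps of the worst-case per-step conditional total variation. This gives an overall error of $O(n2^{-b})$, so choosing $b=\lceil\log_2(t)\rceil+\log_2 n+O(1)$ yields total variation at most $1/t$ in $\poly(n,t)$ time. The argument is uniform in $k$, so the family is computable samplable.
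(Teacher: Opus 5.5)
Your proposal is correct and follows essentially the same route as the paper. You evaluate the $n+1$ factors along the single prefix path with $b+O(\log n)$ bits of precision each and bound the product error by telescoping; you sample bit by bit from the conditional weights $|c_{k,v}(z)|^2$ with per-step error about $1/(nt)$, so the total variation error sums to at most $1/t$. Clipping the approximations to the unit disk and renormalizing the two weights are small rigor improvements that the paper leaves implicit.
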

\begin{proof}
To compute $a_k(x)$, we evaluate its $n+1$ factors and multiply them.
Since every factor has absolute value at most one, it suffices to use
$b+O(\log(n+1))$ bits of precision per factor to obtain total error
at most $2^{-b}$.

To sample $x$, we choose its bits from left to right. Given the prefix
$v$, choose the next bit $z$ with probability $|c_{k,v}(z)|^2$.
The probability of obtaining $x$ is the product of these conditional
probabilities, which is exactly $|a_k(x)|^2$. Approximating each choice
to error at most $1/(nt)$ gives total variation error at most
$n/(nt)=1/t$. Both procedures run in polynomial time.
\end{proof}

We will also use the fact that these properties are preserved under
polynomial tensor products. In particular, repeating a state
polynomially many times does not prevent the attack.

\begin{lemma}[Tensor products]
\label{lem:con-tensor}
Polynomial tensor products of efficiently described computable samplable
pure-state families are computable samplable, provided their total
number of qubits is polynomially bounded.
\end{lemma}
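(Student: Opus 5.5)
The plan is to fix polynomially many computable samplable families $\Psi^{(1)},\ldots,\Psi^{(T)}$, with $T=\poly(n)$ and qubit counts $n_1,\ldots,n_T$ summing to $N=\poly(n)$. Each factor has its own key $k_i$. We take the product family with key $k=(k_1,\ldots,k_T)$, key space $\mathcal K_1\times\cdots\times\mathcal K_T$ (membership decidable coordinatewise), and state $\ket{\psi_k}=\bigotimes_i\ket{\psi^{(i)}_{k_i}}$. The repeated-state case is the special case $k_1=\cdots=k_T$. There the key is a single $k$, and each algorithm below simply copies it to every factor.

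\emph{Quantum generation and amplitude computation.} For generation, we run each $\mathsf{Gen}^{(i)}(1^{n_i},k_i)$ on disjoint registers. For amplitudes, we split $x=(x_1,\ldots,x_T)$ into blocks, and then
\[
 a_k(x)=\prod_{i=1}^T a^{(i)}_{k_i}(x_i).
\]
We call $\mathsf{Amp}^{(i)}(k_i,x_i,1^{b'})$ with $b'=b+\lceil\log_2 T\rceil+1$. Every true factor has modulus at most $1$, and every approximation has modulus at most $1+2^{-b'}$. A telescoping argument then bounds the error of the product by
\[
 T\,2^{-b'}(1+2^{-b'})^{T}\le 2^{-b},
\]
possibly after adding $O(1)$ more bits; this mirrors the prefix-product argument in Lemma~\ref{con:prefix-states}. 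The running time is $\poly(N+b)$.

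\emph{Sampling.} The basis distribution is the product $p_k=\bigotimes_i p^{(i)}_{k_i}$. Given $1^t$, we run $\mathsf{Samp}^{(i)}(1^{n_i},k_i,1^{tT})$ independently for each $i$ and concatenate the outputs. Total variation is subadditive over independent products, so
\[
 \operatorname{TV}\Bigl(\bigotimes_i\widetilde p^{(i)},\bigotimes_i p^{(i)}\Bigr)\le\sum_{i=1}^T\frac1{tT}=\frac1t.
\]

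\emph{Main obstacle.} The only step needing care is the precision bookkeeping, so that the error amplification across $T$ factors stays polynomial. Since $T$ is polynomial, this costs only $O(\log T)$ extra bits per factor and a factor $T$ in the sampling parameter. Both are absorbed into polynomial running time because $N$ is polynomially bounded.
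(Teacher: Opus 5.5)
Your proposal is correct and follows the same approach as the paper. The product amplitude is the product of the factor amplitudes, each computed to $b+O(\log T)$ bits, and sampling runs each factor's sampler independently with error $1/(tT)$, combined by subadditivity of total variation. Your telescoping bound does not even need the extra $O(1)$ bits: with $b'=b+\lceil\log_2 T\rceil+1$ and $b\ge 1$, you already have $T2^{-b'}(1+2^{-b'})^{T}\le 2^{-b-1}e^{1/4}<2^{-b}$.
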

\begin{proof}
An amplitude of the tensor product is the product of the corresponding
amplitudes of its $r$ factors. Computing each to $b+O(\log r)$ bits
of precision gives error at most $2^{-b}$. To sample the measurement
distribution, we sample the factors independently. Error at most
$1/(rt)$ for each factor gives total variation error at most $1/t$.
\end{proof}

Finally, we consider a permuted stabilizer state followed by a layer
that mixes pairs of basis states. Each output amplitude is a sum of
only two terms, so it remains efficiently computable. To sample, we
first choose a pair according to its total probability and then choose
one of its two outputs. The following lemma makes this argument
precise and will let us handle one step of the Kac construction.

\begin{lemma}[One layer of two-dimensional mixing]
\label{lem:con-two-dimensional}
The columns of $H_fPC$ are computable samplable when $C$ is an
efficiently described Clifford, $P$ and $P^{-1}$ are classically
polynomial-time computable, and
$H_f=\bigoplus_{r\in\{0,1\}^{n-1}}h_r$ is an efficiently implemented
layer of two-dimensional unitaries with uniformly computable entries.
\end{lemma}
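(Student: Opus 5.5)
The plan is to reduce to two earlier facts. By Lemma~\ref{lem:con-stabilizer}, the state $PC\ket{x}$ is computable samplable. The layer $H_f$ acts independently on two-dimensional blocks, and each output amplitude is a two-term sum of such amplitudes.

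Fix the input basis vector $\ket{x}$ and write $\ket{\phi}=PC\ket{x}$. We treat $P$ as a monomial layer with trivial phases and with forward and inverse permutations classically efficient. Lemma~\ref{lem:con-stabilizer} then shows that $\ket{\phi}$ is computable samplable, since $C\ket{x}$ is an efficiently described stabilizer state.

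Next fix the pairing used by $H_f$. We index basis strings as $y=(r,u)$ with $r\in\{0,1\}^{n-1}$ the block label and $u\in\{0,1\}$ the position inside block $r$. This is up to a fixed, public reordering of qubits, which we absorb into the block convention. Then
\[
    \bra{r,u}H_f\ket{\phi}=\sum_{u'\in\{0,1\}}(h_r)_{u,u'}\braket{r,u'}{\phi}.
\]
\emph{Computability:} we evaluate the two entries of $h_r$ and the two amplitudes of $\ket{\phi}$, each to precision $b+O(1)$ bits, and combine them. All quantities have modulus at most one, so the total error is at most $2^{-b}$.

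\emph{Samplability:} the block marginal is preserved by $H_f$, because $h_r$ is unitary on block $r$. So the probability of block $r$ is $w_r=|\braket{r,0}{\phi}|^2+|\braket{r,1}{\phi}|^2$, which is the same before and after $H_f$. We sample $y'=(r,u')$ from the (approximate) basis distribution of $\ket{\phi}$ and keep its block label $r$; this produces $r$ with probability approximately $w_r$. Conditioned on $r$, the output is distributed as
\[
    \frac{|\bra{r,u}H_f\ket{\phi}|^2}{w_r}.
\]
We compute this ratio approximately with the amplitude routine, then flip a correspondingly biased coin for $u$.

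The main obstacle is controlling the total variation error of this conditional step when $w_r$ is tiny. There the normalized conditional probabilities cannot be computed accurately from additive amplitude estimates. I would handle it by computing numerator and denominator to precision $2^{-b}$ with $b=O(\log t+n)$. On blocks where the estimated $w_r$ falls below a threshold $\tau$, we output either element arbitrarily. The TV error then splits into three parts:
\begin{itemize}
    \item $1/(2t)$ from the base sampler;
    \item the total mass of low-weight blocks, which is a problem because there are up to $2^{n-1}$ of them, so their mass is only bounded by $2^{n}\tau$;
    \item an $O(2^{-b}/\tau)$ term on the remaining blocks.
\end{itemize}
The fix for the second part is to avoid thresholding over all blocks and only weight by the sampled block. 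The error contributed by a sampled block is $O(2^{-b}/w_r)$, and its expectation over $r\sim w$ is at most $\sum_r 2^{-b}=2^{n-1-b}$. Choosing $b=n+\log t+O(1)$ makes the total TV error at most $1/t$, in time $\poly(n,t)$. This yields a sampler $\mathsf{Samp}$ as in Definition~\ref{def:computable-samplable-states}.
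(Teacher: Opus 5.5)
Your proof is correct. The computability half matches the paper's, since both express each output amplitude as a two-term sum of entries of $h_r$ and amplitudes of $PC\ket{x}$, obtained via Lemma~\ref{lem:con-stabilizer}. The difference is in how you handle small block weights $w_r$ in the conditional sampling step.

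The paper uses the exact stabilizer normal form. The two relevant amplitudes are $2^{-d/2}\eta_j(r)$ with $\eta_j(r)\in\{0,\pm1,\pm i\}$, so the common scale $2^{-d/2}$ cancels. On every sampled block the conditional probability then has denominator $|\eta_0(r)|^2+|\eta_1(r)|^2\in\{1,2\}$. This gives a uniform per-block precision guarantee with no averaging.

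You instead use only additive amplitude estimates and the decomposition
\[
\operatorname{TV}\le\operatorname{TV}(\text{block marginal})+\sum_r w_r\min\bigl(1,O(2^{-b}/w_r)\bigr)\le\tfrac{1}{2t}+O(2^{n-1-b}),
\]
so $b=n+\log t+O(1)$ bits suffice. Your route is black-box: it uses only that $PC\ket{x}$ is computable samplable. It therefore shows more generally that one layer of two-dimensional mixing preserves computable samplability for any computable samplable input state. The paper's route relies on the algebraic form of stabilizer amplitudes, and in exchange avoids the weighted-averaging argument.

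One small correction: the thresholding variant you discarded also works. There are only $2^{n-1}$ blocks, so a threshold $\tau=2^{-n}/t$ leaves low-weight mass $O(1/t)$ and costs only $O(n+\log t)$ bits of precision. The "obstacle" you flagged was therefore not a real one.
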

\begin{proof}
Write $P\ket z=\ket{\pi(z)}$. For a basis input $x$, fix one
stabilizer normal form for $C\ket x$
and put $v_j(r)=\bra{\pi^{-1}(j,r)}C\ket x$. Then
\begin{equation}
 \bra{b,r}H_fPC\ket x=h_r[b,0]v_0(r)+h_r[b,1]v_1(r),
 \label{eq:con-kac-amplitude}
\end{equation}
Both terms are efficiently computable, which proves computability.
For sampling, first classically sample a computational-basis outcome
of $PC\ket x$ and retain the suffix $r$. This chooses $r$ with probability
$s_r=|v_0(r)|^2+|v_1(r)|^2$. We then choose $b$ with probability
\[
 q_r(b)=|(h_rv(r))_b|^2/s_r.
\]
Since $h_r$ is unitary, it preserves the total probability of the pair.
Thus, this is the required conditional distribution. It remains to
check that we can compute it efficiently even when $s_r$ is small.
In the fixed stabilizer normal form,
$v_j(r)=2^{-d/2}\eta_j(r)$ with
$\eta_j(r)\in\{0,\pm1,\pm i\}$. The common scale cancels from the
ratio, leaving denominator
$|\eta_0(r)|^2+|\eta_1(r)|^2\in\{1,2\}$ on every sampled suffix.
Hence we can approximate these probabilities to error $1/t$ in
polynomial time, uniformly in $r$.
\end{proof}

\subsection{State constructions}
\label{con:states}
\label{subsec:con-table}

We now show that the state constructions in
Table~\ref{tab:constructions} are computable samplable. In each case,
we describe how to evaluate an amplitude and how to sample the
computational-basis distribution, given the key. We consider the
specified finite-key instantiations of these constructions. For the
IQP group-action states, we first apply a public Hadamard transformation.

\begin{table}[!htbp]
\centering
\small
\setlength{\tabcolsep}{3pt}
\setlength{\belowcaptionskip}{6pt}
\renewcommand{\arraystretch}{1.16}
\caption{Computable samplable state constructions. For the IQP states,
we first apply the public Hadamard transformation.}
\label{tab:constructions}
\begin{tabular}{@{}>{\raggedright\arraybackslash}p{.43\textwidth}>{\raggedright\arraybackslash}p{.14\textwidth}>{\raggedright\arraybackslash}p{.38\textwidth}@{}}
\toprule
Construction & Primitive & Reason for coverage\\
\midrule
PRF phase and binary-phase states~\cite{JLS18,BSBinary,AGQY22}
 & PRS & Evaluable phases; uniform sampling\\
Finite-key Hamiltonian phase states~\cite{HPS}
 & PRS / OWSG candidates & Evaluable phases; uniform sampling\\
PRP subset and subset-phase states~\cite{Subset1,Subset2,Pseudoent23}
 & PRS & Invertible map from a known interval\\
Prefix-rotation scalable states~\cite{BCJZ26}
 & PRS & Prefix-product amplitudes\\
Disjoint PFC blocks on a Bell seed~\cite{MarkovGlue26}
 & PRS & Monomial image of a stabilizer state\\
Reversible-cipher states, Eq.~(13) of~\cite{Inflationary24}
 & PRS candidate & Evaluable binary phases; uniform sampling\\
IQP group-action states~\cite{QGA}
 & OWSG candidate & Phase states after public $H^{\otimes n}$\\
Pure OWSGs from these states~\cite{MorimaeYamakawa}
 & Pure OWSG & The same families, including tensor repetition\\
\bottomrule
\end{tabular}
\end{table}

\paragraph{Phase states.}
\label{con:phase-states}
For PRF phase states~\cite{JLS18}, binary-phase states~\cite{BSBinary,AGQY22}, and finite-key
Hamiltonian phase states~\cite{HPS}, the amplitudes are, respectively,
\[
 2^{-n/2}e^{2\pi i\mathsf{PRF}_k(x)/2^n},\qquad
 2^{-n/2}(-1)^{\mathsf{PRF}_k(x)},\qquad
 2^{-n/2}\exp\!\left(i\sum_{j=1}^m\theta_j(-1)^{A_j\cdot x}\right).
\]
Here $m=\poly(n)$, and the HPS angles are given by their prescribed
finite encoding. Given the key and $x$, we can evaluate each phase
efficiently to arbitrary precision. Moreover, all amplitudes have the
same magnitude, so sampling amounts to choosing a uniform basis string.
Thus these families are computable samplable; they are also immediate
instances of Lemma~\ref{con:prefix-states}.
The same argument applies to the structured reversible-cipher state
of~\cite[Eq.~(13)]{Inflationary24}, whose amplitude is
$2^{-n/2}(-1)^{\beta\cdot P^{-1}(x)}$.

\paragraph{Subset and subset-phase states.}
\label{con:subset-states}
Next, consider the subset constructions
of~\cite{Subset1,Subset2,Pseudoent23}. These apply a permutation $p_k$
to a uniform superposition over the known interval
$[s]=\{0,\ldots,s-1\}$, possibly with an additional binary phase.
The resulting amplitude at $y$ is
\begin{equation}
 a_k(y)=s^{-1/2}\omega_k(y)\mathbf1[p_k^{-1}(y)<s],
 \label{con:subset-amplitude}
\end{equation}
where $\omega_k=1$ or an efficiently evaluated binary phase, and
both $p_k$ and $p_k^{-1}$ are efficient. Given $y$, we compute
$p_k^{-1}(y)$ and check whether it lies in $[s]$. If it does, the
amplitude is $\omega_k(y)/\sqrt{s}$; otherwise it is zero.
To sample, we choose a uniform $z\in[s]$ and output $p_k(z)$.
We can obtain $z$ by rejection sampling from $\lceil\log_2s\rceil$
random bits. Truncating after $O(\log(t+1))$ attempts gives error
at most $1/t$.

For completeness, the uniform interval superposition can be prepared
by prefix rotations. If $c_v$ counts the elements of $[s]$ with prefix
$v$, use amplitudes $\sqrt{c_{v0}/c_v}$ and $\sqrt{c_{v1}/c_v}$ for
the next bit. These counts are efficiently computable, and a prefix
with $c_v=0$ is never reached.

\paragraph{Prefix rotations.}
The scalable construction of~\cite[Algorithms~1--2]{BCJZ26}
prepares the state one qubit at a time, using a rotation determined
by the preceding bits. In the notation of
Lemma~\ref{con:prefix-states}, its factors are
$c_{k,v}(0)=\cos\alpha_{k,v}$ and
$c_{k,v}(1)=\sin\alpha_{k,v}$.
The angle is produced by a classical Beta sampler using keyed PRF
bits. Thus, given the key and prefix, we can deterministically
compute the finitely encoded angle. The circuit also uncomputes its
workspace. Lemma~\ref{con:prefix-states} therefore applies: evaluating
one amplitude only requires the angles along one path of prefixes,
without computing a global normalization factor.

\paragraph{Disjoint Bell-seed gluing.}
Consider the instantiation of~\cite[Theorem~9]{MarkovGlue26} in which
disjoint PFC blocks act on a Bell seed. Since the blocks have disjoint
supports, we can write
\[
 (\bigotimes_i P_iF_iC_i)C_{\rm Bell}
 =(\bigotimes_i P_iF_i)(\bigotimes_i C_i)C_{\rm Bell}=MC.
\]
The product of the permutation and phase layers is monomial, while
the remaining circuit, including preparation of the Bell seed, is
Clifford. Hence Lemma~\ref{lem:con-stabilizer} gives computability
and samplability.

\paragraph{IQP group actions.}
\label{con:related}
The IQP candidates of~\cite[Definitions~5.2--5.3]{QGA} use the
fixed seed $\ket{0^n}$ and an action
$g_k=H^{\otimes n}D_kH^{\otimes n}$. Applying the public unitary
$H^{\otimes n}$ to the output gives
\begin{equation}
 H^{\otimes n}g_k\ket{0^n}
 =2^{-n/2}\sum_x e^{i\theta_k(x)}\ket x,
 \label{con:iqp-phase}
\end{equation}
where the circuit phase $\theta_k(x)$ is efficiently evaluated.
We are therefore back in the phase-state setting, and the transformed
family is computable samplable. By
Corollary~\ref{cor:con-public-basis}, we can learn in this basis
and return the recovered action. With learning error
$(\varepsilon,\delta)$, this wins the one-way game
of~\cite[Definition~3.5]{QGA} with probability at least
$(1-\delta)(1-\varepsilon)$. The weak-OWSG swap verifier accepts
with probability at least $(1-\delta)(1-\varepsilon/2)$.
We only need an action producing a nearby state; it need not be the
original sampled circuit.

\paragraph{Pure OWSGs and tensor repetition.}
The above arguments also apply when these families are used to build
pure OWSGs, with the public change of basis in the IQP case.
By Lemma~\ref{lem:con-tensor}, polynomial tensor repetition preserves
computability and samplability. We can therefore apply
Corollary~\ref{owsg:attack} to invert the resulting generator under
its stated verifier hypotheses. Furthermore,
Theorem~\ref{thm:computable-samplable-owf} implies one-way functions.

\FloatBarrier
\subsection{Unitary and isometry constructions}
\label{subsec:con-explicit-columns}

We next consider unitary and isometry constructions. Recall that it
suffices to find one fixed basis input whose output is computable:
we query the oracle repeatedly on that input and apply
Corollary~\ref{cor:con-column-attack}. For each construction in
Table~\ref{tab:con-columns}, we show that the output is both
computable and samplable.

\begin{table}[!htbp]
\centering
\small
\setlength{\tabcolsep}{3pt}
\setlength{\belowcaptionskip}{6pt}
\renewcommand{\arraystretch}{1.16}
\caption{Unitary and isometry constructions with computable samplable
basis-input columns.}
\label{tab:con-columns}
\begin{tabular}{@{}>{\raggedright\arraybackslash}p{.43\textwidth}>{\raggedright\arraybackslash}p{.17\textwidth}>{\raggedright\arraybackslash}p{.35\textwidth}@{}}
\toprule
Construction & Primitive & Reason for coverage\\
\midrule
Clifford PFC~\cite{PFC24,MH25}; phase-free PC~\cite{PC26}
 & PRU & Monomial image of a stabilizer column\\
One-sided LRFC; first two blocked forms~\cite{BlockedLRFC25,LRFC25}
 & PRU & Invertible shuffles and phases after a Clifford\\
$PF(\bigotimes_j V_j)$~\cite{Distinct26}
 & Nonadaptive PRU & Product columns with computable one-qubit entries\\
$PGH^{\otimes n}F$~\cite{Distinct26,BM24}
 & Nonadaptive / somewhat PRU & Explicit phases; uniform column distributions\\
One Kac step $H_fPC$~\cite{Kac26}
 & PRU & Two-dimensional mixing after a permuted stabilizer column\\
AGKL isometries~\cite{PRI24}
 & PRI & Invertible image of a fixed-prefix superposition\\
\bottomrule
\end{tabular}
\end{table}

\paragraph{One-sided constructions.}
\label{cor:con-one-sided}
Suppose first that $U_k=M_kC_k$, where $C_k$ is Clifford and
$M_k$ is monomial. On any basis input, $C_k$ produces a stabilizer
state, so Lemma~\ref{lem:con-stabilizer} applies whenever the
permutation and its inverse are efficiently computable, and the
phases are efficiently evaluable. This is the case for Clifford PFC~\cite{PFC24,MH25},
PC~\cite{PC26}, one-sided LRFC, and its first two blocked
forms~\cite{BlockedLRFC25,LRFC25}. Indeed, the PRPs have efficient
inverses, each XOR shuffle $(u,v)\mapsto(u\oplus h_k(v),v)$ is its
own inverse, and all keyed phases are efficiently evaluable.

\paragraph{Local designs.}
The same idea works for $PF(\bigotimes_jV_j)$~\cite{Distinct26}
even when the one-qubit unitaries are not Clifford, provided their
entries are uniformly computable. Before applying $PF$, the amplitude
on $y$ is $\prod_j\bra{y_j}V_j\ket{x_j}$, and we can sample each
output bit independently. Thus the product state is computable
samplable by Lemma~\ref{con:prefix-states}, and the monomial layer
$PF$ preserves these properties by
Lemma~\ref{lem:con-monomial-closure}.

\paragraph{Phase--Hadamard--phase columns.}
\label{prop:con-pghf}
Now consider $U_k=P_kG_kH^{\otimes n}F_k$, where $F_k,G_k$ are
diagonal. Write $f_k,g_k$ for their entries and $\pi_k$ for the
permutation. On a fixed basis input, the first phase is a scalar,
and the Hadamards produce a uniform superposition. In particular,
\begin{equation}
 \bra yU_k\ket x=2^{-n/2}g_k(\pi_k^{-1}(y))
          (-1)^{x\cdot\pi_k^{-1}(y)}f_k(x).
 \label{eq:con-pghf-amplitude}
\end{equation}
We can evaluate this amplitude using the inverse permutation and the
two phase functions. Its magnitude is constant, so the measurement
distribution is uniform. This proves computability and samplability
for the two-phase construction~\cite{Distinct26} and the somewhat
PRU construction~\cite{BM24}. For the latter, the restricted input
game permits polynomially many copies of one fixed basis input,
which is all Corollary~\ref{cor:con-column-attack} requires.

\paragraph{One Kac step after a Clifford.}
\label{prop:con-kac-one-step}
For the Clifford specialization of~\cite[Theorem~4.6]{Kac26}, the
unitary is $H_fPC$. The keyed permutation has an efficient inverse,
and the entries of each two-dimensional block are computable from
the finite PRF angle parameters using square roots and complex
exponentials. These are precisely the conditions of
Lemma~\ref{lem:con-two-dimensional}. Thus every basis-input column is
computable samplable, and Corollary~\ref{cor:con-column-attack} gives
an \(\NP\)-aided distinguisher. We use here that there is only one mixing
layer; we return to longer Kac walks in Section~\ref{subsec:con-limits}.

\paragraph{AGKL pseudorandom isometries.}
The construction of Ananth, Gulati, Kaleoglu, and Lin~\cite[Section~5,
Figure~1]{PRI24} maps a basis input $x\in\{0,1\}^n$ to
\begin{equation}
 \ket{\psi_{k,x}}=2^{-m/2}\sum_{z\in\{0,1\}^m}
 \omega_p^{f_{k_1}(x\Vert z)}\ket{g_{k_2}(x\Vert z)},
 \qquad \omega_p=e^{2\pi i/p},
 \label{con:pri-basis}
\end{equation}
where $f$ is an efficiently evaluable keyed function and $g$ is an
efficiently invertible keyed permutation. Fix the input $x$. To
compute the amplitude at $y$, we first invert the permutation and
write $g_{k_2}^{-1}(y)=x'\Vert z$. If $x'\ne x$, the amplitude is
zero; otherwise it is $2^{-m/2}\omega_p^{f_{k_1}(x\Vert z)}$.
To sample, we choose uniform $z$ and output $g_{k_2}(x\Vert z)$.
Hence this family is computable samplable.

The computational-basis security game of~\cite[Section~3]{PRI24}
allows repeated forward queries on this fixed input. Such queries
supply the copies needed for tomography. A Haar-random isometry sends
the same input to a Haar-random state, so
Corollary~\ref{cor:con-column-attack} gives the distinguishing attack
with output dimension $D=2^{n+m}$.
\FloatBarrier
\subsection{CMC attacks}
\label{subsec:con-cmc}
\label{prop:con-cmc-instances}

The preceding argument applies to $PFC$ because the monomial layer
preserves computability of the stabilizer column. What happens if
we add a secret output Clifford and consider $C_2PFC_1$ instead?
The output Clifford can mix the amplitudes, so we no longer obtain
computability from the same argument. However, the construction has
exactly the form required by Theorem~\ref{thm:cmc-np-attack}.

More generally, we can apply that theorem whenever the endpoints
are efficiently described Cliffords and the middle layer is monomial
with a classically efficiently evaluable forward permutation.
We do not need to evaluate the phases or invert the permutation.
Table~\ref{tab:con-cmc} lists the resulting applications, including
those already handled by tomography. To check each row, observe that
products of phases and permutations are monomial, and the forward
permutation can be evaluated by composing the individual permutations.
For the one-sided constructions, take the output Clifford to be the identity.

\begin{table}[!htbp]
\centering
\small
\setlength{\tabcolsep}{3pt}
\setlength{\belowcaptionskip}{6pt}
\renewcommand{\arraystretch}{1.16}
\caption{Constructions to which the CMC attack applies. The first
group also admits the tomography attack.}
\label{tab:con-cmc}
\begin{tabular}{@{}>{\raggedright\arraybackslash}p{.63\textwidth}>{\raggedright\arraybackslash}p{.33\textwidth}@{}}
\toprule
Construction & Monomial core\\
\midrule
\multicolumn{2}{@{}l}{\emph{Also covered by tomography}}\\
Clifford PFC~\cite{PFC24,MH25}; phase-free PC~\cite{PC26}
 & $PF$; $P$\\
One-sided LRFC; first two blocked forms~\cite{BlockedLRFC25,LRFC25}
 & Product of shuffles and phases\\
$PF(\bigotimes_j C_j)$, local Cliffords~\cite{Distinct26}
 & $PF$\\
\midrule
\multicolumn{2}{@{}l}{\emph{Constructions with secret output Cliffords}}\\
Strong Clifford-endpoint PFC: $C_2PFC_1$~\cite{MH25}
 & $PF$\\
Base strong LRFC (CLRFC): $C_2S_RF S_LC_1$~\cite{LRFC25}
 & $S_RF S_L$\\
Constant-time logical CMC branch~\cite{ConstantTime25}
 & Its implemented monomial core\\
\midrule
\multicolumn{2}{@{}l}{\emph{Additional variant considered here}}\\
CPC: $C_2PC_1$, extending PC~\cite{PC26}
 & $P$\\
\bottomrule
\end{tabular}
\end{table}

For the constant-time construction, we apply the attack to the
logical CMC unitary implemented in~\cite[Theorem~3.7 and the strong
branch of Theorem~3.26]{ConstantTime25}. We also include the natural
two-sided variant $C_2PC_1$ of PC. This variant is not considered or
proved secure in~\cite{PC26}; we include it because its middle layer
is simply the permutation $P$, and hence the CMC attack applies.

\FloatBarrier
\subsection{Limitations of the current techniques}
\label{subsec:con-limits}

It is perhaps most important to understand what candidates and architectures survive our attacks, and can be plausible routes to constructions in Microcrypt.
Table~\ref{tab:con-uncovered} lists several constructions for which
our arguments do not give attacks. We briefly explain the difficulty
in applying our techniques to each of them.
We emphasize that this does not necessarily imply security against \(\NP\)-aided adversaries, since it is plausible that more sophisticated \(\NP\)-aided attacks could succeed. Nevertheless, we believe it is important to list these candidates explicitly for future research.

For example, our analysis of one Kac step uses the fact that each
output amplitude is a sum of two stabilizer amplitudes. After many
mixing steps, we no longer have this description, so the argument
does not immediately extend. A similar issue arises with gluing.
For disjoint PFC blocks on a Bell seed, we combined all monomial
layers and all Clifford layers. Overlapping blocks need not allow
this rearrangement. In particular, the third blocked LRFC form
leaves a Clifford between two monomial layers, so the argument for
the first two forms does not apply directly.

To see the same issue in the ABGL-style composition, substitute the
strong PFC base $U=C_2PFC_1$ into $X^{k_1}UX^{k_3}UX^{k_4}$.
We obtain
\[
 (X^{k_1}C_2)\,PF\,(C_1X^{k_3}C_2)\,PF\,(C_1X^{k_4}).
\]
Here the two copies of $PF$ are separated by a Clifford. Thus this
expression does not directly give the single monomial layer required
by our CMC theorem. Whether another decomposition or a different
attack can handle this construction remains open in our analysis.

In the following table, PRSS denotes a pseudorandom state scrambler.
\FloatBarrier

\begin{table}[!htbp]
\centering
\small
\setlength{\tabcolsep}{3pt}
\setlength{\belowcaptionskip}{6pt}
\renewcommand{\arraystretch}{1.16}
\caption{Constructions for which our techniques do not give an attack,
and the difficulties in applying the preceding arguments.}
\label{tab:con-uncovered}
\begin{tabular}{@{}>{\raggedright\arraybackslash}p{.38\textwidth}>{\raggedright\arraybackslash}p{.15\textwidth}>{\raggedright\arraybackslash}p{.42\textwidth}@{}}
\toprule
Construction & Primitive & Main difficulty\\
\midrule
Full PRSS walk~\cite{PRSS24}
 & PRSS & Extending the single-step analysis to many mixing rounds\\
Full long Kac walks~\cite{Kac26}
 & PRU; strong PRU & Efficient amplitude evaluation through many mixing steps\\
Third blocked LRFC~\cite{BlockedLRFC25}
 & PRU & Handling an intermediate Clifford between monomial layers\\
Ancilla-free glued LRFC~\cite{LRFC25}
 & Strong PRU & Combining monomial layers separated by Cliffords after gluing\\
Overlapping shallow/glued constructions~\cite{SHH24,StrongGlue25,GridStrong26}
 & PRU; strong PRU & Obtaining a global CMC form or computable columns from overlapping blocks\\
Phased-permutation Hamiltonians~\cite{PermHam24}
 & Nonadaptive PRU & Evaluating columns or finding a CMC form after exponentiating a sum of monomials\\
Hidden-basis Hamiltonian dynamics~\cite{HiddenHam25}
 & PRU & Handling a secret general PRU basis in place of Clifford endpoints\\
ABGL-style composition~\cite{QHROM25,MH25}
 & Strong PRU & Handling two monomial layers separated by a Clifford\\
\bottomrule
\end{tabular}
\end{table}

\section{AI Disclosure}
Generative AI tools were used during the preparation of this manuscript to assist with writing,
editing, and polishing the exposition, as well as to help formalize some ideas and check
mathematical arguments and proofs. The main research ideas, technical insights, and proof
strategies originated from the authors and were presented by one of us at a workshop at the Simons Institute in July. 

AI assistance helped refine the presentation and
formalization of these ideas. The authors take full responsibility for the content and integrity of the final
manuscript, including the correctness of the results and proofs, the originality of the work, and
the accuracy of all claims and citations.

\section{Acknowledgments}
Part of this work was done when A.G. was at NTT Research.
D.K. and K.T. were supported in part by AFOSR, NSF
CNS-2247727, and a Google Research Scholar award.

\clearpage
\begingroup
\raggedright
\bibliographystyle{alpha}
\bibliography{references}

@inproceedings{KT24,
  author    = {Dakshita Khurana and Kabir Tomer},
  title     = {Commitments from Quantum One-Wayness},
  booktitle = {Proceedings of the 56th Annual ACM Symposium on Theory of Computing},
  series    = {STOC '24},
  year      = {2024},
  pages     = {968--978},
  publisher = {Association for Computing Machinery},
  doi       = {10.1145/3618260.3649654},
  url       = {https://doi.org/10.1145/3618260.3649654}
}

@misc{CEGHLS26,
  author        = {Carrasco, Jose and Eisert, Jens and Ghosh, Soumik
                   and Hangleiter, Dominik and Li, Nicky Kai Hong
                   and Sweke, Ryan},
  title         = {Instantiating {Microcrypt}: Obstacles and opportunities
                   via tailored state certification},
  year          = {2026},
  eprint        = {2609.15842},
  archivePrefix = {arXiv},
  primaryClass  = {quant-ph},
  doi           = {10.48550/arXiv.2609.15842},
  url           = {https://arxiv.org/abs/2609.15842}
}

@misc{KhuranaSimons26,
  author       = {Khurana, Dakshita},
  title        = {What Makes a Cryptographic Assumption a Quantum Assumption?},
  year         = {2026},
  month        = jul,
  howpublished = {Talk at the Simons Institute for the Theory of Computing},
  note         = {Cryptography 10 Years Later: Obfuscation, Proof Systems,
                  and Secure Computation Reunion, July 14, 2026},
  url          = {https://simons.berkeley.edu/talks/dakshita-khurana-ntt-research-university-illinois-urbana-champaign-2026-07-14-0}
}

@misc{JLS,
  author = {Ji, Z. and Liu, Y.-K. and Song, F.},
  title = {{Pseudorandom states, non-cloning theorems and quantum money}},
  year = {2017},
  note = {arXiv:1711.00385}
}

@misc{Kretschmer,
  author = {Kretschmer, W.},
  title = {{Quantum pseudorandomness and classical complexity}},
  year = {2021},
  note = {arXiv:2103.09320}
}

@misc{KQST,
  author = {Kretschmer, W. and Qian, L. and Sinha, M. and Tal, A.},
  title = {{Quantum Cryptography in Algorithmica}},
  year = {2022},
  note = {arXiv:2212.00879}
}

@misc{KQT,
  author = {Kretschmer, W. and Qian, L. and Tal, A.},
  title = {{Quantum-Computable One-Way Functions without One-Way Functions}},
  year = {2024},
  note = {arXiv:2411.02554}
}

@misc{HPS,
  author = {Bostanci, J. and Haferkamp, J. and Hangleiter, D. and Poremba, A.},
  title = {{Efficient Quantum Pseudorandomness from Hamiltonian Phase States}},
  year = {2025},
  note = {arXiv:2410.08073}
}

@misc{BSBinary,
  author = {Brakerski, Z. and Shmueli, O.},
  title = {{(Pseudo) random quantum states with binary phase}},
  year = {2019},
  note = {arXiv:1906.10611}
}

@inproceedings{AGQY22,
  author = {Ananth, P. and Gulati, A. and Qian, L. and Yuen, H.},
  title = {{Pseudorandom (Function-Like) Quantum State Generators: New Definitions and Applications}},
  booktitle = {Theory of Cryptography (TCC 2022), Part I},
  series = {Lecture Notes in Computer Science},
  volume = {13747},
  pages = {237--265},
  publisher = {Springer},
  year = {2022},
  doi = {10.1007/978-3-031-22318-1_9},
  url = {https://arxiv.org/abs/2211.01444}
}

@misc{Subset1,
  author = {Jeronimo, F. G. and Magrafta, N. and Wu, P.},
  title = {{Pseudorandom and pseudoentangled states from subset states}},
  year = {2023},
  note = {arXiv:2312.15285}
}

@misc{Subset2,
  author = {Giurgica-Tiron, T. and Bouland, A.},
  title = {{Pseudorandomness from subset states}},
  year = {2023},
  note = {arXiv:2312.09206}
}

@misc{PFC,
  author = {Metger, T. and Poremba, A. and Sinha, M. and Yuen, H.},
  title = {{Simple constructions of linear-depth $t$-designs and pseudorandom unitaries}},
  year = {2024},
  note = {arXiv:2404.12647}
}

@misc{QGA,
  author = {Morimae, T. and Xagawa, K.},
  title = {{Quantum Group Actions}},
  year = {2024},
  note = {arXiv:2410.04777}
}

@misc{AGL24,
  author = {Ananth, P. and Gulati, A. and Lin, Y.-T.},
  title = {{Cryptography in the Common Haar State Model: Feasibility Results and Separations}},
  year = {2024},
  note = {arXiv:2407.07908v1},
  url = {https://arxiv.org/abs/2407.07908v1}
}

@article{Ben21,
  author = {Bennink, R. S.},
  title = {{Efficient verification of anticoncentrated quantum states}},
  journal = {npj Quantum Information},
  volume = {7},
  pages = {127},
  year = {2021},
  doi = {10.1038/s41534-021-00455-6},
  url = {https://doi.org/10.1038/s41534-021-00455-6}
}

@article{DDM03,
  author = {Dehaene, J. and {De Moor}, B.},
  title = {{The Clifford group, stabilizer states, and linear and quadratic operations over GF(2)}},
  journal = {Physical Review A},
  volume = {68},
  pages = {042318},
  year = {2003},
  url = {https://arxiv.org/abs/quant-ph/0304125}
}

@inproceedings{JLS18,
  author = {Ji, Z. and Liu, Y.-K. and Song, F.},
  title = {{Pseudorandom quantum states}},
  booktitle = {CRYPTO 2018},
  series = {LNCS},
  volume = {10993},
  pages = {126--152},
  publisher = {Springer},
  year = {2018},
  note = {Full version: {Pseudorandom States, Non-Cloning Theorems and Quantum Money}},
  url = {https://arxiv.org/abs/1711.00385}
}

@misc{Kac26,
  author = {Lu, C. and Qin, M. and Song, F. and Yao, P. and Zhao, M.},
  title = {{Parallel Kac's Walk Generates PRU}},
  year = {2026},
  note = {arXiv:2504.14957v4},
  url = {https://arxiv.org/abs/2504.14957v4}
}

@misc{MH25,
  author = {Ma, F. and Huang, H.-Y.},
  title = {{How to Construct Random Unitaries}},
  year = {2025},
  note = {arXiv:2410.10116v3},
  url = {https://arxiv.org/abs/2410.10116v3}
}

@misc{PFC24,
  author = {Metger, T. and Poremba, A. and Sinha, M. and Yuen, H.},
  title = {{Simple constructions of linear-depth $t$-designs and pseudorandom unitaries}},
  year = {2024},
  note = {arXiv:2404.12647v1},
  url = {https://arxiv.org/abs/2404.12647v1}
}

@misc{MorimaeYamakawa,
  author = {Morimae, T. and Yamakawa, T.},
  title = {{One-Wayness in Quantum Cryptography}},
  year = {2024},
  note = {arXiv:2210.03394v3},
  url = {https://arxiv.org/abs/2210.03394v3}
}

@misc{LRFC25,
  author = {Schuster, T. and Ma, F. and Lombardi, A. and Brand{\~a}o, F. and Huang, H.-Y.},
  title = {{Strong random unitaries and fast scrambling}},
  year = {2025},
  note = {arXiv:2509.26310v1},
  url = {https://arxiv.org/abs/2509.26310v1}
}

@article{Van11,
  author = {{Van den Nest}, M.},
  title = {{Simulating quantum computers with probabilistic methods}},
  journal = {Quantum Information and Computation},
  volume = {11},
  number = {9--10},
  pages = {784--812},
  year = {2011},
  url = {https://arxiv.org/abs/0911.1624}
}

@misc{BCJZ26,
  author = {Rishabh Batra and Zhili Chen and Rahul Jain and YaoNan Zhang},
  title = {{Scalable, quantum-accessible, and adaptive pseudorandom quantum state and pseudorandom function-like quantum state generators}},
  year = {2026},
  eprint = {2507.22535v6},
  archivePrefix = {arXiv},
  url = {https://arxiv.org/abs/2507.22535v6},
  note = {arXiv:2507.22535v6; Version 6, August 2026; first posted in 2025}
}

@misc{Pseudoent23,
  author = {Scott Aaronson and Adam Bouland and Bill Fefferman and Soumik Ghosh and Umesh Vazirani and Chenyi Zhang and Zixin Zhou},
  title = {{Quantum Pseudoentanglement}},
  year = {2023},
  eprint = {2211.00747v2},
  archivePrefix = {arXiv},
  url = {https://arxiv.org/abs/2211.00747v2},
  note = {arXiv:2211.00747v2; Version 2, April 2023}
}

@misc{MarkovGlue26,
  author = {Wonjun Lee and Hyukjoon Kwon and Gil Young Cho},
  title = {{Gluing Randomness via Entanglement: Tight Bound from Second R{\'e}nyi Entropy}},
  year = {2026},
  eprint = {2601.16454v1},
  archivePrefix = {arXiv},
  url = {https://arxiv.org/abs/2601.16454v1},
  note = {arXiv:2601.16454v1}
}

@misc{PRSS24,
  author = {Chuhan Lu and Minglong Qin and Fang Song and Penghui Yao and Mingnan Zhao},
  title = {{Quantum Pseudorandom Scramblers}},
  year = {2024},
  eprint = {2309.08941v2},
  archivePrefix = {arXiv},
  url = {https://arxiv.org/abs/2309.08941v2},
  note = {arXiv:2309.08941v2; Version 2, September 2024}
}

@misc{Inflationary24,
  author = {Claudio Chamon and Eduardo R. Mucciolo and Andrei E. Ruckenstein and Zhi-Cheng Yang},
  title = {{Fast pseudorandom quantum state generators via inflationary quantum gates}},
  year = {2024},
  eprint = {2304.09885v3},
  archivePrefix = {arXiv},
  url = {https://arxiv.org/abs/2304.09885v3},
  note = {arXiv:2304.09885v3; Published in npj Quantum Information 10, 37 (2024)}
}

@misc{SHH24,
  author = {Thomas Schuster and Jonas Haferkamp and Hsin-Yuan Huang},
  title = {{Random unitaries in extremely low depth}},
  year = {2024},
  eprint = {2407.07754v2},
  archivePrefix = {arXiv},
  url = {https://arxiv.org/abs/2407.07754v2},
  note = {arXiv:2407.07754v2; Version 2, January 2025; first posted in July 2024}
}

@misc{PC26,
  author = {Ben Foxman and Alex Lombardi and Fermi Ma and Barak Nehoran and John Wright},
  title = {{Quantum Lazy Sampling and Path Recording for Any Group}},
  year = {2026},
  eprint = {2606.30281v1},
  archivePrefix = {arXiv},
  url = {https://arxiv.org/abs/2606.30281v1},
  note = {arXiv:2606.30281v1}
}

@misc{BlockedLRFC25,
  author = {Laura Cui and Thomas Schuster and Fernando Brand{\~a}o and Hsin-Yuan Huang},
  title = {{Unitary designs in nearly optimal depth}},
  year = {2025},
  eprint = {2507.06216v2},
  archivePrefix = {arXiv},
  url = {https://arxiv.org/abs/2507.06216v2},
  note = {arXiv:2507.06216v2}
}

@misc{Distinct26,
  author = {Asad Raza and Jens Eisert and Bill Fefferman},
  title = {{Distinctness threshold for pseudorandom unitaries}},
  year = {2026},
  eprint = {2609.03065v1},
  archivePrefix = {arXiv},
  url = {https://arxiv.org/abs/2609.03065v1},
  note = {arXiv:2609.03065v1}
}

@misc{BM24,
  author = {Zvika Brakerski and Nir Magrafta},
  title = {{Real-Valued Somewhat-Pseudorandom Unitaries}},
  year = {2024},
  eprint = {2403.16704v2},
  archivePrefix = {arXiv},
  url = {https://arxiv.org/abs/2403.16704v2},
  note = {arXiv:2403.16704v2}
}

@misc{ConstantTime25,
  author = {Ben Foxman and Natalie Parham and Francisca Vasconcelos and Henry Yuen},
  title = {{Random Unitaries in Constant (Quantum) Time}},
  year = {2025},
  eprint = {2508.11487v3},
  archivePrefix = {arXiv},
  url = {https://arxiv.org/abs/2508.11487v3},
  note = {arXiv:2508.11487v3}
}

@misc{StrongGlue25,
  author = {Prabhanjan Ananth and John Bostanci and Aditya Gulati and Yao-Ting Lin},
  title = {{Gluing Random Unitaries with Inverses and Applications to Strong Pseudorandom Unitaries}},
  year = {2025},
  eprint = {2510.04085v1},
  archivePrefix = {arXiv},
  url = {https://arxiv.org/abs/2510.04085v1},
  note = {arXiv:2510.04085v1; Part II of the full version of the merged CRYPTO 2025 paper}
}

@misc{GridStrong26,
  author = {Marten Folkertsma and Lorenzo Grevink and Jonas Helsen and Alicja Dutkiewicz},
  title = {{Arts \& crafts: Strong random unitaries and geometric locality}},
  year = {2026},
  eprint = {2605.03023v1},
  archivePrefix = {arXiv},
  url = {https://arxiv.org/abs/2605.03023v1},
  note = {arXiv:2605.03023v1}
}

@misc{PermHam24,
  author = {Chi-Fang Chen and Adam Bouland and Fernando G. S. L. Brand{\~a}o and Jordan Docter and Patrick Hayden and Michelle Xu},
  title = {{Efficient unitary designs and pseudorandom unitaries from permutations}},
  year = {2024},
  eprint = {2404.16751v2},
  archivePrefix = {arXiv},
  url = {https://arxiv.org/abs/2404.16751v2},
  note = {arXiv:2404.16751v2}
}

@misc{QHROM25,
  author = {Prabhanjan Ananth and John Bostanci and Aditya Gulati and Yao-Ting Lin},
  title = {{Pseudorandom Unitaries in the Haar Random Oracle Model}},
  year = {2025},
  eprint = {2509.24432v1},
  archivePrefix = {arXiv},
  url = {https://arxiv.org/abs/2509.24432v1},
  note = {arXiv:2509.24432v1; Part I of the full version of the merged CRYPTO 2025 paper}
}

@misc{HiddenHam25,
  author = {Laura Cui and Thomas Schuster and Liang Mao and Hsin-Yuan Huang and Fernando Brand{\~a}o},
  title = {{Random unitaries from Hamiltonian dynamics}},
  year = {2025},
  eprint = {2510.08434v1},
  archivePrefix = {arXiv},
  url = {https://arxiv.org/abs/2510.08434v1},
  note = {arXiv:2510.08434v1}
}

@misc{PRI24,
  author = {Prabhanjan Ananth and Aditya Gulati and Fatih Kaleoglu and Yao-Ting Lin},
  title = {{Pseudorandom Isometries}},
  year = {2023},
  eprint = {2311.02901v3},
  archivePrefix = {arXiv},
  url = {https://arxiv.org/abs/2311.02901v3},
  note = {arXiv:2311.02901v3; Cited arXiv version posted in November 2023}
}

@misc{BoundedCopies25,
  author = {Bruno Cavalar and Eli Goldin and Matthew Gray and Peter Hall and Yanyi Liu and Angelos Pelecanos},
  title = {{On the Computational Hardness of Quantum One-Wayness}},
  year = {2025},
  eprint = {2312.08363v4},
  archivePrefix = {arXiv},
  url = {https://arxiv.org/abs/2312.08363v4},
  note = {arXiv:2312.08363v4; Published in Quantum 9, 1679 (2025)}
}

@misc{Aaronsontomography,
  author = {Scott Aaronson},
  title = {{Shadow Tomography of Quantum States}},
  year = {2018},
  note = {arXiv:1711.01053; Proceedings of STOC 2018},
  url = {https://arxiv.org/abs/1711.01053}
}

@misc{HKP20,
  author = {Hsin-Yuan Huang and Richard Kueng and John Preskill},
  title = {{Predicting many properties of a quantum system from very few measurements}},
  year = {2020},
  note = {arXiv:2002.08953; Nature Physics 16, 1050--1057},
  url = {https://arxiv.org/abs/2002.08953}
}

@inproceedings{KT25,
  author = {Dakshita Khurana and Kabir Tomer},
  title = {{Founding Quantum Cryptography on Quantum Advantage, or, Towards Cryptography from \#P Hardness}},
  booktitle = {Proceedings of the 57th Annual ACM Symposium on Theory of Computing},
  series = {STOC '25},
  publisher = {Association for Computing Machinery},
  year = {2025},
  pages = {178--188},
  doi = {10.1145/3717823.3718145},
  url = {https://arxiv.org/abs/2409.15248},
  note = {Full version: arXiv:2409.15248}
}

@inproceedings{IL89,
  author = {Russell Impagliazzo and Michael Luby},
  title = {{One-way functions are essential for complexity based cryptography}},
  booktitle = {30th Annual Symposium on Foundations of Computer Science},
  publisher = {IEEE},
  year = {1989},
  pages = {230--235},
  doi = {10.1109/SFCS.1989.63483},
  url = {https://doi.org/10.1109/SFCS.1989.63483}
}

@article{KK07,
  author = {Elham Kashefi and Iordanis Kerenidis},
  title = {{Statistical Zero Knowledge and quantum one-way functions}},
  journal = {Theoretical Computer Science},
  volume = {378},
  number = {1},
  pages = {101--116},
  year = {2007},
  doi = {10.1016/j.tcs.2007.03.013},
  url = {https://arxiv.org/abs/quant-ph/0511266}
}

@misc{HirokaHsieh24,
  author = {Taiga Hiroka and Min-Hsiu Hsieh},
  title = {{Computational Complexity of Learning Efficiently Generatable Pure States}},
  year = {2024},
  eprint = {2410.04373},
  archivePrefix = {arXiv},
  url = {https://arxiv.org/abs/2410.04373},
  note = {arXiv:2410.04373}
}
\endgroup

\appendix
\section{One-way functions from computable samplable OWSGs}
\label{sec:owf-samplable}

In two separate steps, the tomography procedure from Section~\ref{owsg:section}
uses an $\NP$ oracle to choose a key that maximizes the likelihood of an
observed measurement transcript. In this section, we replace each such search
with a procedure that samples a key from the posterior distribution on keys
conditioned on the observed transcript.
We then show that when applying the learning procedure to computable samplable OWSGs, these posterior-sampling procedures can be implemented using distributional inversion, which yields a classical one-way function from OWSG security. 
\\

\noindent Fix the security parameter $n$.  Write
\[
    \mathcal K=\mathcal K_n,\qquad L=\ell(n),\qquad
    \mathcal X=\{0,1\}^n,\qquad D=2^n.
\]
For $k\in\mathcal K$, write
\[
    |\psi_k\rangle=\sum_{x\in\mathcal X}a_k(x)|x\rangle,
    \qquad
    p_k(x)=|a_k(x)|^2,
    \qquad
    |\psi_k^+\rangle=\sum_x|a_k(x)|\,|x\rangle.
\]
Recall the controlled-swap measurement $\mathcal M$ and the distribution
$R_{\psi,q}$ from Section~\ref{owsg:section}.  Thus for a distribution $q$, $R_{\psi,q}$ is the
outcome distribution of $\mathcal M$ on $|\psi\rangle|q^+\rangle$.  Define
\begin{equation}
    S_k:=R_{\psi_k,p_k}=R_{k,k}.
    \label{owf:matched-distribution}
\end{equation}
Lemma \ref{owsg:matched-fidelity} shows that
\begin{equation}
    1-|\langle\psi_{k'}|\psi_k\rangle|^2
    \le 4H^2(S_k,S_{k'})
    \qquad\text{for all }k,k'\in\mathcal K.
    \label{owf:matched-identification}
\end{equation}
For both results below, the unknown key has a known classically samplable
distribution $\pi_n$, i.e. there exists a deterministic polynomial time procedure $G_n$ such that $K \sim \pi_n$ where
\begin{equation}
    K=G_n(U),\qquad U\leftarrow\{0,1\}^{r(n)},
    \label{owf:key-seed}
\end{equation}
  We suppress the subscript
$n$ when clear.

\subsection{Posterior sampling}

\textbf{The posterior-fitting procedure.}
Let $\{Q_k\}_{k\in\mathcal K}$ be any keyed family of distributions and
let $s\ge1$.  Consider the joint experiment
\[
    K\sim\pi,
    \qquad
    (Y\mid K=k)\sim Q_k^{\otimes s}.
\]
Given a transcript $y$ of positive probability, define
$\mathsf{PostFit}_{Q,s}(y)$ to sample a key from
\begin{equation}
    \Pr[\mathsf{PostFit}_{Q,s}(y)=k']
    =
    \frac{\pi(k')Q_{k'}^{\otimes s}(y)}
         {\sum_j\pi(j)Q_j^{\otimes s}(y)}.
    \label{owf:postfit}
\end{equation}
On a zero-probability transcript, define its output arbitrarily.  Thus,
conditioned on $Y$, the original key and the output of
$\mathsf{PostFit}_{Q,s}$ are independent samples from the same posterior.

\begin{definition}[Posterior-sampling error]
\label{owf:posterior-error-def}
Let $(V,Y)\sim J$ be a finite joint distribution.  A randomized procedure
$\mathsf{Post}$ that takes input $Y$ has posterior-sampling error at
most $\tau$ for $J$ if
\[
    \operatorname{TV}\bigl(
        (Y,\mathsf{Post}(Y)),
        (Y,V)
    \bigr)\le\tau.
\]
\end{definition}

\begin{lemma}[Posterior fitting]
\label{owf:posterior}
Let $K\sim\pi$ take values in a set of at most $M$ keys, and let
$(Y\mid K=k)\sim Q_k^{\otimes s}$.  Suppose
$K'=\mathsf{Post}(Y)$ has posterior-sampling error at most $\tau$ for
$(K,Y)$ and always outputs a valid key.  Then, for every $0<\gamma<2$,
\begin{align}
    \Pr\!\left[H^2(Q_K,Q_{K'})>\gamma\right]
    &\le \frac{M}{2}e^{-s\gamma/2}+\tau,
    \label{owf:posterior-fit}\\
    \operatorname{TV}(\mathcal L(K'),\pi)
    &\le \tau.
    \label{owf:posterior-marginal}
\end{align}
\end{lemma}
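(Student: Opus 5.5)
Both bounds reduce to the exact posterior sampler $\mathsf{PostFit}_{Q,s}$ and then pay $\tau$ once for the approximation. Let $K''=\mathsf{PostFit}_{Q,s}(Y)$. By Definition~\ref{owf:posterior-error-def}, $(Y,K')$ is within total variation $\tau$ of $(Y,K)$. By \eqref{owf:postfit}, $(Y,K'')$ has the same law as $(Y,K)$, and $K$ and $K''$ are conditionally independent given $Y$, each distributed according to the posterior.

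\textbf{Marginal bound.} Projecting onto the second coordinate cannot increase total variation. Hence $\operatorname{TV}(\mathcal L(K'),\mathcal L(K))\le\tau$, and $\mathcal L(K)=\pi$, which gives \eqref{owf:posterior-marginal}.

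\textbf{Fit bound.} I would first replace $K'$ by $K''$. The event $\{H^2(Q_K,Q_{K'})>\gamma\}$ is a function of $(K,K')$, so I need the pair $(K,K')$ to be within total variation $\tau$ of $(K,K'')$. Let $W=(K,Y)$ be the joint variable and let $K'$ be generated from $Y$ by the procedure, independently of $K$ given $Y$. Compare $(K,Y,K')$ with $(K,Y,K'')$, coupled through the common $Y$. Conditionally on $Y$, the variable $K$ is independent of either output, and both triples have $K$ drawn from the posterior given $Y$. Hence
\[
\operatorname{TV}\bigl((K,Y,K'),(K,Y,K'')\bigr)=\operatorname{TV}\bigl((Y,K'),(Y,K'')\bigr)\le\tau.
\]
It therefore suffices to show
\[
\Pr[H^2(Q_K,Q_{K''})>\gamma]\le \tfrac M2 e^{-s\gamma/2}.
\]

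\textbf{Exact-posterior computation.} Write $L_k(y)=Q_k^{\otimes s}(y)$ and $Z(y)=\sum_j\pi(j)L_j(y)$. Then
\[
\Pr[K=k,K''=k']=\sum_y \frac{\pi(k)L_k(y)\,\pi(k')L_{k'}(y)}{Z(y)}.
\]
Since $Z(y)\ge\pi(k)L_k(y)+\pi(k')L_{k'}(y)\ge 2\sqrt{\pi(k)\pi(k')L_k(y)L_{k'}(y)}$ by AM--GM, each summand is at most $\tfrac12\sqrt{\pi(k)\pi(k')}\sqrt{L_k(y)L_{k'}(y)}$. Summing over $y$ gives
\[
\Pr[K=k,K''=k']\le \tfrac12\sqrt{\pi(k)\pi(k')}\,B(Q_k,Q_{k'})^s .
\]
For a bad pair, meaning $H^2(Q_k,Q_{k'})>\gamma$, we have $B=1-H^2/2<e^{-\gamma/2}$, exactly as in Lemma~\ref{owsg:mle}. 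Summing over bad ordered pairs and using $\sum_{k,k'}\sqrt{\pi(k)\pi(k')}=\bigl(\sum_k\sqrt{\pi(k)}\bigr)^2\le M$ by Cauchy--Schwarz on at most $M$ keys yields $\tfrac M2 e^{-s\gamma/2}$. Adding $\tau$ proves \eqref{owf:posterior-fit}.

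The main point needing care is the coupling step: $K'$ sees only $Y$, so the triple comparison reduces to the $(Y,K')$ versus $(Y,K'')$ distance. The only other delicate point is the AM--GM bound on the denominator, which is where the factor $\tfrac12$ comes from.
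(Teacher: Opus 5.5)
Your proposal is correct and follows the paper's proof essentially step for step: it bounds the exact-posterior case using AM--GM on the normalizer and Cauchy--Schwarz on $\sum_k\sqrt{\pi(k)}$, then transfers to the approximate sampler at cost $\tau$ by coupling through $Y$ and appending $K$ conditionally independently given $Y$. One detail you should state explicitly: the bound $Z(y)\ge\pi(k)L_k(y)+\pi(k')L_{k'}(y)$ requires $k\neq k'$, which holds automatically for bad pairs because $H^2(Q_k,Q_{k'})>\gamma>0$ forces $Q_k\neq Q_{k'}$.
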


\begin{proof}
First let $K'$ be an exact posterior sample.  Put
\[
    b_k(y):=\pi(k)Q_k^{\otimes s}(y),
    \qquad
    b(y):=\sum_j b_j(y).
\]
For distinct $k,k'$ and $b(y)>0$, since $b(y) \geq b_k(y) + b_{k'}(y) \geq 2\sqrt{b_k(y)b_{k'}(y)}$,
\[
    \Pr[K=k,K'=k',Y=y]
    =\frac{b_k(y)b_{k'}(y)}{b(y)}
    \le\frac12\sqrt{b_k(y)b_{k'}(y)}.
\]
Summing over $y$ gives
\[
    \Pr[K=k,K'=k']
    \le
    \frac12\sqrt{\pi(k)\pi(k')}\,B(Q_k,Q_{k'})^s.
\]
If $H^2(Q_k,Q_{k'})>\gamma$, then
$B(Q_k,Q_{k'})\le e^{-\gamma/2}$.  Summing over the relevant pairs and using
\[
    \left(\sum_k\sqrt{\pi(k)}\right)^2\le M
\]
gives the first bound for exact posterior sampling.  Moreover, averaging the
posterior distribution over $Y$ gives $K'\sim\pi$.

For a procedure with posterior-sampling error at most $\tau$, the joint distribution
of $(Y,K')$ is within total variation distance $\tau$ of the exact posterior
experiment.  Appending $K$ independently sampled from from its conditional
distribution given $Y$ cannot increase this distance.  Hence the probability of an event that depends on $K$ and $K'$ can only change by at most $\tau$ when we move from exact sampled $K'$ to $K'$ sampled from the procedure, giving the final bound we need. Taking the marginal on $K'$ gives
$\operatorname{TV}(\mathcal L(K'),\pi)\le\tau$.
\end{proof}

\textbf{Learning Algorithm.}
We now use posterior fitting to replace $\NP$ oracle calls in the tomography learning algorithm.
Note that here the second posterior call uses $\{S_k\}$ rather than $\{R_{k,a}\}$.

Fix positive integers $s_1,s_2$ and parameters
$0<\varepsilon,\gamma,\zeta<1$.  Suppose $\mathsf{Post}_1$ and
$\mathsf{Post}_2$ have posterior-sampling errors $\tau_1,\tau_2$
respectively for the joint distributions $(K, X)$ and $(K, Z)$ sampled according to the following experiments
\begin{align*}
    &K\sim\pi,
      \qquad X\mid K=k\sim p_k^{\otimes s_1},\\
    &K\sim\pi,
      \qquad Z\mid K=k\sim S_k^{\otimes s_2}.
\end{align*}
Given copies of an unknown state $|\psi_K\rangle$, the algorithm performs
the following steps.

\begin{enumerate}
    \item \textbf{Find a match for the basis distribution.}
    Measure the first $s_1$ copies in the computational basis, obtaining
    the transcript
    \[
        X=(X_1,\ldots,X_{s_1}).
    \]
    Set
    \[
        A:=\mathsf{Post}_1(X).
    \]

    \item \textbf{Prepare reference states.}
    Using the procedure of Lemma~\ref{owsg:positive}, prepare $s_2$
    independent reference states, each at trace distance at most $\zeta$
    from
    \[
        |\psi_A^+\rangle.
    \]

    \item \textbf{Collect measurement $\mathcal M$ outcomes.}
    For each of the remaining $s_2$ challenge copies, apply
    $\mathcal M$ to the challenge copy and a fresh reference state from
    Step~2.  Record the resulting transcript
    \[
        \widetilde Z=(\widetilde Z_1,\ldots,\widetilde Z_{s_2}).
    \]

    \item \textbf{Find a match for the matched-reference distribution.}
    Set
    \[
        K':=\mathsf{Post}_2(\widetilde Z)
    \]
    and return $K'$.
\end{enumerate}
The first posterior call replaces the first maximum-likelihood search in
the $\NP$ learning algorithm.  It produces a key whose basis distribution is close
to that of the challenge.  This key is used only to construct the
reference states.  The second posterior call is with respect to the
fixed family $\{S_k\}_{k\in\mathcal K}$.

\begin{proposition}
\label{owf:posterior-learning}
For the posterior learning algorithm above, let
\begin{align}
    \beta_1&:=\frac{M}{2}e^{-s_1\gamma/2},
    \label{owf:beta1}\\
    \beta_2&:=\frac{M}{2}e^{-s_2\varepsilon/8},
    \label{owf:beta2}\\
    E&:=\tau_1+\tau_2+\beta_1+\sqrt{s_2\gamma}+s_2\zeta,
    \label{owf:E}
\end{align}
where $M=|\mathcal K|$.  Then
\begin{align}
    \Pr\!\left[
        |\langle\psi_{K'}|\psi_K\rangle|^2<1-\varepsilon
    \right]
    &\le \beta_2+E,
    \label{owf:posterior-learning-fidelity}\\
    \operatorname{TV}(\mathcal{L}(K'),\pi)
    &\le E.
    \label{owf:posterior-learning-marginal}
\end{align}
\end{proposition}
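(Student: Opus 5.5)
The plan is a hybrid argument. I compare the real transcript $\widetilde Z$ with an ideal transcript $Z\sim S_K^{\otimes s_2}$, drawn with the exact matched reference $|\psi_K^+\rangle$. In the ideal experiment, Lemma~\ref{owf:posterior} applies directly to the family $\{S_k\}$. The hybrid errors will then account for each term of $E$.

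\textbf{Step 1 (first posterior call).} Apply Lemma~\ref{owf:posterior} to the family $\{p_k\}$ with $s=s_1$ and parameter $\gamma$. This gives
\[
    \Pr\bigl[H^2(p_K,p_A)>\gamma\bigr]\le\beta_1+\tau_1 .
\]
Call the complementary event $G$.

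\textbf{Step 2 (transcript hybrids).} Condition on $K=k$ and on the value of $A$. Since $A$ depends only on the first $s_1$ copies, the remaining $s_2$ challenge copies are still fresh copies of $|\psi_k\rangle$.
\begin{enumerate}
    \item Replacing the $s_2$ approximate references by exact copies of $|\psi_A^+\rangle$ changes the transcript distribution by at most $s_2\zeta$ in total variation. This uses Lemma~\ref{owsg:positive} and the one-at-a-time argument behind \eqref{owsg:eq-88}. After this replacement the transcript is distributed as $R_{k,A}^{\otimes s_2}$.
    \item Next compare $R_{k,A}=R_{\psi_k,p_A}$ with $S_k=R_{\psi_k,p_k}$. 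Lemma~\ref{owsg:reference-stability} gives $H(R_{k,A},S_k)\le H(p_A,p_k)$.
    \item For product distributions, $H^2(P^{\otimes s},Q^{\otimes s})=2-2B(P,Q)^s\le sH^2(P,Q)$, using $1-B^s\le s(1-B)$.
    \item Combining these with Lemma~\ref{lem:tv-hellinger}, on the event $G$ we get
\[
    \operatorname{TV}\bigl(R_{k,A}^{\otimes s_2},S_k^{\otimes s_2}\bigr)\le\sqrt{s_2\gamma}.
\]
\end{enumerate}
Averaging over $(K,A)$ and charging the bad event $\neg G$ gives
\[
    \operatorname{TV}\bigl((K,\widetilde Z),(K,Z)\bigr)\le\tau_1+\beta_1+\sqrt{s_2\gamma}+s_2\zeta=:\delta'.
\]

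\textbf{Step 3 (second posterior call).} Applying the same randomized map $\mathsf{Post}_2$ to both transcripts cannot increase total variation. Hence the joint law of $(K,\mathsf{Post}_2(\widetilde Z))$ is within $\delta'$ of the law of $(K,\mathsf{Post}_2(Z))$.

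In the ideal experiment, apply Lemma~\ref{owf:posterior} to $\{S_k\}$ with $s=s_2$ and threshold $\varepsilon/4$. This gives
\[
    \Pr\bigl[H^2(S_K,S_{K'})>\varepsilon/4\bigr]\le\tfrac M2e^{-s_2\varepsilon/8}+\tau_2=\beta_2+\tau_2 .
\]
By \eqref{owf:matched-identification}, whenever $H^2(S_K,S_{K'})\le\varepsilon/4$ we have $1-|\langle\psi_{K'}|\psi_K\rangle|^2\le\varepsilon$. Transferring back to the real experiment costs $\delta'$, which yields \eqref{owf:posterior-learning-fidelity}.

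For the marginal, \eqref{owf:posterior-marginal} gives $\operatorname{TV}(\mathcal L(\mathsf{Post}_2(Z)),\pi)\le\tau_2$. Adding $\delta'$ gives \eqref{owf:posterior-learning-marginal}.

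\textbf{Main obstacle.} The delicate step is Step 2. The reference key $A$ is random and correlated with $K$, so the hybrid bound must be carried out conditionally on $(K,A)$ and then averaged. I also need the posterior error $\tau_2$ to refer to the ideal joint law $(K,Z)$, not the real one. This is exactly why the comparison is made on the joint pair $(K,\cdot)$ rather than on transcripts alone.
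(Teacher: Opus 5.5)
Your proof is correct and follows the paper's argument essentially step for step: the same first posterior-fit bound, the same hybrid on the joint law of $(K,\cdot)$ costing $\tau_1+\beta_1+s_2\zeta+\sqrt{s_2\gamma}$, data processing through $\mathsf{Post}_2$, and the second fit at threshold $\varepsilon/4$ combined with \eqref{owf:matched-identification}. The only deviation is minor. You obtain the $\sqrt{s_2\gamma}$ term by tensorizing Hellinger distance on the measured distributions $R_{k,A}$ versus $S_k$ (via Lemma~\ref{owsg:reference-stability} and Lemma~\ref{lem:tv-hellinger}), whereas the paper bounds the trace distance $\sqrt{1-b^{2s_2}}$ between the $s_2$-fold pure reference states directly; your explicit conditioning on $(K,A)$ also states the averaging step more carefully than the paper does.
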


\begin{proof}
We first show that the basis distribution of the key $A$ obtained in
Step~1 is close to that of the challenge.  We then replace the reference
states prepared from $A$ by the ideal reference $|\psi_K^+\rangle$.
This shows that the transcript collected in Step~3 is close to a sample
from $S_K^{\otimes s_2}$.  Finally, we apply the posterior-fitting lemma
a second time and use~\eqref{owf:matched-identification} to obtain high
fidelity.\\

\noindent Applying Lemma~\ref{owf:posterior} to the first transcript gives
\[
    \Pr\!\left[H^2(p_K,p_A)>\gamma\right]
    \le \beta_1+\tau_1.
\]
Suppose $H^2(p_K,p_A)\leq\gamma$.  Let $b:=B(p_K,p_A)$.
Since
\[
    H^2(p_K,p_A)=2-2b\le\gamma,
\]
we have
\[
    b\ge1-\frac{\gamma}{2},
    \qquad
    b^2\ge1-\gamma.
\]
By Lemma~\ref{lem:qsample-identities},
\[
    \langle\psi_K^+|\psi_A^+\rangle=b.
\]
Therefore
\begin{align*}
\mathsf{TD}\!\left(
    (|\psi_K^+\rangle\langle\psi_K^+|)^{\otimes s_2},
    (|\psi_A^+\rangle\langle\psi_A^+|)^{\otimes s_2}
\right)
=
\sqrt{1-b^{2s_2}}
\end{align*}
Since $b^2\ge 1-\gamma$, we have
\[
    b^{2s_2}
    =(b^2)^{s_2}
    \ge (1-\gamma)^{s_2}
    \ge 1-s_2\gamma,
\]
where the last inequality is Bernoulli's inequality. Therefore
\[
    1-b^{2s_2}\le s_2\gamma,
\]
and hence
\[
\mathsf{TD}\!\left(
    (|\psi_K^+\rangle\langle\psi_K^+|)^{\otimes s_2},
    (|\psi_A^+\rangle\langle\psi_A^+|)^{\otimes s_2}
\right)
\le \sqrt{s_2\gamma}.
\]

Each reference actually prepared in Step~2 is within trace distance
$\zeta$ of $|\psi_A^+\rangle$.  Replacing the $s_2$ prepared references with exact copies of $|\psi_A^+\rangle$
one at a time therefore changes the joint measurement transcript of the third step by
total variation distance at most $s_2\zeta$.
By the above, replacing the exact references $|\psi_A^+\rangle^{\otimes s_2}$ by
$|\psi_K^+\rangle^{\otimes s_2}$ costs at most
$\sqrt{s_2\gamma}$.

Including the probability that the first posterior fitting step fails, the actual
joint distribution of $(K,\widetilde Z)$ is therefore within
\[
    \Delta
    :=
    \tau_1+\beta_1+\sqrt{s_2\gamma}+s_2\zeta
\]
of the ideal distribution
\[
    K\sim\pi,
    \qquad
    Z\mid K=k\sim S_k^{\otimes s_2}.
\]

Next we show that the second posterior fitting step gives a high fidelity key.
Suppose $K'$ was generated by applying $\mathsf{Post}_2$ to the ideal distribution transcript. By
Lemma~\ref{owf:posterior}, with threshold $\varepsilon/4$,
\[
    \Pr\!\left[
        H^2(S_K,S_{K'})>\frac{\varepsilon}{4}
    \right]
    \le
    \beta_2+\tau_2
\]
and
\[
    \operatorname{TV}(\mathcal L(K'),\pi)
    \le\tau_2.
\]
Applying the same randomized map $\mathsf{Post}_2$ to the actual and
ideal transcript distributions cannot increase total variation distance.
Restoring the actual transcript therefore adds at most $\Delta$ to the failure
probability and to the marginal total variation distance.
Finally, whenever
\[
    H^2(S_K,S_{K'})\le\frac{\varepsilon}{4},
\]
equation~\eqref{owf:matched-identification} gives
\[
    1-|\langle\psi_{K'}|\psi_K\rangle|^2
    \le
    4H^2(S_K,S_{K'})
    \le\varepsilon.
\]
Thus
\[
    \Pr\!\left[
        |\langle\psi_{K'}|\psi_K\rangle|^2<1-\varepsilon
    \right]
    \le
    \beta_2+\tau_2+\Delta
    =
    \beta_2+E.
\]
and similarly
\[
    \operatorname{TV}(\mathcal L(K'),\pi)
    \le
    \tau_2+\Delta
    =
    E.
\]
\end{proof}

\begin{corollary}[Parameters for posterior learning]
\label{owf:posterior-parameters}
For $0<\varepsilon,\delta<1/4$, put
\begin{align}
    A_0&:=L\ln2+\ln\frac8\delta,
    \label{owf:A0}\\
    s_2&:=\left\lceil\frac{8A_0}{\varepsilon}\right\rceil,
    \label{owf:posterior-s2}\\
    \gamma&:=\frac{\delta^2}{64s_2},
    \label{owf:posterior-gamma}\\
    s_1&:=\left\lceil\frac{2A_0}{\gamma}\right\rceil,
    \label{owf:posterior-s1}\\
    \zeta&:=\frac{\delta}{8s_2}.
    \label{owf:posterior-zeta}
\end{align}
If $\tau_1,\tau_2\le\delta/8$, then the learning algorithm in
Proposition~\ref{owf:posterior-learning} satisfies
\begin{align}
    \Pr\!\left[|\langle\psi_{K'}|\psi_K\rangle|^2<1-\varepsilon\right]
    \le\delta
    \label{owf:posterior-param-fidelity}
\end{align}
It uses
\[
    s_1+s_2
    =O\!\left(
        \frac{(L+\log(1/\delta))^2}{\varepsilon\delta^2}
    \right)
\]
challenge copies.
\end{corollary}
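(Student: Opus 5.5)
The plan is to substitute the stated parameters into Proposition~\ref{owf:posterior-learning} and bound each of the five summands of $\beta_2+E=\beta_2+\tau_1+\tau_2+\beta_1+\sqrt{s_2\gamma}+s_2\zeta$ by an explicit fraction of $\delta$. The only structural facts needed are $\mathcal K\subseteq\{0,1\}^{L}$, which gives $M\le 2^{L}$, and $e^{-A_0}=2^{-L}\cdot\delta/8$.

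First I bound the two exponential terms.
\begin{itemize}
\item Since $s_2\ge 8A_0/\varepsilon$, we have $s_2\varepsilon/8\ge A_0$. Hence $\beta_2=\frac M2 e^{-s_2\varepsilon/8}\le \frac{2^L}{2}\cdot 2^{-L}\frac{\delta}{8}=\frac{\delta}{16}$.
\item Since $s_1\ge 2A_0/\gamma$, we have $s_1\gamma/2\ge A_0$. The same computation gives $\beta_1\le\delta/16$.
\end{itemize}

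Next I bound the remaining terms directly.
\begin{itemize}
\item By the choice of $\gamma$, $\sqrt{s_2\gamma}=\sqrt{\delta^2/64}=\delta/8$.
\item By the choice of $\zeta$, $s_2\zeta=\delta/8$.
\item By hypothesis, $\tau_1+\tau_2\le\delta/4$.
\end{itemize}
Summing gives $\beta_2+E\le \delta\bigl(\tfrac1{16}+\tfrac1{16}+\tfrac18+\tfrac18+\tfrac14\bigr)=\tfrac{5}{8}\delta\le\delta$. This is \eqref{owf:posterior-param-fidelity}.

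I also need to check that the parameters lie in the ranges the proposition and Lemma~\ref{owf:posterior} require. We have $0<\gamma<2$ since $\gamma\le\delta^2/64<1$. We have $0<\zeta<1$, and $\varepsilon<1/4$ is given. Lemma~\ref{owf:posterior} is applied inside the proposition with threshold $\varepsilon/4<2$, which is also within range.

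For the copy count, note $A_0=O(L+\log(1/\delta))$ and $\varepsilon<1/4$, so $s_2=O(A_0/\varepsilon)$. Then $s_1=\lceil 2A_0/\gamma\rceil=O(A_0 s_2/\delta^2)=O\bigl((L+\log(1/\delta))^2/(\varepsilon\delta^2)\bigr)$. This term dominates $s_2$, which gives the stated bound on $s_1+s_2$.

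There is no real obstacle here; the argument is pure bookkeeping. The one point to watch is that the $\beta_1$ bound relies on $\gamma$ being defined before $s_1$. With that ordering, $s_1\gamma/2\ge A_0$ holds exactly as designed.
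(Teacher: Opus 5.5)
Your proposal is correct and matches the paper's proof. The paper likewise bounds $\beta_1,\beta_2\le\delta/16$, notes $\sqrt{s_2\gamma}=s_2\zeta=\delta/8$, and uses $\tau_1+\tau_2\le\delta/4$ to get $\beta_2+E\le 5\delta/8<\delta$, reading the copy count off the parameters; your explicit use of $M\le 2^L$ and $e^{-A_0}=2^{-L}\delta/8$, the range checks, and the $O(\cdot)$ bookkeeping for $s_1$ simply spell out those steps in more detail.
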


\begin{proof}
The choices give
$\beta_1,\beta_2\le\delta/16$,
$\sqrt{s_2\gamma}=\delta/8$, and
$s_2\zeta=\delta/8$.  Hence
$E\le9\delta/16$ and
$\beta_2+E\le5\delta/8<\delta$.  The resource bounds follow directly
from the displayed parameters and Lemma~\ref{owsg:positive}.
\end{proof}

For OWSGs we also need to transfer average correctness to the posterior
output key.

\begin{lemma}[Verification with average correctness]
\label{owf:average-correctness}
Let $\rho_k=|\psi_k\rangle\langle\psi_k|$, and let $E_{k'}$ be the
verifier's single-copy acceptance effect for candidate key $k'$.  Suppose
\begin{equation}
    \mathbb E_{K\sim\pi}
    \operatorname{Tr}(E_K\rho_K)
    \ge1-\nu(n).
    \label{owf:average-correctness-assumption}
\end{equation}
If a learning algorithm returns $K'$ with fidelity-failure probability at most
$\delta$ at error $\varepsilon$ and
$\operatorname{TV}(\mathcal L(K'),\pi)\le\delta$, then its acceptance
probability on a fresh challenge copy is at least
\begin{equation}
    1-\nu(n)-2\delta-\sqrt{\varepsilon}.
    \label{owf:average-correctness-conclusion}
\end{equation}
\end{lemma}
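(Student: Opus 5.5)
The plan is to decompose the acceptance probability according to whether the learner succeeded, and to pay for the two separate ways the output key $K'$ can differ from an honest key. Write the acceptance probability on a fresh challenge copy as $\mathbb E\,\operatorname{Tr}(E_{K'}\rho_K)$, where the expectation is over the joint distribution of $(K,K')$ produced by the learning experiment. The fresh copy $\rho_K$ is independent of $K'$ given $K$, so this expression is exact.

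First, fix the pair $(K,K')$ and suppose learning succeeded, i.e.\ $|\langle\psi_{K'}|\psi_K\rangle|^2\ge1-\varepsilon$. For pure states, $T(\rho_{K'},\rho_K)=\sqrt{1-|\langle\psi_{K'}|\psi_K\rangle|^2}\le\sqrt\varepsilon$. Since $0\le E_{K'}\le I$, this gives
\[
\operatorname{Tr}(E_{K'}\rho_K)\ge\operatorname{Tr}(E_{K'}\rho_{K'})-\sqrt\varepsilon .
\]
On the failure event, of probability at most $\delta$, I will only use $\operatorname{Tr}(E_{K'}\rho_K)\ge0$. This requires $\operatorname{Tr}(E_{K'}\rho_{K'})\le1$ to absorb the loss, so that
\[
\mathbb E\operatorname{Tr}(E_{K'}\rho_K)\ge\mathbb E\operatorname{Tr}(E_{K'}\rho_{K'})-\delta-\sqrt\varepsilon .
\]
More precisely, I will write $\operatorname{Tr}(E_{K'}\rho_K)\ge\mathbf 1[\text{succ}]\bigl(\operatorname{Tr}(E_{K'}\rho_{K'})-\sqrt\varepsilon\bigr)$ and $\mathbf 1[\text{succ}]\ge1-\mathbf 1[\text{fail}]$, and bound the product of the failure indicator with the trace by that indicator.

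Second, the quantity $\mathbb E\operatorname{Tr}(E_{K'}\rho_{K'})$ depends only on the marginal law of $K'$. It is the expectation of the $[0,1]$-valued function $f(k)=\operatorname{Tr}(E_k\rho_k)$. Because $\operatorname{TV}(\mathcal L(K'),\pi)\le\delta$, the expectation under $\mathcal L(K')$ differs from the expectation under $\pi$ by at most $\delta$. By \eqref{owf:average-correctness-assumption}, the latter is at least $1-\nu(n)$. Combining the two steps yields $1-\nu(n)-2\delta-\sqrt\varepsilon$, which is \eqref{owf:average-correctness-conclusion}.

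The main subtlety, rather than any real obstacle, is the bookkeeping. The first step compares to the self-acceptance of $K'$, which is a function of $K'$ alone. This lets the total-variation bound on the marginal replace the pointwise correctness assumption used in Corollary~\ref{owsg:attack}. No step requires more than the trace-distance bound for effects and the variational characterization of total variation.
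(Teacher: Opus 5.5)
Your proposal is correct and follows the same route as the paper. You transfer the average-correctness assumption to $K'$ through the total-variation bound at cost $\delta$, pay $\sqrt{\varepsilon}$ on the successful-learning event via the trace-distance bound for an effect $0\le E_{K'}\le I$, and pay another $\delta$ on the failure event; your indicator bookkeeping just makes explicit what the paper's three-line proof leaves implicit.
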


\begin{proof}
The marginal bound transfers~\eqref{owf:average-correctness-assumption}
to $K'$ at cost at most $\delta$.  On the successful-learning event,
\[
    T(\rho_K,\rho_{K'})
    \le\sqrt{\varepsilon},
\]
so changing the verifier's input costs at most this amount.  The
complementary event costs at most another $\delta$.
\end{proof}

\subsection{Constructing one-way functions}

Under the classical sampling assumption, the two transcript distributions
used by the posterior learner can themselves be sampled classically. Therefore, in the absence of one-way functions, we can use distributional inversion to implement the posterior calls. Since this violates the security of the OWSGs, we obtain a construction of one-way functions.

We use the following standard consequence of distributional inversion,
often referred to as universal extrapolation.

\begin{lemma}[Universal extrapolation {\cite{IL89,IL90,KK07}}]
\label{owf:universal-extrapolation}
Let $\mathsf{S}$ be a classical PPT algorithm that, on input $1^n$, outputs
a pair $(Y,V)$ of polynomial-length classical strings.  Suppose that
classical one-way functions secure against QPT inversion do not exist.
Then, for every inverse-polynomial $\tau>0$, there is a QPT algorithm
$\mathsf{Ext}$ such that, for infinitely many $n$,
\[
    \operatorname{TV}\!\left(
        \mathcal L\bigl(Y,\mathsf{Ext}(1^n,Y)\bigr),
        \mathcal L(Y,V)
    \right)
    \le \tau.
\]
\end{lemma}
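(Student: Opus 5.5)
The plan is to reduce the lemma to the Impagliazzo--Luby theorem that distributional one-way functions imply one-way functions~\cite{IL89}, checking along the way that its proof survives quantum inverters. Let $\rho(n)$ be the (polynomial) number of random coins used by $\mathsf S(1^n)$, and write $\mathsf S(1^n;u)=(Y(u),V(u))$ for $u\in\{0,1\}^{\rho(n)}$. Define the classical function
\[
    f(1^n,u):=\bigl(1^n,Y(u)\bigr).
\]
The unary prefix makes $n$, and hence $\rho(n)$, recoverable from the input length, so $f$ is a single uniform polynomial-time function. The key observation is that $V(u)$ is a deterministic, efficiently computable function of the coins $u$. So it suffices to sample an almost-uniform preimage of $Y$ under $f$ and then recompute $V$ from that preimage.

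\textbf{Step 1 (distributional inversion of $f$).} Suppose there are no classical one-way functions secure against QPT inversion. By the contrapositive of~\cite{IL89}, in the quantum-secure form of~\cite{KK07}, $f$ is then not distributionally one-way. Concretely, for every inverse polynomial $\tau$ there is a QPT algorithm $\mathsf I$ such that, for infinitely many $n$,
\[
    \operatorname{TV}\!\left(\bigl(f(1^n,U),\mathsf I(f(1^n,U))\bigr),\bigl(f(1^n,U),U\bigr)\right)\le\tau ,
\]
where $U$ is uniform on $\{0,1\}^{\rho(n)}$.

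\textbf{Step 2 (the extrapolator).} Define $\mathsf{Ext}(1^n,Y)$ as follows: run $\mathsf I(1^n,Y)$ to obtain coins $u'$, then output $V(u')$, computed by rerunning $\mathsf S(1^n;u')$. The map $(y,u)\mapsto(y,V(u))$ is a fixed deterministic post-processing, so it cannot increase total variation distance. Applying it to the pair of distributions in Step~1 sends $(Y(U),U)$ to $(Y,V)$ and sends $(Y,\mathsf I(Y))$ to $(Y,\mathsf{Ext}(1^n,Y))$. This gives the claimed bound $\tau$ for the same infinitely many $n$, and $\mathsf{Ext}$ is QPT because $\mathsf I$ and $\mathsf S$ are.

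\textbf{Main obstacle.} The real content lies in Step~1, namely that the Impagliazzo--Luby construction remains sound against quantum adversaries. Their argument is a black-box reduction. From $f$ one builds a candidate one-way function $g(i,h,u)=(i,h,h(f(u))_{1..i})$ using pairwise-independent hashing. One then shows that any inverter for $g$ with noticeable success probability yields, via polynomially many oracle calls and classical post-processing, a distributional inverter for $f$ with error at most $\tau$. I would go through this reduction and confirm two things. First, it uses the inverter for $g$ only as a black box on classical inputs; it never rewinds the inverter and never needs to copy any internal state. Second, the "infinitely many $n$" quantifier carries over: if no quantum-secure one-way function exists, then $g$ is invertible on infinitely many input lengths, and these translate into infinitely many $n$ for $f$. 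Once both points are checked, replacing the classical PPT inverter by a QPT inverter leaves the reduction and its analysis unchanged, which is the quantum version of the result stated in~\cite{KK07}.
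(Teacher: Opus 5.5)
Your argument is correct and is the standard derivation that the paper relies on; the paper gives no proof of its own and only cites Impagliazzo--Luby and \cite{KK07}. You distributionally invert $u\mapsto Y(u)$ using the quantum Impagliazzo--Luby theorem of \cite{KK07}, recompute $V$ from the recovered coins, and apply data processing to get the TV bound. One slip in your obstacle paragraph: the Impagliazzo--Luby function hashes the preimage and keeps $f(u)$ in the output, i.e.\ $g(i,h,u)=(i,h,f(u),h(u)_{1..i})$. With $h(f(u))$ as you wrote it, an inverter for $g$ would not even return a preimage of $f(u)$. This does not affect your proof, since Step~1 rests on the cited theorem rather than on your re-verification.
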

Next we show that the transcript distribution $S_k$ can be
sampled classically. Recall that
\[
    |\psi_k\rangle=\sum_{x\in\mathcal X}a_k(x)|x\rangle,
    \qquad
    p_k(x)=|a_k(x)|^2,
    \qquad
    |\psi_k^+\rangle=\sum_x |a_k(x)|\,|x\rangle,
\]
where $\mathcal X=\{0,1\}^n$ and $D=|\mathcal X|=2^n$.
The distribution
\[
    S_k=R_{\psi_k,p_k}
\]
is the outcome distribution of the controlled-swap measurement
$\mathcal M$ from Section~\ref{owsg:section}, applied to
$|\psi_k\rangle|\psi_k^+\rangle$. Its outcomes have the form
$(\alpha,\sigma,x,y)$, where
$\alpha\in\{0,\pi/2\}$, $\sigma\in\{-1,1\}$, and
$x,y\in\mathcal X$.

\begin{lemma}
\label{owf:sample-S}
For every computable samplable state family and every $0<\chi<1$, there
is a classical PPT algorithm which, on input $(1^n,k,1/\chi)$, samples a
distribution $\widetilde S_k$ satisfying
\[
    \operatorname{TV}(\widetilde S_k,S_k)\le\chi.
\]
\end{lemma}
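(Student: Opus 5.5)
We sample $S_k$ in the natural order: first the pair $(x,y)$, then the bits $(\alpha,\sigma)$ conditioned on that pair. By Lemma~\ref{owsg:outcome-distribution} with $\psi=\psi_k$ and $q=p_k$, writing $a_k(x)=\sqrt{p_k(x)}e^{i\theta(x)}$, we get
\[
    S_k(\alpha,\sigma,x,y)=\frac18\,p_k(x)p_k(y)\,\bigl|e^{i\theta(x)}+\sigma e^{-i\alpha}e^{i\theta(y)}\bigr|^2 .
\]
Summing over $(\alpha,\sigma)$ gives the marginal $p_k(x)p_k(y)$. This is the identity $d_{xy}$-style factorization already used in the proof of Lemma~\ref{owsg:matched-fidelity}: $S_k(\alpha,\sigma,x,y)=p_k(x)p_k(y)\,Q_{\theta(x)-\theta(y)}(\alpha,\sigma)$. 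So the pair is distributed as $p_k\otimes p_k$. Conditioned on the pair, $\alpha$ is uniform and $\sigma$ is distributed as $\Pr[\sigma]=\tfrac12\bigl(1+\sigma\cos(\theta(x)-\theta(y)+\alpha)\bigr)$.

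The algorithm is as follows. Call $\mathsf{Samp}(1^n,k,1^t)$ twice independently with $t=\lceil 4/\chi\rceil$ to get $x,y$. The pair's law is then $\widetilde p\otimes\widetilde p$, which is within $2/t\le\chi/2$ of $p_k\otimes p_k$ in total variation. Next draw $\alpha$ uniformly. Finally compute $c:=\cos(\theta(x)-\theta(y)+\alpha)$ approximately and output $\sigma=+1$ with probability $(1+\widetilde c)/2$.

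The main obstacle is computing $c$ stably, because the phase of a tiny amplitude is ill-conditioned. I would avoid explicit phases and compute
\[
    c=\frac{\operatorname{Re}\bigl(a_k(x)\overline{a_k(y)}e^{i\alpha}\bigr)}{|a_k(x)||a_k(y)|}
\]
using $\mathsf{Amp}$ with precision $2^{-b}$.

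\begin{itemize}
\item \textbf{Good pairs.} If $|a_k(x)|,|a_k(y)|\ge \lambda:=2^{-b/2}$, then the ratio has error $O(2^{-b}/\lambda)=O(2^{-b/2})$. Choosing $b=O(\log(1/\chi))$ makes this at most $\chi/8$.
\item \textbf{Bad pairs.} If either estimated magnitude is below roughly $\lambda$, output $\sigma$ uniformly.
\end{itemize}

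The bad event has probability at most $2\Pr_{x\sim p_k}[p_k(x)\le 4\lambda^2]$ under the true pair law. Bounding it directly would require a sum over up to $D$ strings with $p_k(x)\le 4\lambda^2$, which could be large, so a fixed threshold does not work naively. Instead I would choose the threshold in terms of the sampled points. Note that $\Pr_{x\sim p}[p(x)\le \mu]\le D\mu$, so I need $\mu\le \chi/(16D)$, i.e.\ $\lambda^2\approx\chi 2^{-n}/64$. This requires $b=n+O(\log(1/\chi))$ bits, which is still polynomial.

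With this choice, the bad event has probability at most $\chi/8$. On good pairs the ratio error is $O(2^{-b}/\lambda)=O(2^{-b+n/2}/\sqrt\chi)$, which is at most $\chi/8$ once $b=n+O(\log(1/\chi))$.

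The total variation errors then add up to at most $\chi/2+\chi/8+\chi/8<\chi$. Both the precision $b$ and the number of calls to $\mathsf{Samp}$ and $\mathsf{Amp}$ are polynomial in $n$ and $1/\chi$, so the sampler is classical PPT.
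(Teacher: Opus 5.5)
Your proposal is correct, and its skeleton matches the paper's proof. In both, you sample $(x,y)$ from two calls to $\mathsf{Samp}$, draw $\alpha$ uniformly, and then draw $\sigma$ from the conditional law $\tfrac12\bigl(1+\sigma\cos(\theta(x)-\theta(y)+\alpha)\bigr)$, computed from approximate amplitudes.

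The difference is how you handle the ill-conditioned phases of small amplitudes.
\begin{itemize}
\item You change the sampler: it outputs a uniform $\sigma$ whenever an estimated magnitude falls below $\lambda$. You then bound the discarded mass with the counting estimate $\Pr_{x\sim p}[p(x)\le\mu]\le D\mu$, which forces $\lambda^2\approx\chi/(64D)$.
\item The paper never special-cases small amplitudes. It uses $v_k(x)=r_k(x)/|r_k(x)|$ everywhere and notes that the pointwise phase error $|u_k(x)-v_k(x)|\le 2d/|a_k(x)|$ is weighted by $p_k(x)=|a_k(x)|^2$. Hence $\mathbb{E}_{x\sim p_k}|u_k(x)-v_k(x)|\le 2d\sum_x|a_k(x)|\le 2d\sqrt D$ by Cauchy--Schwarz, and precision $d=\chi/(32\sqrt D)$ suffices.
\end{itemize}

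The paper's averaging argument is shorter and keeps the sampler a direct plug-in of approximate phases. Your truncation gives a clean pointwise guarantee on good pairs, at the cost of a case split and a modified algorithm.

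Both routes need roughly $n/2+O(\log(1/\chi))$ bits of precision. Your choice $b=n+O(\log(1/\chi))$ is more than necessary, since $2^{-b}/\lambda\le\chi/8$ only requires $2^{-b}=O(\chi^{3/2}2^{-n/2})$. Also, the threshold you end up with is a fixed function of $n$ and $\chi$, not ``chosen in terms of the sampled points'' as your text briefly suggests. This does not affect correctness.
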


\begin{proof}
For $a_k(x)\neq0$, write
\[
    u_k(x):=\frac{a_k(x)}{|a_k(x)|},
\]
and set $u_k(x)=1$ when $a_k(x)=0$. Thus
$a_k(x)=\sqrt{p_k(x)}\,u_k(x)$ whenever $p_k(x)>0$.

By Lemma~\ref{owsg:outcome-distribution}, an exact sample from $S_k$
can be generated as follows. Sample $x,y\sim p_k$ independently, choose
$\alpha\in\{0,\pi/2\}$ uniformly, and then sample
$\sigma\in\{-1,1\}$ according to
\begin{equation}
    \Pr[\sigma\mid x,y,\alpha]
    =
    \frac{
        1+\sigma\operatorname{Re}
        \!\left(e^{i\alpha}u_k(x)\overline{u_k(y)}\right)
    }{2}.
    \label{owf:classical-S}
\end{equation}

We approximate each of these steps classically. First use the sampler
from Definition~\ref{def:computable-samplable-states} to sample each of
$x$ and $y$ from a distribution within total variation distance
$\chi/8$ of $p_k$. This changes their joint distribution by at most
$\chi/4$.
Next use the algorithm that computes amplitudes to obtain, for every
required basis string $x$, an approximation $r_k(x)$ satisfying
\[
    |r_k(x)-a_k(x)|\le d,
    \qquad
    d\le\frac{\chi}{32\sqrt D}.
\]
Define
\[
    v_k(x):=
    \begin{cases}
        r_k(x)/|r_k(x)|,&r_k(x)\ne0,\\
        1,&r_k(x)=0.
    \end{cases}
\]
As in the proof of Lemma~\ref{owsg:positive},
\[
    |a_k(x)-|a_k(x)|v_k(x)|
    \le2|a_k(x)-r_k(x)|.
\]
Hence
\[
    \mathbb E_{x\sim p_k}|u_k(x)-v_k(x)|
    \le2\sqrt D\,d
    \le\frac{\chi}{16}.
\]
For fixed $x,y,\alpha$, replacing $u_k$ by $v_k$ in
\eqref{owf:classical-S} changes the conditional distribution of
$\sigma$ by at most
\[
    \frac12\left(
        |u_k(x)-v_k(x)|
        +
        |u_k(y)-v_k(y)|
    \right).
\]
Averaging over $x,y\sim p_k$ therefore contributes at most $\chi/16$
to the total variation distance.

Finally, approximate the probability in
\eqref{owf:classical-S} sufficiently well to sample $\sigma$ within
total variation distance $\chi/8$. The total error is at most
\[
    \frac{\chi}{4}
    +\frac{\chi}{16}
    +\frac{\chi}{8}
    =
    \frac{7\chi}{16}
    <\chi.
\]
Since $\log D=n$, the required amplitude precision is
$O(n+\log(1/\chi))$ bits. Thus the entire procedure runs in polynomial
time.
\end{proof}
We can now finally prove the main theorem of the section. 

\computablesamplableowf*
\begin{proof}[Proof of Theorem~\ref{thm:computable-samplable-owf}]
Suppose, towards a contradiction, that classical one-way functions secure
against QPT inversion do not exist.

Let $E_k$ be the verifier acceptance POVM element.  Correctness of
the OWSG gives
\[
    \mathbb E_{K\sim\pi_n}
        \operatorname{Tr}(E_K\rho_K)
    \ge1-\nu(n),
    \qquad
    \rho_k:=|\psi_k\rangle\langle\psi_k|,
\]
for negligible $\nu$.
Fix
\[
    \delta_0:=\frac1{64},
    \qquad
    \varepsilon_0:=\frac1{64}.
\]
Use Corollary~\ref{owf:posterior-parameters} with
$(\varepsilon_0,\delta_0)$ to obtain the sample counts $s_1,s_2$ and
reference-preparation accuracy $\zeta$.  Set
\[
    \xi:=\frac{\delta_0}{64},
    \qquad
    \tau:=\frac{\delta_0}{64}.
\]
We define one classical PPT sampler containing the two transcript
experiments.  Sample
\[
    B\leftarrow\{1,2\}
\]
uniformly and independently generate
\[
    K=G_n(U),
    \qquad
    U\leftarrow\{0,1\}^{r(n)}.
\]
If $B=1$, classically generate a transcript $\widehat X$ consisting of
$s_1$ independent approximate samples from $p_K$, using
$\mathsf{Samp}$ with error at most $\xi/s_1$ for each sample.  If $B=2$,
classically generate a transcript $\widehat Z$ consisting of $s_2$
independent approximate samples from $S_K$, using
Lemma~\ref{owf:sample-S} with error at most $\xi/s_2$ for each sample.
Write
\[
    \widehat T_1:=\widehat X,
    \qquad
    \widehat T_2:=\widehat Z.
\]
The sampler outputs
\[
    \bigl((B,\widehat T_B),K\bigr).
\]
Let $T_1=X$ and $T_2=Z$ denote the ideal transcripts
\[
    X\mid K=k\sim p_k^{\otimes s_1},
    \qquad
    Z\mid K=k\sim S_k^{\otimes s_2}.
\]
By a hybrid argument over the samples,
\begin{equation}
    \operatorname{TV}\!\left(
        \mathcal L(K,\widehat T_b),
        \mathcal L(K,T_b)
    \right)
    \le\xi
    \qquad (b\in\{1,2\}).
    \label{owf:transcript-sampling-error}
\end{equation}
Apply Lemma~\ref{owf:universal-extrapolation} to the combined sampler.
For infinitely many security parameters there is a QPT algorithm
$\mathsf{Ext}$ satisfying
\begin{equation}
    \operatorname{TV}\!\left(
        \mathcal L\bigl(
            B,\widehat T_B,
            \mathsf{Ext}(1^n,(B,\widehat T_B))
        \bigr),
        \mathcal L(B,\widehat T_B,K)
    \right)
    \le\tau.
    \label{owf:combined-extrapolation}
\end{equation}
Since $B$ is uniform and appears in both distributions, conditioning on
either value $B=b$ gives error at most $2\tau$ for that branch. For $b\in\{1,2\}$, define
\[
    \mathsf{Post}_b(t):=\mathsf{Ext}(1^n,(b,t)).
\]
Conditioning~\eqref{owf:combined-extrapolation} on $B=b$ gives
\[
    \operatorname{TV}\!\left(
        \mathcal L(\widehat T_b,\mathsf{Post}_b(\widehat T_b)),
        \mathcal L(\widehat T_b,K)
    \right)
    \le 2\tau.
\]
By~\eqref{owf:transcript-sampling-error}, replacing $\widehat T_b$ by
$T_b$ changes each of the two distributions above by at most $\xi$.
Hence
\[
    \operatorname{TV}\!\left(
        \mathcal L(T_b,\mathsf{Post}_b(T_b)),
        \mathcal L(T_b,K)
    \right)
    \le 2\tau+2\xi.
\]
Thus $\mathsf{Post}_b$ has posterior-sampling error at most
$2\tau+2\xi$ for the ideal experiment. The two procedures therefore satisfy the requirements of the posterior learning
algorithm.  By Proposition~\ref{owf:posterior-learning} and the parameter
choices of Corollary~\ref{owf:posterior-parameters}, its output $K'$
satisfies
\[
    \Pr\!\left[
        |\langle\psi_{K'}|\psi_K\rangle|^2
        <1-\varepsilon_0
    \right]
    \le\delta_0,
    \qquad
    \operatorname{TV}(\mathcal L(K'),\pi_n)
    \le\delta_0.
\]
Lemma~\ref{owf:average-correctness} therefore gives verification success
at least
\[
    1-\nu(n)-2\delta_0-\sqrt{\varepsilon_0}
    =
    \frac{27}{32}-\nu(n).
\]
This holds for infinitely many security parameters, contradicting the
one-wayness of the OWSG.
Therefore classical one-way functions secure against QPT inversion
exist.
\end{proof}

\end{document}